\documentclass[twoside]{IEEEtran}


\usepackage{amsmath,amsfonts,amssymb,mathrsfs}
\usepackage{algorithmicx}
\usepackage{algorithm}
\usepackage{array}
\usepackage[caption=false,font=normalsize,labelfont=sf,textfont=sf]{subfig}
\usepackage{textcomp}
\usepackage{stfloats}
\usepackage{url}
\usepackage{verbatim}
\usepackage{graphicx}
\usepackage{balance}
\usepackage{cite}

\usepackage{ifxetex}
\ifxetex
\usepackage{fontspec}
\else
\usepackage[T1]{fontenc}
\fi

\newenvironment{proof}{\begin{IEEEproof}}{\end{IEEEproof}} 
\newtheorem{lemma}{Lemma}
\newtheorem{corollary}[lemma]{Corollary}
\newtheorem{theorem}[lemma]{Theorem}
\newtheorem{definition}{Definition}
\newtheorem{example}{Example}

\usepackage{algpseudocode} 
\usepackage[inline]{enumitem} 
\usepackage{physics} 
\usepackage{xfp}

\usepackage{tikz}
\usepackage[compat=0.6]{yquant}
\usetikzlibrary{3d, patterns}

\yquantset{
    operators/every control box/.append style={shape=yquant-circle, draw}
}
\makeatletter
\yquant@langhelper@declare@command@uncontrolled%
{controlbox}%
\yquant@register@get@allowmultitrue%
{%
    \expandafter\yquant@prepare@injection%
    \expandafter{\yquant@lang@attr@value}%
    {/yquant/operators/every control box}%
}
\makeatother

\mathchardef\mathhyphen="2D 

\usepackage{orcidlink} 
\hypersetup{hidelinks}


\title{Almost Optimal Synthesis of Reversible Function in Qudit Model}
\author{
    Buji~Xu\textsuperscript{\orcidlink{0009-0008-0031-9106}},
    Junhong~Nie\textsuperscript{\orcidlink{0000-0001-9427-9329}},
    and Xiaoming~Sun\textsuperscript{\orcidlink{0000-0002-0281-1670}}

    \thanks{This work was supported in part by the National Natural Science Foundation of China Grants No. 62325210, 92465202, and the Strategic Priority Research Program of Chinese Academy of Sciences Grant No. XDB28000000.

        Buji~Xu, Junhong~Nie and Xiaoming~Sun are with State Key Lab of Processors, Institute of Computing Technology, Chinese Academy of Sciences, Beijing 100190, China, and also with School of Computer Science and Technology, University of Chinese Academy of Sciences, Beijing 100049, China (email: xubuji20@mails.ucas.ac.cn; niejunhong19z@ict.ac.cn; sunxiaoming@ict.ac.cn).
}}

\begin{document}

    \maketitle

    \begin{abstract}
        Quantum oracles are widely adopted in problems, like query oracle in Grover's algorithm, cipher in quantum cryptanalytic and data encoder in quantum machine learning.
        Notably, the bit-flip oracle, capable of flipping the state based on a given classical function, emerges as a fundamental component in the design and construction of quantum algorithms.
        Devising methods to optimally implement the bit-flip oracle essentially translates to the efficient synthesis of reversible functions.
        Prior research has primarily focused on the qubit model, leaving the higher dimensional systems, i.e. qudit model, largely unexplored.
        By allowing more than two computational bases, qudit model can fully utilize the multi-level nature of the underlying physical mechanism.
        We propose a method to synthesize even permutations in $A_{d^{n}}$ using $\Theta(d)$ $(n - 1)$-qudit sub-circuits, which achieve asymptotic optimality in the count of sub-circuits.
        Moreover, we introduce a technique for synthesizing reversible functions employing $O\left( n d^{n} \right)$ two-qudit gates and only a single ancilla.
        This is asymptotically tight in terms of $d$ and asymptotically almost tight in terms of $n$.
    \end{abstract}

    \begin{IEEEkeywords}
        Circuit synthesis, quantum circuit, reversible computing
    \end{IEEEkeywords}

    \section{Introduction} \label{section:intro}

    \IEEEPARstart{R}{eversible} computing plays an important role in fields including cryptography~\cite{shiBitPermutationInstructions2000}, signal processing~\cite{egnerDecomposingPermutationConjugated1997}.
    In reversible computing, a fundamental problem is how to decompose complex reversible functions into simpler ones.
    Shannon~\cite{shannonSynthesisTwoterminalSwitching1949} introduced the concept of circuit complexity and proved that almost all boolean function on $n$ variables require circuits of size $\Theta \left( \frac{2^n}{n} \right)$.
    In the paper, we will study an extended object --- reversible function on $n$ $d$-ary variables --- in the context of quantum computing.

    Quantum computing is a novel computational model that harnesses the principles of quantum mechanics. It has demonstrated potential advantages in solving problems such as integer factorization~\cite{shorAlgorithmsQuantumComputation1994} and searching~\cite{groverFastQuantumMechanical1996,longGroverAlgorithmZero2001,heQuantumSearchPrior2024}.
    The superiority of quantum algorithms is often highlighted within the framework of query complexity.
    For example, Simon's problem~\cite{simonPowerQuantumComputation1994} exhibits an exponential separation between quantum and classical query complexities, with a polynomial number of queries in the quantum setting compared to the exponential number needed in classical setting.
    To run these algorithms on real hardware, we need to give an implementation of the oracles in quantum circuit, which is essential to quantum information processing as all possible transformations between quantum channels can be implemented by quantum circuits with open slots~\cite{chiribellaQuantumCircuitArchitecture2008}.
    Our focus lies on synthesizing the reversible counterparts of classical functions.
    In some literature, this may be called as permutation oracle, $O_{\pi} \ket{x} = \ket{\pi(x)}$.
    For example, block ciphers in symmetric-key schemes, like AES, are essentially complex permutations.
    Constructing resource-efficient quantum circuits for these primitives has become an active research area, as NIST used the circuit complexity of AES and SHA-3 as baseline to estimate security levels in its post-quantum cryptography standardization effort.
    S-box is the non-linear component of AES and the primary source of difficulty in implementing AES on quantum computers.
    When the number of qubits are restricted, it is unlikely to utilize the algebraic structure of S-box and hence researchers may turn to general methods.
    For instance, a recent study by Huang~et~al.~\cite{huangConstructingQuantumImplementations2025} explored how to optimize the design of AES S-box with minimal number of qubits using SAT solvers.
    For another example, the hidden subgroup problem of symmetric group is especially important in the theory of quantum computing, as an efficient quantum algorithm for it implies an efficient quantum algorithm for graph isomorphism~\cite{ettingerQuantumObservableGraph1999}.
    Though it remains an open challenge, an efficient implementation of permutation oracle might be helpful.
    We will delve into the investigation of the synthesis of reversible (classical) functions.

    Previous research in quantum computing has predominantly centered around the qubit model, while the qudit~\cite{wangQuditsHighdimensionalQuantum2020} model remains largely unexplored.
    Each qudit has $d$ basic states instead of 2, enabling the utilization of the rich multilevel structure of the underlying physical hardware~\cite{ringbauerUniversalQuditQuantum2022}.
    Pioneer implementations of qudit systems~\cite{chiProgrammableQuditbasedQuantum2022,liuPerformingmathrmSUdOperations2023,ringbauerUniversalQuditQuantum2022} and progress of error mitigation~\cite{omanakuttanFaulttolerantQuantumComputation2024,gossExtendingComputationalReach2024} demonstrate promising possibilities for practical use in the future.
    Theoretical investigations into qudit systems have also gained attention.
    For example, studies like existence of distinguishable bases in three-dimensional subspaces of qutrit-qudit system under one-way LPCC~\cite{songExistenceDistinguishableBases2024} provide valuable insights.
    Moreover, the development of techniques for optimizing qudit circuits is rapidly advancing, which finds rich applications in quantum algorithm design.
    For example, Baker~et~al.~\cite{bakerImprovedQuantumCircuits2020} suggested an exponential improvement over qubit in terms of circuit depth is possible.

    We study the synthesis of reversible function in two aspects.
    One is to decompose reversible function on $n$ variables to that on $(n - 1)$ variables.
    This corresponds to building quantum circuit using smaller sub-circuits.
    The other one is to decompose reversible function on $n$ variables to that on $2$ variables.
    This corresponds to building quantum circuit using $2$-qudit gates, which is universal as pointed out by Brylinskis~\cite{brylinskiUniversalQuantumGates2002}.

    For the sub-circuits count, Jiang~et~al.~\cite{jiangStructuredDecompositionReversible2020} gave a finite alternation result for reversible boolean circuits.
    They proved that even permutation on $2^n$ elements can be decomposed to $O(1)$ permutation on $2^{n - 1}$ elements.
    We will prove that $\Theta(d)$ quantum sub-circuits of $(n - 1)$ qudits can implement an even permutation on $d^n$ elements.

    For the circuit size, Wu and Li~\cite{wuAsymptoticallyOptimalSynthesis2024} gave a $\Theta \left( n 2^{n} \frac{1}{\log n} \right)$ bound for qubit model.
    It uses optimized Gaussian elimination to speed up the procedure~\cite{patelOptimalSynthesisLinear2008}.
    The extension to qudit model requires $d$ to be prime as modular inverse is used in Gaussian elimination.
    Besides, a $O(d^{2})$ overhead occurs as the row operation needs $O(d^{2})$ gates, which is not counted when $d = 2$.
    For general $d$, a translation from qubit circuits to qudit circuits can be performed, which needs $O(d^2)$ overhead. (For example, $CNOT$ gate in qubit circuits may be translated into $(\ket{0} \mathhyphen X_{02}) (\ket{0} \mathhyphen X_{13}) (\ket{2} \mathhyphen X_{02}) (\ket{2} \mathhyphen X_{13})$.)
    This suggests a $O \left( d^{2} n' 2^{n'} \frac{1}{\log n'} \right) = O \left( n d^{n} \frac{d^2 \log d}{\log \log d + \log n} \right)$ bound for qudit model, where $n' = n \log d$.

    As for qudit model, Zi~et~al.~\cite{ziOptimalSynthesisMulticontrolled2023} gave a $O(n d^{n + 3})$ bound for qudit model with assitance of $1$ ancilla qudit.
    If $O(n)$ ancilla qudits are allowed, Roy~et~al.~\cite{royQuditZHcalculusGeneralised2023} proved a $O \left( n d^{n} \right)$ upper bound when $d$ is odd.
    We will propose a new synthesis algorithm, which achieves $O \left( n d^{n} \right)$ quantum circuit size for general $d$ and uses at most $1$ ancilla qudit.
    The circuit size is optimal up to a constant factor when $d = \Omega(n)$ and is optimal up to a logarithm factor when $d = o(n)$.

    The structure of the rest part is as follows.
    In section~\ref{section:prelim}, we will introduce symbols used and some background knowledge.
    In section~\ref{section:result01}, we will show how to synthesize reversible function using $(n - 1)$-qudit sub-circuits.
    In section~\ref{section:result02}, we will show how to synthesize reversible function using two-qudit gates.
    In section~\ref{section:concl}, we will conclude this paper.

    \section{Preliminary} \label{section:prelim}

    Qudit is a computational unit analogous to the qubit, whose state
    \begin{equation*}
        \ket{\psi} = \sum_{x = 0}^{d - 1} \alpha_{x} \ket{x}, \quad \sum_{x = 0}^{d - 1} |\alpha_{x}|^2 = 1, \alpha_{x} \in \mathbb{C}
    \end{equation*}
    is a unit vector in the $d$-dimensional Hilbert space spanned by orthonormal basis $\{ \ket{0}, \dots, \ket{d - 1} \}$.
    We will use $d$ to represent the dimension of qudit hereafter.
    For $v = \sum_{k = 1}^{m} v_{k} d^{k - 1}$, $\ket{\mathbf{v}}$ (sometimes $\ket{v}$) denotes $\ket{v_{m}} \otimes \dots \otimes \ket{v_{2}} \otimes \ket{v_{1}} = \ket{v_{m} \dots v_{2} v_{1}}$.
    $\mathbf{v}_{i:j}$ will be used to denote the subvector $(v_{i}, v_{i + 1}, \dots, v_{j})$.

    The following gates are called $X$ gate
    \begin{equation*}
        X_{x, y} = \ket{x}\bra{y} + \ket{y}\bra{x} + \sum_{z \neq x, y} \ket{z}\bra{z}
    \end{equation*}
    and the following gates are called $X^{(m)}$ gate (which acts on multiple qudits)
    \begin{equation*}
        X_{\mathbf{x}, \mathbf{y}} = \ket{\mathbf{x}}\bra{\mathbf{y}} + \ket{\mathbf{y}}\bra{\mathbf{x}} + \sum_{\mathbf{z} \neq \mathbf{x}, \mathbf{y}} \ket{\mathbf{z}}\bra{\mathbf{z}}
    \end{equation*}
    where $|\mathbf{x}| = m$.

    The following gates are called controlled unitary
    \begin{equation*}
        \ket{x} \mathhyphen U = \ket{x} \otimes U + \sum_{y \neq x} \ket{y} \otimes I ,
    \end{equation*}
    which applies unitary transformation $U$ to target qudit(s) if the control qudit is in state $\ket{x}$. If $U = \ket{x'} \mathhyphen U'$ is a controlled unitary, $\ket{x} \mathhyphen U = \ket{x x'} \mathhyphen U'$ is called a controlled controlled-unitary. And $\ket{\mathbf{x}} \mathhyphen U$ is called $C^{m}U$ gate where $|\mathbf{x}| = m$.

    If $f : \{0, 1, \cdots, d - 1\}^n \to \{0, 1, \cdots, d - 1\}^n$ is bijective, it is called a $n$-dit reversible function.
    Such function can be viewed as a permutation in the symmetric group $S_{d^n}$.
    Permutations are commonly expressed in cyclic notation.
    \begin{definition}
        A $k$-cycle $(a_{1} \ a_{2} \ \dots \ a_{k})$ maps $a_{i}$ to $a_{i + 1}$ ($1 \leq i < k$) and maps $a_{k}$ to $a_{1}$.
    \end{definition}
    The composition of permutation adopts the rule $(f \circ g)(x) = f(g(x))$.
    For example, given $f = (2 \ 3)$, $g = (1 \ 2 \ 3)$, $(fg)(3) = f(g(3)) = f(1) = 1$ while $(gf)(3) = g(f(3)) = g(2) = 3$.
    A permutation is even if it can be decomposed into even number of 2-cycles.
    For example, $(1 \ 2 \ 3)$ is even as $(1 \ 2 \ 3) = (1 \ 3)(1 \ 2)$.
    The alternating group $A_{m}$ is the subgroup of $S_{m}$ consisting of all even permutations.

    \section{Decomposition into \texorpdfstring{$(n - 1)$}{(n-1)}-qudit blocks} \label{section:result01}

    In this section, we will explore the decomposition of reversible functions in $A_{d^n}$ using those in $S_{d^{n-1}}$.
    This is essentially realizing reversible functions in $A_{d^n}$ using $(n-1)$-qudit reversible quantum circuits.
    $A_{d^n}$ is chosen because it is impossible to implement $\pi \in S_{d^{n}} \setminus A_{d^{n}}$ using $S_{d^{n - 1}}$ when $d$ is even.
    We will first concentrate on the case of $n = 3$.
    Then we will extend the techniques to handle arbitrary value of $n$.
    Finally, we will demonstrate that our sub-circuit count is asymptotically optimal.

    \subsection{Decomposition for the Case \texorpdfstring{$n = 3$}{n=3}} \label{section:result01:cube}

    When $n = 3$, we can visualize the state space of $3$-qudit system as a $d \times d \times d$ cube. A $2$-qudit reversible quantum circuit can be restated as
    \begin{itemize}
        \item Select a plane $P$ and an arbitrary permutation $\pi$ acting on $P$, i.e. $\pi \in S_{d \times d}$.
        Apply $\pi$ to all planes parallel to $P$, including $P$ itself.
    \end{itemize}
    This is called a basic operation as it corresponds to one $(n-1)$-qudit reversible quantum circuit.

    \begin{example}
        In \figurename~\ref{fig:3d_visual}, we choose the xy-plane as $P$ and the counterclockwise rotation by $45$\textdegree~around the center as $\pi$. We need to apply $\pi$ for each plane in the $z$-direction. As a result, \figurename~\ref{fig:3d_visual:a} is transformed into \figurename~\ref{fig:3d_visual:b}.
    \end{example}

    \begin{figure*}
        \centering
        \subfloat[\label{fig:3d_visual:a}]{
            \begin{tikzpicture}
                \begin{scope}[canvas is xy plane at z=3]
                    \draw [step=1,help lines] (0,0) grid (3,3);
                    \pattern [pattern=horizontal lines] (1,0) rectangle (2,3);
                    \pattern [pattern=dots] (2,0) rectangle (3,3);
                \end{scope}
                \begin{scope}[canvas is yz plane at x=3]
                    \draw [step=1,help lines] (0,0) grid (3,3);
                    \pattern [pattern=dots] (0,2) rectangle (3,3);
                    \pattern [pattern=north east lines] (0,1) rectangle (3,2);
                \end{scope}
                \begin{scope}[canvas is xz plane at y=3]
                    \draw [step=1,help lines] (0,0) grid (3,3);
                    \draw[thick] (1.5, 1.5) ellipse (1.25 and 1);
                    \draw[->, thick] (2.06, 2.43) -- (2.07, 2.42) node[above] {$\pi$};
                    \node[circle, fill, inner sep=1pt] at (1.5, 2.5) {};
                    \node[circle, fill, inner sep=1pt] at (2.375, 2.25) {};
                \end{scope}
                \coordinate (S) at (-2.5,0,0);
                \node (Z) at (-2.5,0,1.4) {$x$};
                \node (X) at (-1.5,0,0) {$y$};
                \node (Y) at (-2.5,1,0) {$z$};
                \draw[->] (S) -- (X);
                \draw[->] (S) -- (Y);
                \draw[->] (S) -- (Z);
            \end{tikzpicture}
        }
        \subfloat[\label{fig:3d_visual:b}]{
            \begin{tikzpicture}
                \begin{scope}[canvas is xy plane at z=3]
                    \draw [step=1,help lines] (0,0) grid (3,3);
                    \pattern [pattern=horizontal lines] (2,0) rectangle (3,3);
                \end{scope}
                \begin{scope}[canvas is yz plane at x=3]
                    \draw [step=1,help lines] (0,0) grid (3,3);
                    \pattern [pattern=horizontal lines] (0,2) rectangle (3,3);
                    \pattern [pattern=dots] (0,1) rectangle (3,2);
                    \pattern [pattern=north east lines] (0,0) rectangle (3,1);
                \end{scope}
                \begin{scope}[canvas is xz plane at y=3]
                    \draw [step=1,help lines] (0,0) grid (3,3);
                \end{scope}
            \end{tikzpicture}
        }
        \caption{Visualization of the state space of $3$-qudit system}
        \label{fig:3d_visual}
    \end{figure*}

    \subsubsection{Technical overview}

    \begin{figure}
        \centering
        \begin{tikzpicture}
            \node (lem2) at (0,0) {Lemma~\ref{thm:lem01weak}};
            \node (thm1) at (0,-1.2) {Theorem~\ref{thm:thm01}};
            \node[style={align=center}] (lem3) at (2,1) {Lemma~\ref{thm:lem:op_[f,g]} \\ ($\pi \sigma \pi^{-1} \sigma^{-1}$)};
            \node[style={align=center}] (lem4) at (3,-0.2) {Lemma~\ref{thm:lem:[f,g]} \\ (even permutation for edge)};
            \node (lem5) at (4,1) {Lemma~\ref{thm:lem:decomp_commut_cycle}};
            \node[style={align=center}] (lem6) at (6.5,0) {Lemma~\ref{thm:lem:edges_on_plane} \\ ($\sigma_{i}^{-1} \sigma_{\pi(i)}$)};
            \node[style={align=center}] (cor7) at (3,-1.5) {Corollary~\ref{thm:cor:op1_plane} \\ (type-1 op for plane)};
            \node[style={align=center}] (cor8) at (6.5,-1.5) {Corollary~\ref{thm:cor:op2_plane} \\ (type-2 op for plane)};
            \node[style={align=center}] (cor9) at (3,-3) {Corollary~\ref{thm:cor:plane_even} \\ (even permutation for plane)};
            \node[style={align=center}] (cor10) at (3,-4.5) {Theorem~\ref{thm:cor:cube_even} \\ (even permutation for cube)};

            \draw[->] (lem2) -- (thm1);
            \draw[->] (lem3) -- (lem4);
            \draw[->] (lem5) -- (lem4);
            \draw[->] (lem4) -- (cor7);
            \draw[->] (lem6) -- (cor7);
            \draw[->] (lem6) -- (cor8);
            \draw[->] (thm1) -- (cor9);
            \draw[->] (cor7) -- (cor9);
            \draw[->] (cor8) -- (cor9);
            \draw[->] (thm1) to[bend right] (cor10);
            \draw[->] (cor9) -- node[right] {$O(d)$} (cor10);
        \end{tikzpicture}
        \caption{Overview for Section \ref{section:result01:cube}}
        \label{fig:result01overview}
    \end{figure}

    As shown by \figurename~\ref{fig:result01overview}, the main technical theorem of this subsection is Theorem~\ref{thm:thm01}, which is inspired by the following theorem given by Sun~et~al.~\cite{sunGeneralizedShuffleexchangeProblem2022}.

    \begin{theorem}[\cite{sunGeneralizedShuffleexchangeProblem2022}]
        \label{thm:thm_rcr}
        A permutation $\pi \in S_{s \times t}$ is of type-R (or \textit{type-C}) if it does not change the row (or \textit{column}) of any element.
        An arbitrary permutation $\pi \in S_{s \times t}$ can be decomposed into $3$ permutations $\pi_{3} \pi_{2} \pi_{1}$ where $\pi_{1}$ and $\pi_{3}$ are of type-R while $\pi_{2}$ is of type-C.
    \end{theorem}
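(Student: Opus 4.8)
The plan is to reduce the decomposition to a bipartite edge‑colouring problem. Index the $st$ cells by pairs $(i,j)$ with $i \in \{1,\dots,s\}$ the row and $j \in \{1,\dots,t\}$ the column, and write $\pi(i,j) = (r(i,j), c(i,j))$. Build a bipartite multigraph $G$ with one part $\{a_1,\dots,a_s\}$ indexed by source rows and the other part $\{b_1,\dots,b_s\}$ indexed by target rows; for every cell $(i,j)$ add an edge $e_{i,j}$ joining $a_i$ to $b_{r(i,j)}$. Each $a_i$ has degree $t$ (the $t$ cells of source row $i$) and each $b_{i'}$ has degree $t$ (the $t$ cells that $\pi$ maps into target row $i'$, using that $\pi$ is a bijection), so $G$ is $t$‑regular. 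By König's edge‑colouring theorem — equivalently, by repeatedly extracting perfect matchings from a regular bipartite graph via Hall's marriage theorem — $G$ admits a proper edge‑colouring with $t$ colours, which I read as a map $\phi$ assigning to each cell $(i,j)$ an ``intermediate column'' $\phi(i,j) \in \{1,\dots,t\}$, with each colour class a perfect matching of $G$.

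Next I would read off the three permutations from $\phi$. Since the edges at $a_i$ receive distinct colours, $j \mapsto \phi(i,j)$ is a bijection of $\{1,\dots,t\}$ for each fixed $i$; let $\pi_1$ be the type‑R permutation that, within source row $i$, moves the cell at column $j$ to column $\phi(i,j)$. After $\pi_1$, the cells lying in intermediate column $k$ are exactly those $(i,j)$ with $\phi(i,j)=k$, i.e.\ the colour‑$k$ matching; being a perfect matching of $G$, these are $s$ cells coming from $s$ distinct source rows (consistent with a full column) and destined for $s$ distinct target rows. Hence I may let $\pi_2$ be the type‑C permutation that, within each column $k$, sends the cell destined for target row $i'$ up or down to row $i'$. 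Finally, after $\pi_2$, target row $i'$ holds the $t$ cells with $r(i,j)=i'$; since the edges at $b_{i'}$ have distinct colours these occupy $t$ distinct columns, so I may let $\pi_3$ be the type‑R permutation that, within target row $i'$, moves each such cell from its current column $\phi(i,j)$ to its final column $c(i,j)$ — well defined because $\pi$ is injective and so these $t$ cells have distinct final columns.

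Composing, the cell $(i,j)$ travels $(i,j) \to (i,\phi(i,j)) \to (r(i,j),\phi(i,j)) \to (r(i,j),c(i,j)) = \pi(i,j)$, so $\pi_3 \pi_2 \pi_1 = \pi$ with $\pi_1,\pi_3$ of type‑R and $\pi_2$ of type‑C, as required. The only substantive ingredient is the existence of the proper $t$‑edge‑colouring of the $t$‑regular bipartite multigraph $G$, and I expect this — rather than the routing bookkeeping — to be the crux; it is exactly where Hall's theorem / König's line‑colouring theorem enters. One should note that $G$ is genuinely a multigraph, since two cells in the same source row may be mapped by $\pi$ into the same target row, producing parallel edges; but König's theorem holds verbatim for multigraphs, so this is harmless.
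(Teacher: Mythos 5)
Theorem~\ref{thm:thm_rcr} is quoted from \cite{sunGeneralizedShuffleexchangeProblem2022} and the paper itself gives no proof, so there is nothing internal to compare against; your argument is correct and is the standard one for this statement. Reducing to a proper $t$-edge-colouring of the $t$-regular source-row/target-row bipartite multigraph via K\"onig's theorem (equivalently, repeated extraction of perfect matchings by Hall's theorem) and using the colour as the intermediate column is exactly the classical Slepian--Duguid/Clos rearrangeability argument, and all the routing bookkeeping ($\pi_1,\pi_3$ type-R, $\pi_2$ type-C, composed in the order $\pi_3\pi_2\pi_1$ under the paper's convention) checks out.
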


    We prove that it can be extended by restricting permutations to be even.

    \begin{theorem}[informal version of Theorem~\ref{thm:thm01}]
        An arbitrary permutation $\pi \in A_{s \times t}$ can be decomposed into $O(1)$ permutations where each permutation is an even permutation of type-R or type-C.
    \end{theorem}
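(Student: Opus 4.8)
The plan is to take the decomposition supplied by Theorem~\ref{thm:thm_rcr} and repair parities at only $O(1)$ extra cost. Fix once and for all a transposition $\tau$ of two cells lying in a common row; then $\tau$ is an odd type-R permutation with $\tau^{-1}=\tau$. I will work in the nondegenerate regime $\max(s,t)\ge 3$ (with $\min(s,t)\ge 2$): the cases $s=1$ or $t=1$ are trivial, since then type-C, resp.\ type-R, permutations already exhaust $S_{s\times t}$, and the case $s=t=2$ is genuinely exceptional — there the even type-R and even type-C permutations generate only a Klein four-group — so a mild size hypothesis is unavoidable.

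Given $\pi\in A_{s\times t}$, Theorem~\ref{thm:thm_rcr} yields $\pi=\pi_3\pi_2\pi_1$ with $\pi_1,\pi_3$ of type-R and $\pi_2$ of type-C. Since $\operatorname{sgn}\pi=1$ we get $\operatorname{sgn}\pi_1\cdot\operatorname{sgn}\pi_2\cdot\operatorname{sgn}\pi_3=1$, so the number of odd factors among the three is even, i.e.\ $0$ or $2$. If all are even we are done with three factors. Otherwise set $a:=\tau$ if $\pi_3$ is odd and $a:=\operatorname{id}$ otherwise, and likewise $b:=\tau$ if $\pi_1$ is odd and $b:=\operatorname{id}$ otherwise; using $a^2=b^2=\operatorname{id}$ rewrite $\pi=(\pi_3 a)\,(a\pi_2 b)\,(b\pi_1)$. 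By construction $\operatorname{sgn}a=\operatorname{sgn}\pi_3$ and $\operatorname{sgn}b=\operatorname{sgn}\pi_1$, so $\pi_3 a$ and $b\pi_1$ are \emph{even} type-R permutations and, by the parity identity above, $\operatorname{sgn}(a\pi_2 b)=1$ as well. Hence everything reduces to showing that the middle factor $a\pi_2 b$ — which is one of $\pi_2$, $\tau\pi_2$, $\pi_2\tau$, $\tau\pi_2\tau$, with $\pi_2$ even in the first and last cases and odd in the middle two — is a product of $O(1)$ even type-R and even type-C permutations.

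For this I would isolate one auxiliary lemma: \emph{every permutation that is a product of at most two transpositions} (a $3$-cycle, a product of two disjoint transpositions, or the identity) \emph{is a product of $O(1)$ even type-R and even type-C permutations}. Granting it, the subcases are immediate. If $a\pi_2 b=\pi_2$ it already is an even type-C permutation. If $a\pi_2 b=\tau\pi_2$ with $\pi_2$ odd, peel a within-column transposition $\tau'$ off the left, $\pi_2=\tau'\hat\pi_2$ with $\hat\pi_2$ an even type-C permutation, so $\tau\pi_2=(\tau\tau')\hat\pi_2$ where $\tau\tau'$ is a product of $\le 2$ transpositions; the case $\pi_2\tau$ is symmetric (peel $\tau'$ off the right). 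Finally, if $a\pi_2 b=\tau\pi_2\tau$ with $\pi_2$ even, then $\tau\pi_2\tau=\bigl(\tau\cdot\pi_2\tau\pi_2^{-1}\bigr)\pi_2$, and $\pi_2\tau\pi_2^{-1}$ is a transposition, so the first factor is again a product of $\le 2$ transpositions. In every branch $\pi$ becomes a product of a constant number of even type-R and even type-C permutations, as claimed (and the constant can be tracked explicitly).

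I expect the auxiliary lemma to be the main obstacle. A product of two disjoint transpositions splits as a product of two $3$-cycles via $(x\,y)(z\,w)=(x\,y\,z)(y\,z\,w)$, so only $3$-cycles matter. A $3$-cycle all of whose cells lie in one row (resp.\ one column) is already an even type-R (resp.\ type-C) permutation once $t\ge 3$ (resp.\ $s\ge 3$); a general $3$-cycle should be produced by conjugating such a ``lined-up'' $3$-cycle by a bounded-length product of even type-R and even type-C permutations, the key point being that three prescribed cells can be moved onto a common line using only a constant number of such generators — here the useful freedom is that a product of two disjoint within-column (or within-row) transpositions is itself an even type-C (resp.\ type-R) permutation. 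Making this conjugation construction uniform over all $3$-cycles, together with the finite base-case checks for the smallest grids, is where the real care lies; the rest is bookkeeping layered on Theorem~\ref{thm:thm_rcr}.
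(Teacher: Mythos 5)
Your reduction is sound and follows essentially the same skeleton as the paper's proof: start from the $\pi_3\pi_2\pi_1$ decomposition of Theorem~\ref{thm:thm_rcr}, note that an even number of the three factors must be odd, and repair parities by inserting a fixed transposition. The organization differs slightly: the paper inserts a within-row transposition $\sigma'$ and a within-column transposition $\pi'$ \emph{between adjacent factors}, so that the only leftover is always a product $(\sigma')^{-1}(\pi')^{-1}$ of one row-transposition and one column-transposition, which is exactly the object of its Lemma~\ref{thm:lem01weak}; you instead concentrate both corrections in the middle factor, which forces you to handle the more general claim that \emph{any} product of at most two transpositions of grid cells costs $O(1)$ even type-R/type-C factors. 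Your peeling of $\tau'$ from an odd $\pi_2$ and the identity $\tau\pi_2\tau=\bigl(\tau\cdot\pi_2\tau\pi_2^{-1}\bigr)\pi_2$ are both correct, and your observation that $s=t=2$ must be excluded (the even type-R and type-C permutations generate only a Klein four-group there) is a genuine point that the paper leaves implicit in the hypothesis $s,t\ge 3$ of Lemma~\ref{thm:lem01weak}.

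The gap is that your auxiliary lemma --- the one you yourself flag as the main obstacle --- is where essentially all the content of this theorem lives, and you do not prove it. In the paper this is Lemma~\ref{thm:lem01weak}, strengthened to Lemma~\ref{thm:lem01strong} (any two disjoint transpositions of grid cells are realized by $O(1)$ type-2r' and type-2c' moves), and its proof occupies an entire subsection of explicit case analysis (rectangles, axis-parallel versus skew pairs, collinear degeneracies), with a worst case of $20$ operations. Your proposed route --- split a double transposition into $3$-cycles and then conjugate a row- or column-aligned $3$-cycle into general position --- merely relocates that case analysis into the ``alignment'' step: moving three prescribed cells onto a common line using only even type-R/type-C generators, uniformly over all collision patterns and for small grids, is no easier than the paper's direct construction. (Also beware that a $3$-cycle on four cells is \emph{not} a product of double transpositions supported on those same four cells, so the spare cells you invoke must be tracked explicitly, and your claim that $\min(s,t)=2$ with $\max(s,t)\ge 3$ suffices is asserted rather than verified.) As it stands the argument is a correct reduction to an unproven combinatorial lemma.
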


    Our goal is to prove Theorem~\ref{thm:cor:cube_even} and the core of the proof is Theorem~\ref{thm:thm01}.
    Lemma~\ref{thm:lem:op_[f,g]} and Lemma~\ref{thm:lem:edges_on_plane} construct two different ways of utilizing the basic operation.
    Lemma~\ref{thm:lem:op_[f,g]} can perform permutations of form $\pi \sigma \pi^{-1} \sigma^{-1}$ on an edge of the cube.
    It can be improved to any even permutation on an edge (Lemma~\ref{thm:lem:[f,g]}) with the help of Lemma~\ref{thm:lem:decomp_commut_cycle}.
    Lemma~\ref{thm:lem:edges_on_plane} can perform $\sigma_{i}^{-1} \sigma_{\pi(i)}$ to $i$-th row of a plane of the cube.
    This can be used to perform $\tau_{i}$ to $i$-th row of a plane, where $\tau_{i}$ is arbitrary even permutation when $i < d$ and $\tau_{d}$ can be decided by $\tau_{1}, \dots, \tau_{d - 1}$.
    Combining these results, we can provide the two different kinds of even operations (Corollary~\ref{thm:cor:op1_plane} and Corollary~\ref{thm:cor:op2_plane}) needed by Theorem~\ref{thm:thm01} and hence achieve arbitrary even permutation on a plane of the cube (Corollary~\ref{thm:cor:plane_even}).
    By flattening a $d \times d \times d$ cube as a $d \times d^{2}$ plane, we reduce it to a 2-dimensional problem.
    We can apply Theorem~\ref{thm:thm01} again to obtain the goal (Theorem~\ref{thm:cor:cube_even}).

    \subsubsection{Main technical lemma}

    The 2 types of permutations needed by Theorem~\ref{thm:thm01} are formally defined as follows.

    \begin{definition}[type-1 permutation]
        A permutation $\pi \in A_{s \times t}$ is of type-1r if it can be decomposed into $s$ permutations $\pi_{1}, \dots, \pi_{s} \in A_{t}$ where $\pi_{i}$ only acts on row $i$ ($i = 1, \dots, s$).
        A permutation $\sigma \in A_{s \times t}$ is of type-1c if it can be decomposed into $t$ permutations $\sigma_{1}, \dots, \sigma_{t} \in A_{s}$ where $\sigma_{i}$ only acts on column $i$ ($i = 1, \dots, t$).
    \end{definition}

    \begin{definition}[type-2 permutation]
        A permutation $\pi \in A_{s \times t}$ is of type-2r if $\pi = \prod_{i = 1}^{2k} ((r_{i}, c_{1}) \ (r_{i}, c_{2}))$ where $k \leq t / 2$, $c_{1} \neq c_{2}$ and $r_{i_{1}} \neq r_{i_{2}}$ when $i_{1} \neq i_{2}$.
        A permutation $\pi \in A_{s \times t}$ is of type-2c if it $\sigma = \prod_{i = 1}^{2k} ((r_{1}, c_{i}) \ (r_{2}, c_{i}))$ where $k \leq s / 2$, $r_{1} \neq r_{2}$ and $c_{i_{1}} \neq c_{i_{2}}$ when $i_{1} \neq i_{2}$.
    \end{definition}

    The proof of Theorem~\ref{thm:thm01} depends on the following Lemma, which will be proved in section~\ref{section:result01:lem01}.

    \begin{lemma}
        \label{thm:lem01weak}
        $((r_{0}, c_{1}) \ (r_{0}, c_{2})) ((r_{1}, c_{0}) \ (r_{2}, c_{0}))$, which swaps the $c_{1}$-th and $c_{2}$-th element of row $r_{0}$ and swaps the $r_{1}$-th and $r_{2}$-th element of column $c_{0}$, can be written as product of $O(1)$ type-2 permutations when $s, t \geq 3$.
    \end{lemma}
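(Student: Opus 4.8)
The plan is to reduce to the smallest grid and then settle the statement inside a fixed finite group. Write $\sigma := ((r_{0}, c_{1})\ (r_{0}, c_{2}))\,((r_{1}, c_{0})\ (r_{2}, c_{0}))$. This permutation moves only cells whose row lies in $\{r_{0}, r_{1}, r_{2}\}$ and whose column lies in $\{c_{0}, c_{1}, c_{2}\}$; since $s, t \ge 3$, I would fix a $3 \times 3$ sub-grid $R' \times C'$ with $\{r_{0},r_{1},r_{2}\}\subseteq R'$ and $\{c_{0},c_{1},c_{2}\}\subseteq C'$, padding with arbitrary spare rows or columns whenever some of these indices coincide. The essential point of the reduction is that a type-2r (resp.\ type-2c) permutation of the $3\times3$ sub-grid is, verbatim, also a type-2r (resp.\ type-2c) permutation of the whole $s\times t$ grid, acting as the identity off $R'\times C'$; hence it suffices to treat $s = t = 3$. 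Furthermore, relabelling the three rows or the three columns conjugates type-2 permutations to type-2 permutations and carries $\sigma$ to a permutation of the same form, so we may place the indices conveniently; and since $\sigma$ must be even (it is a product of even permutations), we may assume $c_{1}\neq c_{2}$ and $r_{1}\neq r_{2}$, so that, according to whether $c_{0}\in\{c_{1},c_{2}\}$ and whether $r_{0}\in\{r_{1},r_{2}\}$, the permutation $\sigma$ is the identity, a product of two disjoint transpositions, or a single $3$-cycle on cells of the grid.

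Everything now lives in the fixed finite group $G$ generated by the eighteen type-2r and type-2c permutations of the $3\times3$ grid, and the crux is to show $\sigma \in G$; the $O(1)$ bound, uniform in the original $s,t$, is then automatic because $G$ is a fixed finite group, so each of its elements is a product of at most $\operatorname{diam}(G) = O(1)$ generators. To show $\sigma\in G$ I would establish $G = A_{9}$ as follows: (i) $G \le A_{9}$, since each generator is a product of two transpositions; (ii) $G$ is transitive on the nine cells, as a type-2r permutation carries a cell to any other cell of its row and a type-2c permutation carries it to a cell of another row; (iii) $G$ contains a $3$-cycle, which one sees by computing $(VU)^{2}$ for suitably overlapping type-2r and type-2c permutations $U, V$ --- indeed in the generic configuration $VU$ equals $\sigma$ times a $3$-cycle on cells disjoint from the support of $\sigma$, so $(VU)^{2}$ is a $3$-cycle and even $(VU)^{3}=\sigma$; and (iv) $G$ is primitive, since neither the partition of the nine cells into rows (preserved by every type-2r permutation but by no type-2c permutation) nor the partition into columns (the other way round) --- nor, after a short check, any other partition into three triples --- is a block system. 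By Jordan's theorem a primitive permutation group of degree $9$ containing a $3$-cycle contains $A_{9}$, so $G = A_{9}\ni\sigma$, and $\sigma$ is a product of $O(1)$ type-2 permutations.

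The main obstacle is step (iv): ruling out all exotic block systems of the $3\times3$ grid, and, relatedly, producing explicit short words for $\sigma$ in the degenerate positions, where the transposition that $U$ copies into a second row and the one that $V$ copies into a second column cannot both be kept off the support of $\sigma$ since only three columns are available. Two routes sidestep this: a direct breadth-first search in the Cayley graph of the fixed finite group $G$ (a finite computation), or a case analysis giving, for each of the finitely many positions of $\sigma$ on the $3\times3$ grid, an explicit word in the type-2 permutations along the lines of the identity $\sigma = (VU)^{3}$ found above.
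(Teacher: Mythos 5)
Your proposal is correct, but it takes a genuinely different route from the paper. The paper proves a stronger statement (its Lemma~\ref{thm:lem01strong}: \emph{any} two disjoint cell-transpositions, not just one horizontal and one vertical) by an explicit geometric case analysis on the configuration of the four points (rectangle, parallel/perpendicular to an axis, collinear or not), producing concrete words of at most $20$ type-2 operations. You instead localize $\sigma$ to a $3\times3$ sub-grid --- which is sound, since a type-2 permutation of the sub-grid is verbatim a type-2 permutation of the $s\times t$ grid --- and reduce everything to the single finite claim that the $18$ type-2 permutations of the $3\times3$ grid generate $A_{9}$, after which the $O(1)$ bound is the diameter of a fixed group. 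That claim is true, and your computation $(VU)^{3}=\sigma$, $(VU)^{2}=$ a $3$-cycle for overlapping generators $U,V$ is the right engine. Your approach buys a shorter, more conceptual argument that handles all degenerate placements of $\sigma$ (including the $3$-cycle case where the two transpositions share a cell) in one stroke; the paper's approach buys explicit small constants and the stronger lemma. Two repairs are worth making. First, your parenthetical ``it is a product of even permutations'' is wrong as stated: each factor of $\sigma$ is a transposition, hence odd, and $\sigma$ is even only because there are two of them; if exactly one of $c_{1}=c_{2}$, $r_{1}=r_{2}$ held, $\sigma$ would be a single transposition and the lemma false, so the assumption that both factors are genuine transpositions is an implicit hypothesis (satisfied where the lemma is invoked), not a consequence. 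Second, you can avoid the primitivity check and Jordan's theorem entirely, which you rightly flag as the main obstacle: the generating set is invariant under relabelling rows and columns, so from the one $3$-cycle $(VU)^{2}$, supported on an L-shaped triple $\{(r,c),(r,c'),(r',c')\}$, you obtain a $3$-cycle on every L-shaped triple; any two of the nine cells lie in a common such triple, so these $3$-cycles form a connected family covering all cells and therefore generate $A_{9}$ by the standard lemma. With those two adjustments the argument is complete.
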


    Now, we state and prove the main theorem of this subsection.

    \begin{theorem}
        \label{thm:thm01}
        An arbitrary permutation $\pi \in A_{s \times t}$ can be decomposed into $O(1)$ permutations where each permutation is of type-1r, type-1c, type-2r or type-2c.
    \end{theorem}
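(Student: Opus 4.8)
The plan is to start from the type-R/type-C decomposition of Theorem~\ref{thm:thm_rcr} and then repair parities, invoking Lemma~\ref{thm:lem01weak} to dispose of the leftover odd part. First I would apply Theorem~\ref{thm:thm_rcr} to write $\pi = \pi_3\pi_2\pi_1$ with $\pi_1,\pi_3$ of type-R and $\pi_2$ of type-C. These factors need not lie in $A_{s\times t}$, but since $\operatorname{sgn}(\pi_1)\operatorname{sgn}(\pi_2)\operatorname{sgn}(\pi_3)=\operatorname{sgn}(\pi)=+1$, an even number of them --- zero or two --- are odd.

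The second step is to ``evenize'' each factor. For a type-R permutation $\rho=\prod_i\rho_i$ (with $\rho_i$ acting on row $i$), fix two columns $c_1\neq c_2$; replacing each odd $\rho_i$ by the even permutation $\rho_i\cdot((i,c_1)\ (i,c_2))$ gives $\rho=\tilde\rho\cdot\prod_{i\in S}((i,c_1)\ (i,c_2))$, where $\tilde\rho$ is type-1r and $S$ is the set of rows on which $\rho$ acts oddly. When $|S|$ is even the trailing product is a type-2r permutation, so an even type-R permutation is a product of one type-1r and one type-2r permutation; when $|S|$ is odd a single transposition $((i_0,c_1)\ (i_0,c_2))$ is left over. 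The type-C case is symmetric (type-1c and type-2c, with a leftover column-transposition). I would apply this to $\pi_1,\pi_2,\pi_3$. If all three factors are even we are done with a bounded number of allowed factors. If exactly two are odd and one of them is the middle factor $\pi_2$, then after evenizing, the leftover column-transposition from $\pi_2$ stands next to the leftover row-transposition coming from the adjacent type-R factor, and their product is exactly an instance of $((r_0,c_1)\ (r_0,c_2))((r_1,c_0)\ (r_2,c_0))$; Lemma~\ref{thm:lem01weak} (which needs $s,t\geq 3$) turns it into $O(1)$ type-2 permutations, so again the total is $O(1)$.

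The main obstacle is the remaining case, in which $\pi_2$ is even while both $\pi_1$ and $\pi_3$ are odd: the two leftover row-transpositions are then separated by the type-C factor $\pi_2$ and cannot be combined directly. To handle this I would first also consider the transposed ``type-C, type-R, type-C'' decomposition $\pi=\rho_3\rho_2\rho_1$, so that whenever \emph{its} two odd factors happen to be adjacent we are back in the previous paragraph. When both decompositions put their odd factors on the outside, I would peel one transposition from each outer type-R factor and bring the two of them together by conjugating one through $\pi_2$; since $\pi_2$ is type-C, such a conjugate still lives in the same two columns, so by choosing the column pair and the peeled rows suitably (using $s,t\geq 3$) it can be taken to be a row- or column-transposition, after which the pair is again in the form handled by Lemma~\ref{thm:lem01weak} or collapses into a single type-2r permutation.

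The parts I expect to be merely fiddly rather than hard are: (i) the small counting argument showing that the ``suitable'' column pairs and rows in the last step always exist when $s,t\geq 3$; and (ii) checking the size constraints $k\leq t/2$ and $k\leq s/2$ in the definitions of type-2r and type-2c permutations, splitting an over-large correction into $O(1)$ sub-permutations if necessary (a mild asymmetry that may be removed by a WLOG on whether $s\leq t$, or by transposing). Putting these together with the bounded number of allowed factors produced in each case yields the claimed $O(1)$ decomposition of an arbitrary $\pi\in A_{s\times t}$, which is Theorem~\ref{thm:thm01}.
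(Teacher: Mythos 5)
Your overall route---the Theorem~\ref{thm:thm_rcr} decomposition, ``evenizing'' each factor into a type-1 part plus a type-2 part with possibly one leftover transposition, and feeding adjacent leftover pairs to Lemma~\ref{thm:lem01weak}---is exactly the paper's, and your first two cases (all factors even; the two odd factors adjacent) are handled correctly. The gap is in the remaining case, where both outer type-R factors $\pi_1,\pi_3$ are odd. You propose to peel a row-transposition $t=((r_0,c_1)\ (r_0,c_2))$ and move it past the type-C middle factor $\pi_2$ by conjugation, arguing that the conjugate ``still lives in the same two columns'' and can, by a suitable choice of $r_0,c_1,c_2$, be made a row- or column-transposition. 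That step fails: $\pi_2^{-1}t\,\pi_2$ is the transposition of the two cells $\pi_2^{-1}(r_0,c_1)$ and $\pi_2^{-1}(r_0,c_2)$, which lie in columns $c_1$ and $c_2$ but in general in \emph{different} rows; since $c_1\neq c_2$ it can never be a column-transposition, and it is a row-transposition only if $\pi_2$ happens to align those two preimages. No choice of parameters guarantees this: for $s=t=3$ let $\pi_2$ act as the identity on the rows of column $1$, as the $3$-cycle $(1\ 2\ 3)$ on the rows of column $2$, and as $(1\ 3\ 2)$ on the rows of column $3$; then for every row and every pair of columns the two preimages lie in distinct rows, so the conjugate is neither a row- nor a column-transposition and neither Lemma~\ref{thm:lem01weak} nor the type-2r definition applies. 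Your fallback of switching to the transposed C-R-C decomposition does not rescue this, since that decomposition may equally place its two odd factors on the outside.

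The missing idea---and the way the paper closes this case---is to conjugate a \emph{column}-transposition through $\pi_2$ instead of a row-transposition, because type-C conjugation preserves the property of being a column-transposition (it fixes columns), whereas it does not preserve row-transpositions. Concretely, pick any column-transposition $u$ and set $u'=\pi_2 u^{-1}\pi_2^{-1}$, again a column-transposition, so that $u'\pi_2 u=\pi_2$. Writing $t_1,t_3$ for row-transpositions with $t_1\pi_1$ and $\pi_3 t_3$ even type-R, one has
\begin{equation*}
\pi \;=\; (\pi_3 t_3)\cdot\bigl(t_3^{-1}u'^{-1}\bigr)\cdot\bigl(u'\pi_2 u\bigr)\cdot\bigl(u^{-1}t_1^{-1}\bigr)\cdot\bigl(t_1\pi_1\bigr),
\end{equation*}
where the first, third and fifth factors are even type-R or type-C (hence type-1 plus type-2 by your own observation), and each of the two glue factors is a product of one row-transposition and one column-transposition, i.e.\ exactly an instance of Lemma~\ref{thm:lem01weak}. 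With this substitution the rest of your argument goes through; the constraints $2k\le s$ (resp.\ $2k\le t$) you worried about are automatic since the rows (resp.\ columns) appearing in a type-2 correction are distinct.
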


    \begin{proof}
        An observation is that an even permutation of type-R can be decomposed as a permutation of type-1r and a permutation of type-2r.
        Let
        \begin{equation*}
            \pi'_{i} = \begin{cases}
                \pi_{i}, & \pi_{i}~\text{is even}, \\
                (1 \ 2) \pi_{i}, & \text{otherwise} .
            \end{cases}
        \end{equation*}
        Permutation $\pi' = \pi'_{1} \dots \pi'_{s}$ is of type-1r.
        Then $\pi (\pi'^{-1})$ is of type-2r with $c_{1} = 1, c_{2} = 2$ and $\{ r_{1}, \dots, r_{2k} \} = \{ i : \pi_{i} \neq \pi'_{i} \}$.
        This is valid as $\pi (\pi')^{-1}$ must be even.
        Similarly, an even permutation of type-C can be decomposed as a permutation of type-1c and a permutation of type-2c.

        We will dive in the general case, an even permutation $\pi$ in $S_{s \times t}$. Consider its decomposition $\pi_{2} \sigma \pi_{1}$ given by Theorem~\ref{thm:thm_rcr}.
        \begin{itemize}
            \item $\pi_{1}, \pi_{2}$ and $\sigma$ are all even permutations. Simply use the observation.
            \item $\pi_{1}$ and $\sigma$ are odd permutations.
            Rewrite $\pi$ as
            \begin{equation*}
                \pi_{2} \cdot \sigma \sigma' \cdot (\sigma')^{-1} (\pi')^{-1} \cdot \pi' \pi_{1}
            \end{equation*}
            where $\sigma'$ swaps two elements in some row $r_{0}$ and $\pi'_{1}$ swaps two elements in some column $c_{0}$.
            Apply the observation for $\pi' \pi_{1}$, $\sigma \sigma'$ and $\pi_{2}$.
            And $(\sigma')^{-1}(\pi')^{-1}$ can be performed with $O(1)$ operations by Lemma~\ref{thm:lem01weak}.
            \item $\pi_{1}$ and $\pi_{2}$ are odd permutations.
            Similarly, rewrite $\pi$ as
            \begin{equation*}
                \pi_{2} \pi'' \cdot (\pi'')^{-1} (\sigma'')^{-1} \cdot \sigma'' \sigma \sigma' \cdot (\sigma')^{-1} (\pi')^{-1} \cdot \pi' \pi_{1}
            \end{equation*}
            where $\sigma'$ and $\sigma''$ additionally satisfy the condition $\sigma'' \sigma \sigma' = \sigma$. This condition can be achieved by choosing $\sigma' = (a_{1} \ a_{2}), \sigma'' = (\sigma(a_{1}) \ \sigma(a_{2}))$.
        \end{itemize}
    \end{proof}

    \subsubsection{Manipulating an edge of the cube}

    In this subsubsection, we show that the effect of operation can be restricted to an edge of the cube.
    The basic idea is to use a sequence of operations $\pi, \sigma, \pi^{-1}, \sigma^{-1}$.
    As $\pi$ cancels with $\pi^{-1}$, only those affected by both $\pi$ and $\sigma$ may be permuted.

    \begin{lemma}
        \label{thm:lem:op_[f,g]}
        Let $\pi$, $\sigma$ be permutations in $S_{d}$.
        It takes $O(1)$ basic operations to do permutation $\sigma^{-1} \pi^{-1} \sigma \pi$ on a $1 \times 1 \times d$ sub-cuboid.
    \end{lemma}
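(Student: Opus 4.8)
The plan is to exploit the commutator structure to localize the effect of operations onto a single $1 \times 1 \times d$ sub-cuboid (an edge of the cube). Recall that in the $n = 3$ visualization a basic operation picks a plane direction and applies the same $S_{d \times d}$ permutation to every parallel slice. I want to realize $\sigma^{-1} \pi^{-1} \sigma \pi$ acting only on the edge indexed, say, by fixing the $y$- and $z$-coordinates to $(c_{0}, c_{0})$ and letting the $x$-coordinate range over $\{0, \dots, d-1\}$.

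**First I would** lift $\pi$ and $\sigma$ (each a permutation of $S_d$) to basic operations. The natural choice is: let $\Pi$ be the basic operation that, on each $xy$-plane (fixed $z$), applies $\pi$ to the column indexed by $z = c_0$ — i.e. $\Pi$ permutes the $x$-coordinate according to $\pi$ but only for those cells whose $y$-coordinate equals $c_0$, and does nothing elsewhere; since the operation is applied to all $z$-slices uniformly, this is a legal basic operation acting on the $xy$-plane direction. Dually, let $\Sigma$ be the basic operation acting in the $xz$-plane direction that applies $\sigma$ to the $x$-coordinate only for cells whose $z$-coordinate equals $c_0$. Now consider $\Sigma^{-1} \Pi^{-1} \Sigma \Pi$. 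A cell $(x, y, z)$ is moved by $\Pi$ only if $y = c_0$, and moved by $\Sigma$ only if $z = c_0$; the standard commutator cancellation argument shows that any cell with $y \neq c_0$ is fixed by both $\Pi$ and $\Pi^{-1}$ (so the $\Sigma$'s cancel around it) and any cell with $z \neq c_0$ is fixed by both $\Sigma$ and $\Sigma^{-1}$ (so the $\Pi$'s cancel around it). Hence the composite acts nontrivially only on cells with $y = z = c_0$, that is, on the target edge, and on that edge it restricts to exactly $\sigma^{-1}\pi^{-1}\sigma\pi$ in the $x$-coordinate. This is $4$ basic operations, which is $O(1)$.

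**The main obstacle** I anticipate is the bookkeeping needed to check that $\Pi$ and $\Sigma$ as I defined them are genuinely basic operations (a single $(n-1)$-qudit sub-circuit each) and that the commutator really collapses cleanly rather than leaving residual permutations on the two planes $y = c_0$ and $z = c_0$ away from their intersection. Concretely I must verify that for a cell on the plane $z = c_0$ but with $y \neq c_0$, the order of application matters correctly: $\Pi$ fixes it (wrong row), then $\Sigma$ moves it within that plane, then $\Pi^{-1}$ must still fix the moved image — which holds because $\Sigma$ preserves the $y$-coordinate, so the image still has $y \neq c_0$. The symmetric check on $y = c_0$, $z \neq c_0$ uses that $\Pi$ preserves the $z$-coordinate. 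So the key structural fact to state carefully is that $\Pi$ acts only on the $x$-coordinate and leaves $y,z$ fixed, and likewise $\Sigma$; once that is recorded, the cancellation is the textbook fact that if $a$ is supported on set $A$ and $b$ on set $B$ then $b^{-1}a^{-1}ba$ is supported on $A \cap B$ (more precisely, moves only points in $A \cap B$, here using that $A$ and $B$ are each invariant under both maps). I would close by remarking that the same construction works verbatim for any choice of the two fixed coordinates, so the edge can be placed along any of the three axis directions.
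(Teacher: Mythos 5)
Your proposal is correct and follows essentially the same route as the paper: realize $\pi$ and $\sigma$ as basic operations each supported on one coordinate plane (the paper uses the top and front faces, you use $y=c_0$ and $z=c_0$), and observe that the commutator $\Sigma^{-1}\Pi^{-1}\Sigma\Pi$ cancels everywhere except on the intersection of the two planes, where it acts as $\sigma^{-1}\pi^{-1}\sigma\pi$ on the $x$-coordinate, using $4 = O(1)$ basic operations. The only blemish is the typo ``column indexed by $z=c_0$'' (you mean $y=c_0$), which your subsequent clarification already fixes.
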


    \begin{proof}
        Without loss of generality, assume this sub-cuboid is the upper edge of the front face.
        As shown by \figurename~\ref{fig:[f,g]}, we first apply $\pi$ to each row of the upper face and then $\sigma$ to each row of the front face.
        Subsequently, we apply $\pi^{-1}$ to each row of the upper face and then $\sigma^{-1}$ to each row of the front face.
        If a cubie ($1 \times 1 \times 1$ miniature cube) is neither on the top face nor on the front face, it undergoes no transformation.
        For cubies solely on the top face, the successive application of $\pi$ and its inverse $\pi^{-1}$ results in the identity transformation.
        Similarly, cubies only on the front face remain unaffected.
        Cubies at the intersection of the top and front faces undergo the transformation $\sigma^{-1} \pi^{-1} \sigma \pi$.
    \end{proof}

    \begin{figure*}
        \centering
        \subfloat[\label{fig:[f,g]:a}]{
            \centering
            \begin{tikzpicture}
                \begin{scope}[canvas is xy plane at z=3]
                    \draw [step=1,help lines] (0,0) grid (3,3);
                    \pattern [pattern=dots] (0,0) rectangle (1,3);
                    \pattern [pattern=vertical lines] (1,0) rectangle (2,3);
                    \pattern [pattern=north east lines] (2,0) rectangle (3,3);
                    \draw[thick] (1.5, 2.5) ellipse (1.25 and 0.25);
                    \draw[->, thick] (1.49, 2.25) -- (1.51, 2.25) node[below=0.1cm, fill=white, inner sep=2pt] {$\pi$};
                \end{scope}
                \begin{scope}[canvas is yz plane at x=3]
                    \draw [step=1,help lines] (0,0) grid (3,3);
                \end{scope}
                \begin{scope}[canvas is xz plane at y=3]
                    \draw [step=1,help lines] (0,0) grid (3,3);
                \end{scope}
            \end{tikzpicture}
        }
        \subfloat[\label{fig:[f,g]:b}]{
            \centering
            \begin{tikzpicture}
                \begin{scope}[canvas is xy plane at z=3]
                    \draw [step=1,help lines] (0,0) grid (3,3);
                    \pattern [pattern=dots] (0,0) rectangle (1,2);
                    \pattern [pattern=vertical lines] (1,0) rectangle (2,2);
                    \pattern [pattern=north east lines] (2,0) rectangle (3,2);
                    \pattern [pattern=dots] (1,2) rectangle (2,3);
                    \pattern [pattern=vertical lines] (2,2) rectangle (3,3);
                    \pattern [pattern=north east lines] (0,2) rectangle (1,3);
                \end{scope}
                \begin{scope}[canvas is yz plane at x=3]
                    \draw [step=1,help lines] (0,0) grid (3,3);
                \end{scope}
                \begin{scope}[canvas is xz plane at y=3]
                    \draw [step=1,help lines] (0,0) grid (3,3);
                    \draw[thick] (1, 2.5) ellipse (0.75 and 0.25);
                    \draw[->, thick] (0.99, 2.25) -- (1.01, 2.25) node[above] {$\sigma$};
                \end{scope}
            \end{tikzpicture}
        }
        \subfloat[\label{fig:[f,g]:c}]{
            \centering
            \begin{tikzpicture}
                \begin{scope}[canvas is xy plane at z=3]
                    \draw [step=1,help lines] (0,0) grid (3,3);
                    \pattern [pattern=dots] (0,0) rectangle (1,2);
                    \pattern [pattern=vertical lines] (1,0) rectangle (2,2);
                    \pattern [pattern=north east lines] (2,0) rectangle (3,2);
                    \pattern [pattern=north east lines] (1,2) rectangle (2,3);
                    \pattern [pattern=vertical lines] (2,2) rectangle (3,3);
                    \pattern [pattern=dots] (0,2) rectangle (1,3);
                    \draw[thick] (1.5, 2.5) ellipse (1.25 and 0.25);
                    \draw[->, thick] (1.51, 2.25) -- (1.49, 2.25) node[below=0.1cm, fill=white, inner sep=2pt] {$\pi^{-1}$};
                \end{scope}
                \begin{scope}[canvas is yz plane at x=3]
                    \draw [step=1,help lines] (0,0) grid (3,3);
                \end{scope}
                \begin{scope}[canvas is xz plane at y=3]
                    \draw [step=1,help lines] (0,0) grid (3,3);
                \end{scope}
            \end{tikzpicture}
        }
        \\
        \subfloat[\label{fig:[f,g]:d}]{
            \centering
            \begin{tikzpicture}
                \begin{scope}[canvas is xy plane at z=3]
                    \draw [step=1,help lines] (0,0) grid (3,3);
                    \pattern [pattern=dots] (0,0) rectangle (1,2);
                    \pattern [pattern=vertical lines] (1,0) rectangle (2,2);
                    \pattern [pattern=north east lines] (2,0) rectangle (3,2);
                    \pattern [pattern=vertical lines] (1,2) rectangle (2,3);
                    \pattern [pattern=dots] (2,2) rectangle (3,3);
                    \pattern [pattern=north east lines] (0,2) rectangle (1,3);
                \end{scope}
                \begin{scope}[canvas is yz plane at x=3]
                    \draw [step=1,help lines] (0,0) grid (3,3);
                \end{scope}
                \begin{scope}[canvas is xz plane at y=3]
                    \draw [step=1,help lines] (0,0) grid (3,3);
                    \draw[thick] (1, 2.5) ellipse (0.75 and 0.25);
                    \draw[->, thick] (1.01, 2.25) -- (0.99, 2.25) node[above] {$\sigma^{-1}$};
                \end{scope}
            \end{tikzpicture}
        }
        \subfloat[\label{fig:[f,g]:e}]{
            \centering
            \begin{tikzpicture}
                \begin{scope}[canvas is xy plane at z=3]
                    \draw [step=1,help lines] (0,0) grid (3,3);
                    \pattern [pattern=dots] (0,0) rectangle (1,2);
                    \pattern [pattern=vertical lines] (1,0) rectangle (2,2);
                    \pattern [pattern=north east lines] (2,0) rectangle (3,2);
                    \pattern [pattern=north east lines] (1,2) rectangle (2,3);
                    \pattern [pattern=dots] (2,2) rectangle (3,3);
                    \pattern [pattern=vertical lines] (0,2) rectangle (1,3);
                \end{scope}
                \begin{scope}[canvas is yz plane at x=3]
                    \draw [step=1,help lines] (0,0) grid (3,3);
                \end{scope}
                \begin{scope}[canvas is xz plane at y=3]
                    \draw [step=1,help lines] (0,0) grid (3,3);
                \end{scope}
            \end{tikzpicture}
        }
        \caption{Operations to apply $\sigma^{-1} \pi^{-1} \sigma \pi$}
        \label{fig:[f,g]}
    \end{figure*}

    The $\pi \sigma \pi^{-1} \sigma^{-1}$ operation (if we may replace $\pi$, $\sigma$ with $\sigma^{-1}$, $\pi^{-1}$ respectively) provided by Lemma~\ref{thm:lem:op_[f,g]} is called of type-e1. It turns out to be effective as suggested by the following result.

    \begin{lemma}
        \label{thm:lem:[f,g]}
        Let $\tau$ be an even permutation in $A_{d}$. It takes $O(1)$ type-e1 operations to perform $\tau$ on a $1 \times 1 \times d$ sub-cuboid.
    \end{lemma}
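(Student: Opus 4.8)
The plan is to separate the statement into a geometric part and a group-theoretic part. Geometrically, composing several type-e1 operations \emph{on one fixed edge} of the cube realizes, on that edge, an arbitrary product of commutators $[\pi,\sigma]=\pi\sigma\pi^{-1}\sigma^{-1}$ with $\pi,\sigma\in S_d$, while acting as the identity on every cubie off that edge. This is immediate from Lemma~\ref{thm:lem:op_[f,g]} (and the definition of type-e1): a single type-e1 operation performs $[\pi,\sigma]$ on the edge and, by the case analysis in that proof, leaves every cubie outside the two faces meeting at the edge fixed and composes to the identity on cubies lying on only one of those two faces; since an edge determines the pair of faces, using the \emph{same} edge for all operations keeps the off-edge effect trivial. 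Identifying the $1\times1\times d$ sub-cuboid with a set of $d$ cells on which $S_d$ acts, the lemma thus reduces to the claim that every $\tau\in A_d$ is a product of $O(1)$ commutators in $S_d$ (one or two will do).

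For the group-theoretic part I would first record the conjugation trick: if $c_1,c_2\in S_d$ are cycles of equal length, then $c_1c_2$ is a single commutator. Indeed, cycles of equal length are conjugate in $S_d$, so pick $\pi$ with $\pi c_2^{-1}\pi^{-1}=c_1$; then $[\pi,c_2^{-1}]=(\pi c_2^{-1}\pi^{-1})\,c_2=c_1c_2$. Consequently, a permutation that is a product of two equal-length cycles (supported anywhere on the edge) is performed by a single type-e1 operation. Next I would invoke Lemma~\ref{thm:lem:decomp_commut_cycle} to bring an arbitrary $\tau\in A_d$ into such a form: write $\tau$ as a product of a bounded number of cycles sharing a common length (for instance two full-length or near-full-length cycles, with the length chosen to match the parity of $\tau$), equivalently as a bounded product of commutators. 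Pairing these cycles and applying the trick to each pair yields $\tau=[\pi_1,\sigma_1]\cdots[\pi_m,\sigma_m]$ with $m=O(1)$, and the corresponding $m$ type-e1 operations on the chosen edge perform $\tau$ there and nothing elsewhere. The cases $d\le 2$ are trivial since $A_d$ is then the trivial group.

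The main obstacle is the uniform group-theoretic step — that a number of equal-length cycles (equivalently commutators) \emph{independent of $d$} suffices to express every even permutation of $d$ points. This is exactly the work carried out by Lemma~\ref{thm:lem:decomp_commut_cycle}: its proof has to construct the desired cycle decomposition by merging the disjoint cycles of $\tau$ (using spare fixed points to splice short cycles into long ones) while tracking signs so that the chosen cycle lengths have the right parity, and it must handle small $d$ separately, where a $d$-cycle has degenerate structure. Once that reduction is in hand, everything that remains here is the routine conjugation computation above together with the observation that all type-e1 operations are applied to one common edge, so their off-edge effects cancel.
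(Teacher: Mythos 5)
Your proposal is correct in architecture and reaches the same reduction as the paper (realize $\tau$ as an $O(1)$ product of commutators, each implemented by one type-e1 operation on a fixed edge), but the algebraic step is genuinely different. The paper takes each factor $\tau_i$ from Lemma~\ref{thm:lem:decomp_commut_cycle} --- a product of an even number of \emph{disjoint} transpositions --- and uses the explicit identity $(a_3\ a_4)(a_1\ a_2)=(a_1\ a_2\ a_4)^{-1}(a_1\ a_2\ a_3)^{-1}(a_1\ a_2\ a_4)(a_1\ a_2\ a_3)$ together with the fact that commutators of disjoint support merge, $\prod_i \pi_i\sigma_i\pi_i^{-1}\sigma_i^{-1}=(\prod_i\pi_i)(\prod_i\sigma_i)(\prod_i\pi_i)^{-1}(\prod_i\sigma_i)^{-1}$, so that each $\tau_i$ costs exactly one type-e1 operation. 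Your conjugation trick ($c_1c_2=[\pi,c_2^{-1}]$ whenever $c_1,c_2$ are conjugate) is a cleaner and more general route to the same end: split each $\tau_i$ into two halves consisting of the same number of disjoint transpositions; the halves are conjugate, so $\tau_i$ is a single commutator. What your approach buys is that it works for any pair of conjugate permutations, not just equal-length cycles, and it makes the "why is this a commutator" step transparent; what the paper's buys is an explicit, conjugation-free formula.

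Two cautions. First, you misdescribe Lemma~\ref{thm:lem:decomp_commut_cycle}: it does \emph{not} write $\tau$ as a product of two full-length (or equal-length) cycles --- that would be a different (true but unproven here) fact close to Ore's theorem, and your parenthetical "one or two commutators will do" is likewise an unproven strengthening. The lemma produces three factors, each a product of pairwise disjoint $2$-cycles, and (using evenness of $\tau$) each factor can be taken to contain an even number of them. Second, your phrase "pairing these cycles and applying the trick to each pair" must mean pairing the two conjugate \emph{halves} of each $\tau_i$; if instead you pair the individual $2$-cycles two at a time and spend one commutator per pair, you get $\Theta(d)$ commutators, not $O(1)$. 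With those two points fixed, your argument closes and gives the same count of three type-e1 operations as the paper.
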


    We will need a technical lemma before proving the above result.

    \begin{lemma}
        \label{thm:lem:decomp_commut_cycle}
        Any permutation $\tau \in S_{d}$ can be decomposed into three permutations $\tau = \tau_{3} \tau_{2} \tau_{1}$ such that all of $\tau_{1}, \tau_{2}, \tau_{3}$ have cycle decomposition only consisting of commutable $2$-cycles.
    \end{lemma}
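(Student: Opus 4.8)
The plan is to prove the (slightly stronger and completely standard) fact that every $\tau \in S_{d}$ is already a product of \emph{two} permutations, each of which is a product of pairwise disjoint — equivalently, pairwise commuting — $2$-cycles; the third factor $\tau_{3}$ in the statement may then simply be taken to be the identity.

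First I would reduce to the case of a single cycle. Write $\tau = c_{1} c_{2} \cdots c_{r}$ as the product of its disjoint nontrivial cycles, supported on pairwise disjoint sets $S_{1}, \dots, S_{r}$. Suppose each $c_{j}$ can be written as $c_{j} = \rho_{j,2}\,\rho_{j,1}$ with $\rho_{j,1}, \rho_{j,2}$ products of disjoint $2$-cycles supported inside $S_{j}$. Since the $S_{j}$ are disjoint, the permutations $\tau_{1} := \rho_{1,1}\cdots\rho_{r,1}$ and $\tau_{2} := \rho_{1,2}\cdots\rho_{r,2}$ are again products of pairwise disjoint $2$-cycles, and factors with disjoint supports commute, so $\tau_{2}\tau_{1} = (\rho_{1,2}\rho_{1,1})\cdots(\rho_{r,2}\rho_{r,1}) = c_{1}\cdots c_{r} = \tau$. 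Hence it suffices to handle a single $k$-cycle.

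For a single $k$-cycle I would invoke the elementary ``a rotation is the composition of two reflections'' identity from the dihedral group. Relabel the points of the cycle as $0, 1, \dots, k-1$ so that the cycle acts by $j \mapsto j+1 \bmod k$. Let $\rho_{1}$ be the map $j \mapsto (-j) \bmod k$ and $\rho_{2}$ the map $j \mapsto (1-j) \bmod k$. Each is its own inverse, and each partitions $\{0,\dots,k-1\}$ into disjoint $2$-cycles with at most two fixed points ($\rho_{1}$ fixes $0$, and also $k/2$ when $k$ is even; $\rho_{2}$ fixes $(k+1)/2$ when $k$ is odd and has no fixed point when $k$ is even), so each has cycle decomposition consisting only of commuting $2$-cycles. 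A one-line check gives $(\rho_{2}\rho_{1})(j) = 1 - (-j) = j+1 \bmod k$, i.e.\ $\rho_{2}\rho_{1}$ is exactly the given $k$-cycle. Gluing across cycles as in the previous paragraph yields $\tau = \tau_{2}\tau_{1}$, and setting $\tau_{3} = \mathrm{id}$ matches the form in the statement.

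There is essentially no obstacle: the construction is explicit and the only points needing a moment of care are (i) that the per-cycle reflections glue into genuine involutions on the whole of $\{0,\dots,d-1\}$, which is immediate from disjointness of the supports $S_{j}$, and (ii) the modular identity $(\rho_{2}\rho_{1})(j) = j+1 \bmod k$. (If a decomposition in which all three factors are nontrivial is desired, one may first peel off an arbitrary nontrivial layer of commuting $2$-cycles $\tau_{1}$ and apply the two-reflection decomposition to $\tau\tau_{1}^{-1} = \tau_{3}\tau_{2}$; but this refinement is not needed here.)
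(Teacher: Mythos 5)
Your proof is correct and proves a slightly stronger statement than the lemma asks for, by a genuinely different route. The paper works with a single $s$-cycle, peels off one transposition $\tau_{1}=(a_{s}\ a_{1})$ when $s$ is even to reduce to odd length, and then recursively splits the cycle via $(a_{1}\ \dots\ a_{s})=(a_{1}\ a_{2})(a_{s}\ a_{3}\ \dots\ a_{s-1})(a_{s}\ a_{2})$, accumulating the front transpositions into $\tau_{3}$ and the back ones into $\tau_{2}$; the key observation is that these accumulated transpositions never overlap. You instead invoke the dihedral identity that a rotation is a product of two reflections, $j\mapsto -j$ and $j\mapsto 1-j$ modulo $k$, each of which is an involution and hence a disjoint union of $2$-cycles; your computation $(\rho_{2}\rho_{1})(j)=1-(-j)=j+1$ is consistent with the paper's convention $(fg)(x)=f(g(x))$, and the gluing across disjoint cycle supports is immediate. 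This gives the classical fact that every permutation is a product of \emph{two} involutions, with $\tau_{3}=\mathrm{id}$ (vacuously a product of commuting $2$-cycles), and it handles even and odd cycle lengths uniformly, which is cleaner than the paper's case split. The one thing the paper's three-factor form buys is flexibility in the subsequent application (Lemma~\ref{thm:lem:[f,g]}), where each factor must additionally be made an \emph{even} permutation by redistributing transpositions among the three factors; with your two reflections the factors of an even $\tau$ have equal parity but may both be odd (e.g.\ a $3$-cycle splits into two single transpositions), so the third factor, or the ``peel off a layer'' adjustment you mention at the end, would still be needed there. As a proof of the lemma as stated, your argument is complete.
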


    \begin{proof}
        Without loss of generality, assume that $\tau$ is a cyclic permutation $\tau = (a_{1} \ a_{2} \ \dots \ a_{s})$.
        Notice that
        \begin{equation*}
            (a_{1} \ a_{2} \ \dots \ a_{s}) = (a_{1} \ a_{2}) (a_{2} \ \dots\ a_{s}) = (a_{2} \ \dots \ a_{s}) (a_{s} \ a_{1}) .
        \end{equation*}
        Assume $s$ is odd, otherwise let $\tau_{1} = (a_{s} \ a_{1})$, $\tau' = (a_{2} \ \dots \ a_{s})$.
        According to the decomposition above, we have
        \begin{align*}
            (a_{1} \ a_{2} \ \dots \ a_{s}) &= (a_{1} \ a_{2}) (a_{2} \ \dots \ a_{s}) \\
            &= (a_{1} \ a_{2}) (a_{3} \ \dots \ a_{s}) (a_{s} \ a_{2}) \\
            &= (a_{1} \ a_{2}) (a_{s} \ a_{3} \ \dots \ a_{s-1}) (a_{s} \ a_{2}) .
        \end{align*}
        Perform the same decomposition on the smaller cycle $(a_{s} \ a_{3} \ \dots \ a_{s-1})$, we get
        \begin{equation*}
            (a_{1} \ a_{2}) (a_{s} \ a_{3}) (a_{s - 1} \ a_{4} \ \dots \ a_{s - 2}) (a_{s - 1} \ a_{3}) (a_{s} \ a_{2}) .
        \end{equation*}
        Notice that the $2$-cycles at the front (or the end) will never overlap.
        So we can recursively perform this decomposition until $s=1$ (since we assume that $s$ is odd).
        Now we get the desired
        \begin{align*}
            \tau_{2} &= (a_{\frac{s + 1}{2} + 1} \ a_{\frac{s + 1}{2}}) \dots (a_{s - 1} \ a_{3}) (a_{s} \ a_{2}) \\
            \tau_{3} &= (a_{1} \ a_{2}) (a_{s} \ a_{3}) (a_{s - 1} \ a_{4}) \dots (a_{\frac{s + 1}{2} + 2} \ a_{\frac{s + 1}{2}}) .
        \end{align*}
    \end{proof}

    Now we can prove the effectiveness of the $\pi \sigma \pi^{-1} \sigma^{-1}$ operation.

    \begin{proof}[Proof of Lemma~\ref{thm:lem:[f,g]}]
        According to Lemma~\ref{thm:lem:decomp_commut_cycle}, there is a decomposition $\tau = \tau_{3} \tau_{2} \tau_{1}$ such that all of $\tau_{1}, \tau_{2}, \tau_{3}$ have cycle decomposition only consisting of commutable $2$-cycles.
        Furthermore, since $\tau$ is an even permutation, there are even number of $2$-cycles in total in $\tau_{1}, \tau_{2}, \tau_{3}$. So we can assume they are all even permutations.

        By noticing that
        \begin{align*}
            (a_{3} \ a_{4}) (a_{1} \ a_{2}) &= (a_{1} \ a_{2} \ a_{4})^{-1} (a_{1} \ a_{2} \ a_{3})^{-1} \\
            & \phantom{{}={}} \quad (a_{1} \ a_{2} \ a_{4}) (a_{1} \ a_{2} \ a_{3}) ,
        \end{align*}
        we can implement two $2$-cycles simultaneously by the $\pi \sigma \pi^{-1} \sigma^{-1}$ operation.
        Furthermore, observe that as long as $\pi_{1}$ and $\sigma_{1}$ commute with $\pi_{2}$ and $\sigma_{2}$, we have
        \begin{align*}
            \pi_{1} \sigma_{1} \pi_{1}^{-1} \sigma_{1}^{-1} \cdot \pi_{2} \sigma_{2} \pi_{2}^{-1} \sigma_{2}^{-1} &= \pi_{1} \pi_{2} \cdot \sigma_{1} \sigma_{2} \\
            & \phantom{{}={}} \quad \cdot (\pi_{1} \pi_{2})^{-1} \cdot (\sigma_{1} \sigma_{2})^{-1}.
        \end{align*}
        Thus, we are able to implement $\tau_{1}$, $\tau_{2}$ or $\tau_{3}$ by one $\pi \sigma \pi^{-1} \sigma^{-1}$ operation.
        And this results in an implementation of $\tau$ by $3$ type-e1 operations and hence 12 basic operations.
    \end{proof}

    The operation provided by Lemma~\ref{thm:lem:[f,g]} is called of type-e2.

    \subsubsection{Manipulating a plane of the cube}

    In this subsubsection, we follow the same idea as the one in previous subsubsection and construct a sequence of operations that affect a plane of the cube instead of an edge.
    Combining the results in previous subsubsection, we can achieve arbitrary even permutation on a plane.

    \begin{lemma}
        \label{thm:lem:edges_on_plane}
        Let $\pi$, $\sigma_{i} (1 \leq i \leq d)$ be permutations in $S_{d}$.
        It takes $O(1)$ basic operations to do permutation $\sigma_{i}^{-1} \sigma_{\pi(i)}$ on row $i$ of a chosen plane simultaneously, $i = 1, \dots, d$.
    \end{lemma}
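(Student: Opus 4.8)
The plan is to realize the desired permutation as a commutator $U^{-1} V^{-1} U V$ of two basic operations $U$ and $V$, mirroring the cancellation idea already used in Lemma~\ref{thm:lem:op_[f,g]}. Set up coordinates so that the chosen plane is $\{z = 1\}$, whose cells are indexed by $(x,y)$ with $x$ the row index and $y$ the position inside the row, while the third coordinate $z \in \{1,\dots,d\}$ labels the planes parallel to it. The whole point is to find $V$ that is ``supported'' on the chosen plane only, so that on every other plane the pair $U$, $U^{-1}$ cancels.

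First I would define $U$ to be the basic operation on the family of $xy$-planes that applies, inside every such plane, the type-R permutation $(x,y) \mapsto (x, \sigma_x(y))$. Since the same $S_{d\times d}$ permutation is applied to all parallel $xy$-planes, this is one legitimate basic operation, and as a permutation of the cube it is $U(i,j,k) = (i,\sigma_i(j),k)$ for every $k$. Next I would define $V$ to be the basic operation on the family of $xz$-planes given by the permutation of the $(x,z)$-grid that acts as $\pi$ on the single line $\{z=1\}$, i.e. $(x,1)\mapsto(\pi(x),1)$, and as the identity on every line $\{z=k\}$ with $k\neq 1$; this is a valid permutation of a $d\times d$ grid (a block that is $\pi$ on one line and the identity elsewhere), hence a legitimate basic operation, and as a permutation of the cube it sends $(i,j,1)\mapsto(\pi(i),j,1)$ and fixes $(i,j,k)$ whenever $k\neq 1$. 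Here it is essential that a basic operation is allowed to be an \emph{arbitrary} $S_{d\times d}$ permutation, which is exactly what lets $V$ be nontrivial on the chosen plane and trivial on all other parallel planes.

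Then I would compute $U^{-1} V^{-1} U V$ (apply $V$, then $U$, then $V^{-1}$, then $U^{-1}$) cell by cell. On a cell $(i,j,1)$ of the chosen plane, $V$ sends it to $(\pi(i),j,1)$, then $U$ to $(\pi(i),\sigma_{\pi(i)}(j),1)$, then $V^{-1}$ restores the row index to $i$ giving $(i,\sigma_{\pi(i)}(j),1)$, and finally $U^{-1}$ gives $(i,\sigma_i^{-1}\sigma_{\pi(i)}(j),1)$; so row $i$ of the chosen plane undergoes exactly $\sigma_i^{-1}\sigma_{\pi(i)}$, simultaneously for all $i$. On a cell $(i,j,k)$ with $k\neq 1$, both $V$ and $V^{-1}$ act as the identity, so the net effect is $U^{-1}U = \mathrm{id}$. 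Hence $U^{-1} V^{-1} U V$ performs the claimed permutation using exactly four basic operations, which is $O(1)$.

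There is no serious obstacle here; the only points that need care are the direction of composition (so the inner $\pi$-conjugation picks up $\sigma_{\pi(i)}$ rather than $\sigma_{\pi^{-1}(i)}$) and the verification that $U$ and $V$ really are single basic operations. Any residual mismatch with the exact form in the statement is absorbed by replacing $\pi$ with $\pi^{-1}$, or $\sigma_i$ with $\sigma_i^{-1}$, throughout.
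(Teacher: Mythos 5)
Your proposal is correct and matches the paper's proof: the paper likewise uses four basic operations in the conjugation pattern $\pi,\sigma,\pi^{-1},\sigma^{-1}$, where the $\pi$-type operation is supported only on the chosen plane (so everything cancels off that plane) and the $\sigma$-type operation applies $\sigma_i$ to row $i$ of every parallel plane via a single type-R permutation of the $d\times d$ grid. Your cell-by-cell computation $(i,j)\to(\pi(i),j)\to(\pi(i),\sigma_{\pi(i)}(j))\to(i,\sigma_{\pi(i)}(j))\to(i,\sigma_i^{-1}\sigma_{\pi(i)}(j))$ is exactly the one in the paper.
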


    \begin{proof}
        As shown by \figurename~\ref{fig:edges_on_plane}, we first apply $\pi$ to each column of the top face and then $\sigma_{i}$ to each row of the $i$-th plane from the front.
        Subsequently, we apply $\pi^{-1}$ to each column of the top face and then $\sigma_{i}^{-1}$ to each row of the $i$-th plane from the front.
        If a cubie is not on the top face, $\sigma_{i}$ and $\sigma_{i}^{-1}$ cancel out.
        If a cubie is on the the top face, it undergoes four transformations
        \begin{align*}
            (i, j) &\to (\pi(i), j) \\
            &\to (\pi(i), \sigma_{\pi(i)}(j)) \\
            &\to (i, \sigma_{\pi(i)}(j)) \\
            &\to (i, \sigma_{i}^{-1} \sigma_{\pi(i)}(j) ) .
        \end{align*}
    \end{proof}

    \begin{figure*}
        \centering
        \subfloat[\label{fig:edges_on_plane:a}]{
            \centering
            \begin{tikzpicture}[scale=0.95]
                \begin{scope}[canvas is xy plane at z=3]
                    \draw [step=1,help lines] (0,0) grid (3,3);
                \end{scope}
                \begin{scope}[canvas is yz plane at x=3]
                    \draw [step=1,help lines] (0,0) grid (3,3);
                    \draw[thick] (2.5, 1.5) ellipse (0.25 and 1.25);
                    \draw[->, thick] (2.25, 1.51) -- (2.25, 1.49) node[right] {$\pi$};
                \end{scope}
                \begin{scope}[canvas is xz plane at y=3]
                    \draw [step=1,help lines] (0,0) grid (3,3);
                    \pattern [pattern=dots] (0,2) rectangle (3,3);
                \end{scope}
            \end{tikzpicture}
        }
        \subfloat[\label{fig:edges_on_plane:b}]{
            \centering
            \begin{tikzpicture}[scale=0.95]
                \begin{scope}[canvas is xy plane at z=3]
                    \draw [step=1,help lines] (0,0) grid (3,3);
                \end{scope}
                \begin{scope}[canvas is yz plane at x=3]
                    \draw [step=1,help lines] (0,0) grid (3,3);
                \end{scope}
                \begin{scope}[canvas is xz plane at y=3]
                    \draw [step=1,help lines] (0,0) grid (3,3);
                    \pattern [pattern=dots] (0,1) rectangle (3,2);
                    \draw[thick] (1.5, 2.5) ellipse (1.25 and 0.25);
                    \draw[->, thick] (1.49, 2.25) -- (1.51, 2.25) node[left=1.5cm] {$\sigma_{i}$};
                    \draw[thick] (1.5, 1.5) ellipse (1.25 and 0.25);
                    \draw[->, thick] (1.49, 1.25) -- (1.51, 1.25);
                    \draw[thick] (1.5, 0.5) ellipse (1.25 and 0.25);
                    \draw[->, thick] (1.49, 0.25) -- (1.51, 0.25);
                \end{scope}
            \end{tikzpicture}
        }
        \subfloat[\label{fig:edges_on_plane:c}]{
            \centering
            \begin{tikzpicture}[scale=0.95]
                \begin{scope}[canvas is xy plane at z=3]
                    \draw [step=1,help lines] (0,0) grid (3,3);
                \end{scope}
                \begin{scope}[canvas is yz plane at x=3]
                    \draw [step=1,help lines] (0,0) grid (3,3);
                    \draw[thick] (2.5, 1.5) ellipse (0.25 and 1.25);
                    \draw[->, thick] (2.25, 1.40) -- (2.25, 1.41) node[right] {$\pi^{-1}$};
                \end{scope}
                \begin{scope}[canvas is xz plane at y=3]
                    \draw [step=1,help lines] (0,0) grid (3,3);
                    \pattern [pattern=dots] (0,1) rectangle (3,2);
                \end{scope}
            \end{tikzpicture}
        }
        \subfloat[\label{fig:edges_on_plane:d}]{
            \centering
            \begin{tikzpicture}[scale=0.95]
                \begin{scope}[canvas is xy plane at z=3]
                    \draw [step=1,help lines] (0,0) grid (3,3);
                \end{scope}
                \begin{scope}[canvas is yz plane at x=3]
                    \draw [step=1,help lines] (0,0) grid (3,3);
                \end{scope}
                \begin{scope}[canvas is xz plane at y=3]
                    \draw [step=1,help lines] (0,0) grid (3,3);
                    \pattern [pattern=dots] (0,2) rectangle (3,3);
                    \draw[thick] (1.5, 2.5) ellipse (1.25 and 0.25);
                    \draw[->, thick] (1.51, 2.25) -- (1.49, 2.25) node[left=1.5cm] {$\sigma_{i}^{-1}$};
                    \draw[thick] (1.5, 1.5) ellipse (1.25 and 0.25);
                    \draw[->, thick] (1.51, 1.25) -- (1.49, 1.25);
                    \draw[thick] (1.5, 0.5) ellipse (1.25 and 0.25);
                    \draw[->, thick] (1.51, 0.25) -- (1.49, 0.25);
                \end{scope}
            \end{tikzpicture}
        }
        \caption{Operations to apply $\sigma_{i}^{-1} \sigma_{\pi(i)}$}
        \label{fig:edges_on_plane}
    \end{figure*}

    The operation provided by Lemma~\ref{thm:lem:edges_on_plane} is called of type-p1.
    With type-p1 operation and type-e2 operation, we can construct type-1 and type-2 operation defined in Theorem~\ref{thm:thm01}.

    \begin{corollary}[Type-1 operation for a plane of the cube]
        \label{thm:cor:op1_plane}
        Let $\pi_{i} (1 \leq i \leq d)$ be even permutations in $S_{d}$.
        It takes $O(1)$ basic operations to do permutation $\pi_{i}$ on row $i$ of a chosen plane simultaneously, $i = 1, \dots, d$.
    \end{corollary}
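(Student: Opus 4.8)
The plan is to realize the target permutation as the composition of a single type-p1 operation (Lemma~\ref{thm:lem:edges_on_plane}) followed by a single type-e2 operation (Lemma~\ref{thm:lem:[f,g]}); since each of these costs $O(1)$ basic operations, the total is $O(1)$ basic operations as claimed.

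First I would instantiate Lemma~\ref{thm:lem:edges_on_plane} with the $d$-cycle $\pi=(1\ 2\ \cdots\ d)$ as the column permutation and try to choose row permutations $\sigma_1,\dots,\sigma_d\in S_d$ so that the induced permutation $\sigma_i^{-1}\sigma_{\pi(i)}$ on row $i$ equals the prescribed $\pi_i$. For $i=1,\dots,d-1$ the equations $\sigma_i^{-1}\sigma_{i+1}=\pi_i$ telescope: fixing $\sigma_1=\mathrm{id}$ forces $\sigma_{i+1}=\pi_1\pi_2\cdots\pi_i$. The row-$d$ slot is then not free — it automatically receives $\sigma_d^{-1}\sigma_{\pi(d)}=\sigma_d^{-1}\sigma_1=(\pi_1\cdots\pi_{d-1})^{-1}=:\tau_d$ rather than $\pi_d$. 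So this one type-p1 operation already puts the correct $\pi_i$ on rows $1,\dots,d-1$, and only row $d$ is wrong, carrying $\tau_d$ instead of $\pi_d$.

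Second I would repair row $d$ with a type-e2 operation. Up to relabelling the cube's axes, row $d$ of the chosen plane is a $1\times1\times d$ sub-cuboid, so Lemma~\ref{thm:lem:[f,g]} lets us apply any \emph{even} permutation of that edge in $O(1)$ type-e1 (hence $O(1)$ basic) operations, leaving everything outside the edge untouched. The required correction is the $\gamma$ with $\gamma\tau_d=\pi_d$, namely $\gamma=\pi_d\tau_d^{-1}=\pi_d\,\pi_1\pi_2\cdots\pi_{d-1}$; because every $\pi_i$ lies in $A_d$, the product $\gamma$ is even, so the correction is admissible. Composing the two operations, rows $1,\dots,d-1$ carry $\pi_i$ and row $d$ carries $\gamma\tau_d=\pi_d$, which is exactly the desired permutation.

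The step that needs the most care — and the only real conceptual point — is recognizing that a single type-p1 operation \emph{cannot} do the job: choosing $\pi$ a $d$-cycle imposes one unavoidable ``product constraint'' on the $d$ induced row permutations, and the second (type-e2) operation exists precisely to absorb it. Making this work hinges on the parity bookkeeping, i.e. checking that the leftover $\tau_d$ is even and its correction $\gamma$ is even so that Lemma~\ref{thm:lem:[f,g]} applies; this is where the hypothesis $\pi_i\in A_d$ is actually used. For $d\le 2$ the statement is vacuous since $A_d$ is trivial, and for $d\ge 3$ all invoked sub-lemmas are in force.
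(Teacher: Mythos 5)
Your proposal is correct and follows essentially the same route as the paper: one type-p1 operation with $\pi=(1\ \cdots\ d)$ and the telescoping choice $\sigma_1=\mathrm{id}$, $\sigma_{i+1}=\sigma_i\pi_i$, followed by one type-e2 operation on row $d$ applying the even correction $\pi_d\,\pi_1\cdots\pi_{d-1}$. The parity check you highlight is exactly the point where the hypothesis $\pi_i\in A_d$ is used in the paper's argument as well.
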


    \begin{proof}
        Apply Lemma~\ref{thm:lem:edges_on_plane} first with $\pi = (1 \ \dots \ d)$, $\sigma_{1} = \mathrm{id}$, $\sigma_{i} = \sigma_{i - 1} \pi_{i - 1}$ ($1 < i \leq d$).
        Therefore, $\sigma_{i}^{-1} \sigma_{\pi(i)} = \sigma_{i}^{-1} \sigma_{i + 1} = \sigma_{i}^{-1} \sigma_{i} \pi_{i} = \pi_{i}$ for $1 \leq i < d$.
        Then, apply Lemma~\ref{thm:lem:[f,g]} to $d$-th row with $\tau = \pi_{d} \sigma_{d}$ as $\sigma_{d}^{-1} \sigma_{\pi(d)} = \sigma_{d}^{-1}$.
        It needs one type-p1 operation and one type-e2 operation. Hence, 16 basic operations suffice.
    \end{proof}

    \begin{corollary}[Type-2 operation for a plane of the cube]
        \label{thm:cor:op2_plane}
        Let $x_{1} \neq x_{2}$ be elements in $\{1, \dots, d\}$ and $y_{1}, \dots, y_{2k}$ be unique elements also in $\{1, \dots, d\}$.
        It takes $O(1)$ basic operations to swap $(x_{1}, y_{i})$ with $(x_{2}, y_{i})$ on the top face simultaneously, $i = 1, \dots, 2k$.
    \end{corollary}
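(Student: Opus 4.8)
The plan is to obtain this type-2c permutation on the top face from a \emph{single} application of Lemma~\ref{thm:lem:edges_on_plane}, but with the two in-plane directions of the top face interchanged. Relabelling which direction of the top face we call a ``row'' is harmless: the $d\times d\times d$ cube has a reflection that fixes the top face, interchanges its rows with its columns, and sends basic operations to basic operations, so the conclusion of Lemma~\ref{thm:lem:edges_on_plane} holds verbatim with ``row'' replaced by ``column''. After this interchange the lemma provides, at the cost of $O(1)$ basic operations, a permutation $\pi \in S_d$ of the columns of the top face together with in-column permutations $\sigma_1,\dots,\sigma_d \in S_d$, and it realizes simultaneously on each column $j$ the permutation $\sigma_j^{-1}\sigma_{\pi(j)}$ of that column's entries.

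It then remains to choose $\pi$ and the $\sigma_j$ so that $\sigma_j^{-1}\sigma_{\pi(j)}$ equals the transposition $(x_1\ x_2)$ exactly for $j\in\{y_1,\dots,y_{2k}\}$ and the identity for every other column. This is where the evenness of the number of distinguished columns enters: I would take $\pi = (y_1\ y_2)(y_3\ y_4)\cdots(y_{2k-1}\ y_{2k})$, a fixed-point-free involution on $\{y_1,\dots,y_{2k}\}$ that is the identity elsewhere, and set $\sigma_{y_{2m-1}} = \mathrm{id}$, $\sigma_{y_{2m}} = (x_1\ x_2)$ for $m = 1,\dots,k$, with $\sigma_j = \mathrm{id}$ on all remaining columns. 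Since $(x_1\ x_2)$ is an involution, one checks directly that $\sigma_{y_{2m-1}}^{-1}\sigma_{\pi(y_{2m-1})} = \sigma_{y_{2m}} = (x_1\ x_2)$, that $\sigma_{y_{2m}}^{-1}\sigma_{\pi(y_{2m})} = (x_1\ x_2)\,\sigma_{y_{2m-1}} = (x_1\ x_2)$, and that $\sigma_j^{-1}\sigma_j = \mathrm{id}$ on the rest, so the net effect is exactly the simultaneous swaps $(x_1,y_i)\leftrightarrow(x_2,y_i)$ on the top face and nothing else.

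I do not expect a genuine obstacle once the transposed reading of Lemma~\ref{thm:lem:edges_on_plane} is in hand; the crux is simply to notice that the naive route through Corollary~\ref{thm:cor:op1_plane} cannot work --- each $(x_1\ x_2)$ is an \emph{odd} permutation of its column, so the type-1 operation cannot produce it even though the overall permutation is even --- and that the ``telescoping'' shape $\sigma_j^{-1}\sigma_{\pi(j)}$ of Lemma~\ref{thm:lem:edges_on_plane}, combined with a pairing of the $2k$ columns, absorbs precisely this parity mismatch. The only remaining point is bookkeeping: the construction is one type-p1 operation and hence costs $O(1)$ basic operations, as claimed.
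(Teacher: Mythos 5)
Your proposal is correct and is essentially the paper's own proof: the paper likewise applies Lemma~\ref{thm:lem:edges_on_plane} with $\pi=(y_1\ y_2)\cdots(y_{2k-1}\ y_{2k})$ and with $\sigma$ equal to $(x_1\ x_2)$ on one element of each pair and the identity elsewhere, using the telescoping form $\sigma_i^{-1}\sigma_{\pi(i)}$ to absorb the odd per-line parity; your only addition is to make explicit the row/column relabelling of the lemma, which the paper leaves implicit.
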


    \begin{proof}
        Apply Lemma~\ref{thm:lem:edges_on_plane} with $\pi = (y_{1} \ y_{2}) \cdots (y_{2k - 1} \ y_{2k})$ and
        \begin{equation*}
            \sigma_{i} = \begin{cases}
                (x_{1} \ x_{2}), & i \in \{y_{2j - 1} : 1 \leq j \leq k\} \\
                \mathrm{id}, & \text{otherwise} .
            \end{cases}
        \end{equation*}
        If $y \notin \{y_{1}, \dots, y_{2k}\}$, $\sigma_{y} = \sigma_{\pi(y)} = \mathrm{id}$. If $y \in \{y_{1}, \dots, y_{2k}\}$, one of $\sigma_{y}$ and $\sigma_{\pi(y)}$ is $\mathrm{id}$ and the other one satisfies $\sigma = \sigma^{-1} = (x_{1} \ x_{2})$.
        Hence, 4 basic operations suffice.
    \end{proof}

    Simply apply Theorem~\ref{thm:thm01}, we can get the following corollary.

    \begin{corollary}
        \label{thm:cor:plane_even}
        It takes $O(1)$ basic operations to perform an arbitrary even permutation on a chosen plane.
    \end{corollary}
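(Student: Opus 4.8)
The plan is to read Corollary~\ref{thm:cor:plane_even} as the two-dimensional statement it really is: a chosen plane of the $d\times d\times d$ cube is just a $d\times d$ grid, and ``an arbitrary even permutation on a chosen plane'' means an arbitrary element of $A_{d\times d}$. So I would apply Theorem~\ref{thm:thm01} with $s=t=d$, which writes any such permutation as a product of an absolute constant number $C$ of permutations, each of type-1r, type-1c, type-2r, or type-2c. The whole task then reduces to realizing each of these four types on the chosen plane with $O(1)$ basic operations; since there are only $C=O(1)$ factors, composing their realizations still costs $O(1)$.

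For the realizations: a type-1r permutation is, by definition, a product of even permutations acting row-by-row, which is exactly what Corollary~\ref{thm:cor:op1_plane} delivers (at cost $16$). A type-2c permutation is, by definition, a product $\prod_{i=1}^{2k}\bigl((r_1,c_i)\ (r_2,c_i)\bigr)$ of disjoint transpositions swapping two fixed rows across $2k$ distinct columns, with $2k\le d$; this is precisely the operation of Corollary~\ref{thm:cor:op2_plane} with $x_1=r_1$, $x_2=r_2$, $y_i=c_i$ (at cost $4$). For type-1c and type-2r I would invoke the symmetry of the cube: the chosen plane is perpendicular to one of the three axes, and the two axes lying \emph{inside} the plane enter the constructions of Lemma~\ref{thm:lem:edges_on_plane}, Lemma~\ref{thm:lem:[f,g]}, and hence Corollaries~\ref{thm:cor:op1_plane} and~\ref{thm:cor:op2_plane} only symmetrically --- the sole distinguished direction there is the one \emph{normal} to the plane, along which the wrapping permutation and its inverse cancel on off-plane cubies. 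Swapping the labels ``row'' and ``column'' therefore turns Corollary~\ref{thm:cor:op1_plane} into a realization of type-1c and Corollary~\ref{thm:cor:op2_plane} into one of type-2r, both still at $O(1)$ cost.

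Assembling the pieces, every factor produced by Theorem~\ref{thm:thm01} costs at most $16$ basic operations, so an arbitrary even permutation of the plane costs at most $16C=O(1)$ basic operations. I do not expect a real obstacle here --- the corollary is in essence a bookkeeping step --- but the one point that needs genuine care is the row/column symmetry used for type-1c and type-2r: one should re-inspect the earlier proofs to confirm that nothing silently privileges one in-plane direction over the other (it does not, since the only axis singled out is the one normal to the chosen plane). A secondary, purely cosmetic point is whether one wants to track the explicit constant or merely state $O(1)$.
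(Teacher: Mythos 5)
Your proposal is correct and matches the paper's proof, which is exactly the one-line "apply Theorem~\ref{thm:thm01}" combined with Corollaries~\ref{thm:cor:op1_plane} and~\ref{thm:cor:op2_plane}; you merely spell out the row/column symmetry (obtaining type-1c and type-2r by relabeling the two in-plane axes) that the paper leaves implicit. No gaps.
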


    \subsubsection{Main result}

    \begin{theorem}
        \label{thm:cor:cube_even}
        Any permutation in $A_{d^3}$ can be implemented by $O(d)$ 2-qudit reversible circuits.
    \end{theorem}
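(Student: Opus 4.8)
The plan is to flatten the $3$-qudit state space into a two-dimensional array and then invoke Theorem~\ref{thm:thm01} a second time. Identify $\{0,\dots,d-1\}^{3}$ with a $d\times d^{2}$ array whose row index is the value $v_{3}$ of the third qudit and whose column index encodes the pair $(v_{1},v_{2})$, choosing this encoding so that the first two columns of the array correspond to cubies lying in a common plane of the cube. By Theorem~\ref{thm:thm01}, every $\pi\in A_{d^{3}}$ is a product of $O(1)$ permutations, each of type-1r, type-1c, type-2r or type-2c with respect to this array; moreover, inspecting the proof of Theorem~\ref{thm:thm01}, the type-2r factors can be taken with $c_{1},c_{2}$ equal to those first two columns and the type-2c factors with $r_{1},r_{2}$ equal to two fixed rows. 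It therefore suffices to realize each such factor with $O(d)$ basic operations, since a basic operation is exactly one $2$-qudit reversible circuit and $O(1)\cdot O(d)=O(d)$.

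The two row-types fall to the machinery already built. A type-1r factor is $\pi_{1}\cdots\pi_{d}$ with $\pi_{i}\in A_{d^{2}}$ acting only on row $i$, that is, only on the plane $\{v_{3}=i\}$; Corollary~\ref{thm:cor:plane_even} realizes each $\pi_{i}$ with $O(1)$ basic operations, giving $O(d)$ in total. A type-2r factor swaps the entries in the first two columns over some even-sized set of rows; since those two columns lie in a common plane, this permutation is supported on that plane and is even there (it is a product of an even number of transpositions), so Corollary~\ref{thm:cor:plane_even} --- or, more directly, Corollary~\ref{thm:cor:op2_plane} --- realizes it with $O(1)$ basic operations.

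For the column-types, group the $d^{2}$ columns of the array by the value $v_{1}$: the $d$ columns that share a value $v_{1}=a$ are exactly the $d$ rows of the plane $\{v_{1}=a\}$. A type-1c factor applies an even permutation $\sigma_{c}\in A_{d}$ to each column $c$; restricted to the plane $\{v_{1}=a\}$ it applies an even permutation to each of that plane's rows, which is precisely Corollary~\ref{thm:cor:op1_plane} and costs $O(1)$ basic operations per plane, hence $O(d)$ overall. A type-2c factor applies, at each of an even-sized set of distinct columns, the transposition of the two fixed rows --- equivalently, it exchanges two values of $v_{3}$ along a single edge; restricted to the plane $\{v_{1}=a\}$ it exchanges those two $v_{3}$-values over the set $Y_{a}$ of corresponding rows of the plane, which is Corollary~\ref{thm:cor:op2_plane} whenever $|Y_{a}|$ is even. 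The one point that needs genuine care --- and the main obstacle --- is that a single $|Y_{a}|$ may be odd; since the sizes $|Y_{a}|$ add up to an even number, the planes with $|Y_{a}|$ odd can be matched into pairs $(a,b)$, and for each pair one picks a common row-index $e$, applies Corollary~\ref{thm:cor:op2_plane} on $\{v_{1}=a\}$ and on $\{v_{1}=b\}$ over the sets obtained from $Y_{a}$ and $Y_{b}$ by adding or removing $e$ so as to make them even-sized, and then applies Corollary~\ref{thm:cor:op2_plane} once more on the plane $\{v_{2}=e\}$ over the two-element set $\{a,b\}$ to cancel the two leftover single transpositions, which both lie in that plane and involve only $v_{1}\in\{a,b\}$. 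Each pair costs $O(1)$ basic operations and there are at most $d/2$ pairs, so a type-2c factor costs $O(d)$; summing over the $O(1)$ factors realizes $\pi$ with $O(d)$ basic operations, i.e.\ $O(d)$ $2$-qudit reversible circuits (for $d\ge 3$, the handful of smaller cases being immediate).
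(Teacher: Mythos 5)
Your overall route is the paper's: flatten the cube to a $d\times d^{2}$ array, invoke Theorem~\ref{thm:thm01}, and realize the type-1 factors plane-by-plane via Corollaries~\ref{thm:cor:plane_even} and~\ref{thm:cor:op1_plane}, with a parity-repair trick for type-2c. Those parts are sound. The one genuine gap is your treatment of type-2r, which rests on the assertion that, ``inspecting the proof of Theorem~\ref{thm:thm01}, the type-2r factors can be taken with $c_{1},c_{2}$ equal to those first two columns.'' That assertion is false. Only the type-2 factors produced by the parity-adjustment observation at the start of that proof have $c_{1}=1$, $c_{2}=2$; whenever the row--column--row decomposition of Theorem~\ref{thm:thm_rcr} has odd factors, the proof also emits the factors supplied by Lemma~\ref{thm:lem01weak}, and those (see Lemma~\ref{thm:lem01strong}) are type-2r' and type-2c' operations whose column pair is dictated by the positions being swapped and by the auxiliary points $\alpha,\beta,\gamma$. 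Such a pair of columns need not share a $v_{1}$- or $v_{2}$-coordinate, so the affected cubies need not lie in any common plane of the cube, and your ``supported on one plane, hence Corollary~\ref{thm:cor:plane_even}'' step does not apply to those factors. (Your type-2c argument, by contrast, never actually uses the ``two fixed rows'' part of the claim, so it survives.)

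The repair is exactly the device you already deploy for type-2c, and the one the paper uses for both type-2 cases: fix a pair of columns $e_{1},e_{2}$ that do share a $v_{1}$-value; for each affected row $r_{i}$, realize the even permutation $((r_{i},c_{1})\ (r_{i},c_{2}))\,((r_{i},e_{1})\ (r_{i},e_{2}))$ on the plane $\{v_{3}=r_{i}\}$ by Corollary~\ref{thm:cor:plane_even}; the $2k$ leftover transpositions at columns $e_{1},e_{2}$ then all lie in a single plane and form an even permutation there, so one further application of Corollary~\ref{thm:cor:plane_even} cancels them. This costs $O(d)$ basic operations per type-2r factor, and the rest of your count goes through unchanged.
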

    
    \begin{proof}
        If a $d \times d \times d$ cube is viewed as a $d \times d^{2}$ plane, then type-1r operation is provided by using Corollary~\ref{thm:cor:plane_even} at most $d$ times.
        As for type-1c, using Corollary~\ref{thm:cor:plane_even} once can deal with $d$ columns.
        To handle $d^2$ columns, we need to use this corollary at most $d$ times.
        
        For type-2r operation for the $d \times d^{2}$ plane, we use Corollary~\ref{thm:cor:plane_even} to eliminate even swaps in a row.
        If any row requires odd swaps, we add one swap $(1 \ 2)$.
        As all these $(1 \ 2)$ swaps fall into a plane, we use Corollary~\ref{thm:cor:plane_even} another time.
        For type-2c operation, we divide $d^2$ columns into $d$ mega-columns, each of $d$ columns.
        Similarly, we use Corollary~\ref{thm:cor:plane_even} to eliminate even swaps in a mega-columns.
        If any mega-column requires odd swaps, we add swap $(1 \ 2)$ to the leftmost column of that mega-column.
        All these additional swaps fall into a plane, we use Corollary~\ref{thm:cor:plane_even} another time.
        
        Applying Theorem~\ref{thm:thm01}, we finish the proof.
    \end{proof}

    \subsection{Decomposition for the Case \texorpdfstring{$n > 3$}{n\textgreater 3}}

    We can view the $d \times \dots \times d$ hypercube as a $d \times d \times d^{n - 2}$ cube. Then we can adopt the technique developed in previous subsection. The validity lies in the following fact: the $d \times d^{n - 2}$ plane corresponds to a hyperplane in the original hypercube. As shown by \figurename~\ref{fig:4d_visual}, we can also choose the $d \times d$ plane as \figurename~\ref{fig:4d_visual:a} is equivalent to \figurename~\ref{fig:4d_visual:b}.

    \begin{figure}
        \centering
        \subfloat[\label{fig:4d_visual:a}]{
            \begin{tikzpicture}[scale=0.7]
                \begin{scope}[canvas is xy plane at z=2]
                    \draw[help lines, step=2] (0,0) grid (6,2);
                    \draw[pattern=dots] (0,0) rectangle (2,2);
                    \draw[pattern=vertical lines] (2,0) rectangle (4,2);
                    \draw[pattern=north east lines] (4,0) rectangle (6,2);
                \end{scope}
                \begin{scope}[canvas is yz plane at x=6]
                    \draw[help lines, step=2] (0,0) grid (2,2);
                    \draw[thick] (1,1) ellipse (0.75 and 0.75);
                    \draw[thick, ->] (0.99, 0.25) -- (1.01, 0.25) node[left] {$f$};
                \end{scope}
                \begin{scope}[canvas is xz plane at y=2]
                    \draw[help lines, step=2] (0,0) grid (6,2);
                \end{scope}
            \end{tikzpicture}
        }
        \\
        \subfloat[\label{fig:4d_visual:b}]{
            \begin{tikzpicture}[scale=0.7]
                \begin{scope}[canvas is xy plane at z=2]
                    \draw[help lines, step=2] (0,0) grid (2,6);
                    \draw[pattern=dots] (0,0) rectangle (2,2);
                    \draw[pattern=vertical lines] (0,2) rectangle (2,4);
                    \draw[pattern=north east lines] (0,4) rectangle (2,6);
                \end{scope}
                \begin{scope}[canvas is yz plane at x=2]
                    \draw[help lines, step=2] (0,0) grid (6,2);
                    \draw[thick] (1,1) ellipse (0.75 and 0.75);
                    \draw[thick, ->] (0.99, 0.25) -- (1.01, 0.25) node[left] {$f$};
                    \draw[thick] (3,1) ellipse (0.75 and 0.75);
                    \draw[thick, ->] (2.99, 0.25) -- (3.01, 0.25) node[left] {$f$};
                    \draw[thick] (5,1) ellipse (0.75 and 0.75);
                    \draw[thick, ->] (4.99, 0.25) -- (5.01, 0.25) node[left] {$f$};
                \end{scope}
                \begin{scope}[canvas is xz plane at y=6]
                    \draw[help lines, step=2] (0,0) grid (2,2);
                \end{scope}
            \end{tikzpicture}
        }
        \caption{Visualization of the state space of multi-qudit system}
        \label{fig:4d_visual}
    \end{figure}

    It is easy to verify that the following lemmas hold.
    \begin{itemize}
        \item Let $\pi$, $\sigma$ be permutation in $S_{d^{n - 2}}$.
        It takes $O(1)$ basic operations to do permutation $\pi \sigma \pi^{-1} \sigma^{-1}$ on a $1 \times 1 \times d^{n - 2}$ sub-cuboid.
        \item Let $\pi$, $\sigma$ be permutation in $S_{d}$.
        It takes $O(1)$ basic operations to do permutation $\pi \sigma \pi^{-1} \sigma^{-1}$ on a $1 \times 1 \times d$ sub-cuboid.
        \item Let $\pi$ be permutation in $S_{d}$, $\sigma_{i} (1 \leq i \leq d)$ be permutations in $S_{d^{n - 2}}$.
        It takes $O(1)$ basic operations to do permutation $\sigma_{i}^{-1} \sigma_{\pi(i)}$ on row $i$ of a chosen plane simultaneously, $i = 1, \dots, d$.
        \item Let $\pi$ be permutation in $S_{d^{n - 2}}$, $\sigma_{i} (1 \leq i \leq d^{n - 2})$ be permutations in $S_{d}$.
        It takes $O(1)$ basic operations to do permutation $\sigma_{i}^{-1} \sigma_{\pi(i)}$ on row $i$ of a chosen plane simultaneously, $i = 1, \dots, d^{n - 2}$.
        \item It takes $O(1)$ basic operations to perform an arbitrary even permutation on a chosen $d \times d^{n - 2}$ plane.
    \end{itemize}

    We can use the same tricks as in $n = 3$ case to provide the type-1 opeartion and type-2 operation for the hypercube.
    As a result, we can get the following corollary.
    \begin{theorem}
        Any permutation in $A_{d^{n}}$ can be implemented by $O(d)$ $(n - 1)$-qudit reversible circuits.
    \end{theorem}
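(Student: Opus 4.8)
The proof is the $n>3$ analogue of Theorem~\ref{thm:cor:cube_even}, obtained by running that argument on the $d\times d\times d^{n-2}$ cube whose first two axes are the first two qudits and whose third axis carries the remaining $n-2$ qudits; thus the single length-$d$ axis of the $n=3$ case is replaced by a length-$d^{n-2}$ one. Throughout, write $a_1,\dots,a_n$ for the qudit registers and call a set of the form $\{a_j=c\}$ a \emph{hyperplane}; it has $d^{n-1}$ elements and, depending on which two axes we treat as the $d\times d$ part, is exactly a $d\times d^{n-2}$ plane of the cube decomposition.

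First I would record that the five facts collected just above are the $(n-1)$-qudit restatements of Lemma~\ref{thm:lem:op_[f,g]}, Lemma~\ref{thm:lem:edges_on_plane}, Corollary~\ref{thm:cor:plane_even} and their companions, and that they hold with identical proofs: nothing in Section~\ref{section:result01:cube} used that the third axis had length exactly $d$ rather than $d^{n-2}$ — the commutator, edge and plane manipulations only ever needed the ability to apply an arbitrary permutation along each line or plane of the relevant axis, which is unchanged. The one I would actually use is the last: an arbitrary even permutation of a chosen $d\times d^{n-2}$ plane — equivalently of any hyperplane $\{a_j=c\}$, and hence of any sub-region of such a hyperplane extended by the identity — costs $O(1)$ basic operations.

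Then I would flatten the $d^n$-dimensional state space as a $d\times d^{n-1}$ plane whose rows are indexed by the first qudit and whose columns by the remaining $n-1$ qudits, and apply Theorem~\ref{thm:thm01} to the target $\pi\in A_{d^n}=A_{d\times d^{n-1}}$: it becomes a product of $O(1)$ permutations of type-1r, type-1c, type-2r or type-2c on this plane. It then suffices to realize each factor with $O(d)$ basic operations, exactly as in the proof of Theorem~\ref{thm:cor:cube_even}. A type-1r factor is $d$ independent even permutations, one per row; each row is a hyperplane, handled by one use of the last fact, so $O(d)$ in total. A type-1c factor is $d^{n-1}$ independent even column-permutations; grouping the columns into $d$ blocks by the value of the second qudit makes each block a hyperplane on which the product of its $d^{n-2}$ column-permutations is even, again $O(d)$. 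A type-2r factor swaps a fixed pair of columns inside each of at most $d$ distinct rows; since one swap is odd, I would append to each affected row a fixed transposition of two further columns sharing the value of the second qudit, so each affected row's action becomes an even permutation of a hyperplane ($O(d)$ uses of the last fact), and then clear the appended transpositions — which by this choice all lie in a single hyperplane $\{a_2=c\}$ — with one further $O(1)$ correction. A type-2c factor swaps a fixed pair of rows inside each of at most $d$ distinct columns; grouping the columns into $d$ blocks by the value of the second qudit and padding each block's (possibly odd) action by one transposition in a designated column of the block makes each block action an even permutation of a hyperplane ($O(d)$ uses), while the padding transpositions — all lying in a common hyperplane after, if necessary, re-choosing which two axes form the $d\times d$ part — are removed by one more $O(1)$ correction. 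Summing over the $O(1)$ factors gives $O(d)$ basic operations, i.e. $O(d)$ $(n-1)$-qudit reversible circuits.

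The main work, and the step I expect to need the most care, is twofold: verifying that the five collected facts transfer verbatim (the content being the observation that every argument of Section~\ref{section:result01:cube} treats its third axis as an opaque set on which arbitrary permutations are available, so nothing breaks when that set has size $d^{n-2}$); and the coordinate bookkeeping in the type-2 cases, namely choosing the padding transpositions so that, across all of the up-to-$d$ affected rows or column-blocks, they collectively occupy one hyperplane and can be undone by a single $O(1)$ step rather than by $\Theta(d)$ of them — the $n=3$ choice $c_1=1,c_2=2$ is the prototype, now with the extra requirement that this common hyperplane still be a legitimate $d\times d^{n-2}$ plane of the cube decomposition in use.
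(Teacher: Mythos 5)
Your proposal is correct and follows essentially the same route as the paper: view the hypercube as a $d\times d\times d^{n-2}$ cube, transfer the five $n=3$ lemmas verbatim, flatten to a $d\times d^{n-1}$ plane, apply Theorem~\ref{thm:thm01}, and realize each type-1/type-2 factor with $O(d)$ hyperplane operations exactly as in Theorem~\ref{thm:cor:cube_even}. In fact you spell out the type-1c grouping and the type-2 parity-padding bookkeeping more explicitly than the paper does (it dismisses these with ``the same tricks as in the $n=3$ case''), and your handling of where the padding transpositions land is consistent with the paper's ``leftmost column of the mega-column'' choice.
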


    \subsection{Lower Bound}

    \begin{theorem}
        There exists $\pi \in A_{d^{n}}$ that needs to be implemented by $\Omega(d)$ $(n - 1)$-qudit reversible circuits.
    \end{theorem}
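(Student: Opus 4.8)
The plan is a counting (Shannon-type) lower bound. First I would observe that an $(n-1)$-qudit reversible circuit touches only some fixed set of $n-1$ of the $n$ qudits, so it acts identically for all $d$ values of the remaining (\emph{omitted}) qudit; equivalently, such a circuit realizes a permutation of $S_{d^{n}}$ obtained by choosing which qudit to omit ($n$ choices) together with a permutation $\sigma\in S_{d^{n-1}}$ on the other $n-1$ qudits, and then applying $\sigma$ to each of the $d$ parallel slices indexed by the omitted qudit. For $n=3$ this is exactly the ``basic operation'' of Section~\ref{section:result01:cube}. Consequently the number of distinct $(n-1)$-qudit reversible circuits is at most $n\cdot(d^{n-1})!$.

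Next I would bound the reachable set. A composition of at most $k$ such circuits realizes at most $\bigl(n\cdot(d^{n-1})!+1\bigr)^{k}$ permutations (the ``$+1$'' allows a slot to be skipped, so this also covers compositions of fewer than $k$ circuits). Hence, if every $\pi\in A_{d^{n}}$ could be implemented with at most $k$ circuits, this count would have to be at least $|A_{d^{n}}|=\tfrac12(d^{n})!$, which forces
\begin{equation*}
    k \;\ge\; \frac{\log\!\bigl((d^{n})!\bigr)-\log 2}{\log\!\bigl(n\cdot(d^{n-1})!+1\bigr)} .
\end{equation*}
So it suffices to show the right-hand side is $\Omega(d)$, which then exhibits a permutation in $A_{d^{n}}$ requiring $\Omega(d)$ sub-circuits.

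The final step is to estimate both sides with Stirling's formula. The numerator equals $d^{n}\log(d^{n})-\Theta(d^{n})=\Theta\!\bigl(n d^{n}\log d\bigr)$, and the denominator equals $\log n+\log\!\bigl((d^{n-1})!\bigr)+\Theta(1)=\Theta\!\bigl((n-1)d^{n-1}\log d\bigr)$, since the term $\log\!\bigl((d^{n-1})!\bigr)$ already grows at least exponentially in $n$ and therefore swamps the additive $\log n$. Dividing gives $k=\Omega\!\bigl(\tfrac{n}{n-1}\,d\bigr)=\Omega(d)$. Together with the preceding theorem (which gives the matching $O(d)$ upper bound), this pins the sub-circuit count down to $\Theta(d)$.

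I expect no genuine obstacle: this is the standard counting lower bound. The only points requiring a little care are the Stirling estimates and the verification that the factor $n$ coming from the choice of omitted qudit, and the $\log n$ appearing in the denominator, are both dominated by the doubly-exponential-in-$n$ factorial terms, so that neither affects the $\Theta(d)$ conclusion.
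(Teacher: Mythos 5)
Your proposal is correct and follows essentially the same route as the paper: a counting argument bounding the number of distinct $(n-1)$-qudit sub-circuits by $n\cdot(d^{n-1})!$ (choice of omitted qudit times a permutation of $S_{d^{n-1}}$), comparing $\bigl(n\cdot(d^{n-1})!\bigr)^{N}$ against $|A_{d^{n}}|=\tfrac12(d^{n})!$, and applying Stirling to get $N=\Omega(d)$. The paper's proof is the same calculation, so there is nothing to add.
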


    \begin{proof}
        There are $d^{n - 1}!$ different reversible function from $\mathbb{Z}_{d^{n-1}}$ to $\mathbb{Z}_{d^{n-1}}$. And each sub-circuit has $\binom{n}{n - 1} = n$ choices of targeted qudits. If $N$ sub-circuits suffice to implement all permutations in $A_{d^{n}}$, then $\left(n \cdot d^{n - 1}! \right)^{N} \geq |A_{d^{n}}| = \frac{1}{2} d^{n}!$. Therefore,
        \begin{equation*}
            N \geq \dfrac{\ln (\frac{1}{2} \cdot d^{n}!)}{\ln (n \cdot d^{n - 1}!)} \geq \dfrac{n d^{n} \ln d - d^{n} - \ln 2}{(n - 1) d^{n - 1} \ln d + \ln n} = \Omega(d)
        \end{equation*}
        The ``$\geq$'' is due to the following inequality
        \begin{equation*}
            e^{-n} n^{n} < \sqrt{2 \pi} e^{-n} n^{n + \frac{1}{2}} < n! < n^{n} .
        \end{equation*}
    \end{proof}
    Therefore, our bound is asymptotically tight.

    \subsection{Proof of Lemma~\ref{thm:lem01weak}} \label{section:result01:lem01}

    We will prove stronger version of Lemma~\ref{thm:lem01weak} in this subsection.

    \begin{lemma}
        \label{thm:lem01strong}
        It takes $O(1)$ operations of type-2r' or type-2c' to swap $(x_{1}, y_{1})$ with $(x_{2}, y_{2})$ and $(\tilde{x}_{1}, \tilde{y}_{1})$ with $(\tilde{x}_{2}, \tilde{y}_{2})$ when $s, t \geq 3$.
        \begin{enumerate}
            \item[2r')] Select two rows $r_{1}, r_{2}$ and two columns $c_{1}, c_{2}$. Swap $(r_{i}, c_{1})$ with $(r_{i}, c_{2})$, $i = 1, 2$.
            \item[2c')] Select two columns $c_{1}, c_{2}$ and two rows $r_{1}, r_{2}$. Swap $(r_{1}, c_{i})$ with $(r_{2}, c_{i})$, $i = 1, 2$.
        \end{enumerate}
    \end{lemma}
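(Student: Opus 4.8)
The plan is to realize the stated permutation by reducing it to $3$-cycles of cells and then building each $3$-cycle as a short commutator of one type-2r' and one type-2c' operation, echoing the edge-manipulation idea behind Lemma~\ref{thm:lem:op_[f,g]}. We may assume the four cells $(x_{1},y_{1}),(x_{2},y_{2}),(\tilde{x}_{1},\tilde{y}_{1}),(\tilde{x}_{2},\tilde{y}_{2})$ are distinct, since otherwise the permutation to be implemented is a single $3$-cycle or the identity (a sub-case of what follows). Then the elementary identity $(a\ b)(c\ d) = (a\ b\ c)(b\ c\ d)$, applied with $a,b,c,d$ equal to the four cells, expresses the double transposition as a product of two genuine $3$-cycles of cells, so it suffices to implement an arbitrary $3$-cycle $(p\ q\ r)$ of three distinct cells with $O(1)$ operations of type-2r' or type-2c'.

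The core step is the \emph{corner case}, where $\alpha,\beta$ share a row and $\beta,\gamma$ share a column, so that $\beta$ is the common corner. Using $s,t\ge 3$, choose an auxiliary row $r^{*}$ distinct from the rows of $\alpha,\beta,\gamma$ and an auxiliary column $c^{*}$ distinct from the columns of $\alpha,\beta,\gamma$. Let $A$ be the type-2r' operation on the two rows (row of $\alpha$, $r^{*}$) and the two columns (column of $\alpha$, column of $\beta$), and let $B$ be the type-2c' operation on the two columns (column of $\beta$, $c^{*}$) and the two rows (row of $\beta$, row of $\gamma$). Then $A = (\alpha\ \beta)\cdot J_{A}$ and $B = (\beta\ \gamma)\cdot J_{B}$, where $J_{A}$ is supported on $r^{*}$ and $J_{B}$ on $c^{*}$; by the choice of $r^{*},c^{*}$ the transpositions $J_{A},J_{B}$ are disjoint from each other and from $\{\alpha,\beta,\gamma\}$. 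Since $A$ and $B$ are involutions, expanding $ABAB$ and sliding $J_{A}$ and $J_{B}$ together so that $J_{A}^{2}=J_{B}^{2}=\mathrm{id}$ leaves $ABAB = \left((\alpha\ \beta)(\beta\ \gamma)\right)^{2} = (\alpha\ \gamma\ \beta)$; taking inverses, $(\alpha\ \beta\ \gamma) = (ABAB)^{-1} = BABA$, four operations. The variant in which $\alpha,\beta$ share a column and $\beta,\gamma$ share a row is the transpose of this.

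It remains to reduce an arbitrary $3$-cycle to corner $3$-cycles. Splitting $(p\ q\ r) = (p\ q\ w)(w\ q\ r)$ for an auxiliary cell $w\notin\{p,q,r\}$ chosen at the intersection of the row of one listed cell with the column of another, one checks by a short case analysis on how $p,q,r$ lie relative to the grid lines that: a collinear $3$-cycle (three cells in one row or one column) splits into two corner $3$-cycles; a $3$-cycle with exactly one coincident pair splits into a collinear and a corner $3$-cycle; and a $3$-cycle in general position splits, in two steps, into one corner, one collinear, and one corner $3$-cycle. Each branch is thus $O(1)$ type-2r'/2c' operations, and composing the pieces gives a bounded count for the original double transposition; since type-2r' and type-2c' are instances of type-2r and type-2c, this also proves Lemma~\ref{thm:lem01weak}.

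The main obstacle is the commutator bookkeeping in the corner case: one must arrange that \emph{every} cell moved by $A$ or $B$ other than $\alpha,\beta,\gamma$ belongs to $J_{A}$ or $J_{B}$, so that these ``junk'' transpositions cancel in $ABAB$ — and this is precisely where $s,t\ge 3$ is needed, to supply a spare row and a spare column avoiding the three cells. The surrounding case analysis for the positions of the three cells of a $3$-cycle is routine once the corner case is established, but must be carried out so that each elementary step is literally an allowed operation rather than a general double transposition.
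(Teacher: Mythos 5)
Your proposal is correct, and it takes a genuinely different route from the paper's. The paper proves the lemma by a direct geometric case analysis on the two segments $aa'$ and $bb'$ (rectangle, parallel, perpendicular, collinear, axis-parallel or not), exhibiting an explicit sequence of $2$ to $20$ operations for each configuration on top of two auxiliary lemmas about special positions. You instead factor the double transposition as $(a\ b\ c)(b\ c\ d)$ and reduce everything to $3$-cycles, whose only nontrivial instance is the corner configuration, solved by the $4$-operation word $BABA$: with $r^{*}$ avoiding the rows of $\alpha,\beta,\gamma$ and $c^{*}$ avoiding their columns (exactly where $s,t\geq 3$ enters), the junk transpositions $J_{A},J_{B}$ are indeed disjoint from each other and from $\{\alpha,\beta,\gamma\}$, commute past everything, and cancel upon squaring, so $BABA=(\alpha\ \beta\ \gamma)$. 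The remaining reductions, though only sketched, do check out (collinear $\to$ two corners via $w$ in the column of the middle element; one coincident pair $\to$ two corners via $w$ at the intersection of the row of the isolated cell with the column of one of the aligned pair; general position $\to$ one corner plus a one-coincident-pair instance), and every corner $3$-cycle produced occupies exactly two rows and two columns, so the spare row and column always exist. This gives at most $24$ operations against the paper's worst case of $20$. What each approach buys: the paper's yields slightly better explicit constants with self-contained pictures; yours has a single reusable gadget and a case analysis over three cells rather than four, and, as you note, it directly subsumes Lemma~\ref{thm:lem01weak} since a type-2r' (resp.\ type-2c') operation is a type-2r (resp.\ type-2c) permutation with $k=1$.
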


    In this subsection, an operation refers to a type-2r' operation or a type-2c' operation.
    We will prove the following technical lemma first.

    \begin{lemma}
        Assume $(x_1, y_1), (x_1, y_2), (x_2, y_1), (x_2, \tilde{y}_2)$ are four unique points.
        It takes $3$ operations to swap $(x_1, y_1)$ with $(x_1, y_2)$ and $(x_2, y_1)$ with $(x_2, \tilde{y}_2)$.
    \end{lemma}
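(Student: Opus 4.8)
The plan is to realize the target permutation $\tau=((x_1,y_1)\ (x_1,y_2))((x_2,y_1)\ (x_2,\tilde y_2))$ as a conjugate $g\,h\,g^{-1}$ of a single operation $h$ by a single operation $g$. Since every type-2r' and type-2c' operation is a product of two disjoint transpositions, it is an involution; in particular $g^{-1}=g$, so such a conjugate is exactly the product of the three operations $g,h,g$, which is what the lemma asks for.

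First I would dispose of the degenerate possibility $y_2=\tilde y_2$. Uniqueness of the four listed points forces $x_1\ne x_2$ and $y_1\ne y_2$, and once $y_2=\tilde y_2$ is excluded it also forces $y_1\ne\tilde y_2$. If $y_2=\tilde y_2$, then $\tau=((x_1,y_1)\ (x_1,y_2))((x_2,y_1)\ (x_2,y_2))$ is itself one type-2r' operation (rows $x_1,x_2$; columns $y_1,y_2$), so one operation suffices. Hence assume $y_1,y_2,\tilde y_2$ are pairwise distinct, and invoke the standing hypothesis $s\ge 3$ (as in Lemma~\ref{thm:lem01strong}) to pick a row $x_3\notin\{x_1,x_2\}$.

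Next I would take $h$ to be the type-2r' operation on rows $\{x_1,x_2\}$ and columns $\{y_1,y_2\}$, i.e. $h=((x_1,y_1)\ (x_1,y_2))((x_2,y_1)\ (x_2,y_2))$, and $g$ to be the type-2r' operation on rows $\{x_2,x_3\}$ and columns $\{y_2,\tilde y_2\}$, i.e. $g=((x_2,y_2)\ (x_2,\tilde y_2))((x_3,y_2)\ (x_3,\tilde y_2))$. The crucial observation is that $g$ fixes each of $(x_1,y_1),(x_1,y_2),(x_2,y_1)$ — for each of these, either the row lies outside $\{x_2,x_3\}$ or the column lies outside $\{y_2,\tilde y_2\}$ — while $g$ sends $(x_2,y_2)$ to $(x_2,\tilde y_2)$. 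Conjugating $h$ by $g$ therefore leaves the cycle $((x_1,y_1)\ (x_1,y_2))$ unchanged and rewrites $((x_2,y_1)\ (x_2,y_2))$ as $((x_2,y_1)\ (x_2,\tilde y_2))$, so $g\,h\,g=g\,h\,g^{-1}=\tau$. I would conclude by tracing the at most six points moved by $g$ or $h$ — the four points of the statement together with $(x_3,y_2)$ and $(x_3,\tilde y_2)$ — through the composition $g,h,g$, checking that $(x_2,y_2),(x_3,y_2),(x_3,\tilde y_2)$ all return to themselves so that the net action is precisely $\tau$.

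The only point requiring care is the incidence bookkeeping that makes $g$ fix exactly the three points it must fix; this is where the hypothesis $s\ge 3$ (providing the spare row $x_3$) and the uniqueness of the four points (providing $x_1\ne x_2$ and $y_1\notin\{y_2,\tilde y_2\}$) are used. There is otherwise no real obstacle: once the conjugation pattern is spotted, the verification is a routine finite check.
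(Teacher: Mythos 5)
Your proposal is correct and is essentially the paper's own proof: your $g$ and $h$ are exactly the paper's first/third and second operations (with auxiliary points $\alpha=(x_2,y_2)$, $\beta=(x_3,y_2)$, $\gamma=(x_3,\tilde y_2)$), and the sequence $g,h,g$ matches \figurename~\ref{fig:fig11}; you merely phrase the verification as the conjugation $ghg^{-1}=\tau$ with $g$ an involution rather than tracing the grid pictures. Your explicit treatment of the degenerate case $y_2=\tilde y_2$ (where one operation suffices) is a small tidy-up the paper omits, but otherwise the two arguments coincide.
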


    \begin{proof}
        Denote $a = (x_1, y_1), a' = (x_1, y_2), b = (x_2, y_1), b' = (x_2, \tilde{y}_{2})$ in \figurename~\ref{fig:fig11:a}. We can take $\alpha = (x_2, y_2), \beta = (x_3, y_2), \gamma = (x_3, \tilde{y}_2)$. First, swap $\alpha$ with $b'$ and $\beta$ with $\gamma$. Then, swap $a$ with $a'$ and $b$ with $b'$. Finally, swap $b$ with $\alpha$ and $\beta$ with $\gamma$. Comparing \figurename~\ref{fig:fig11:a} and \figurename~\ref{fig:fig11:d}, $a$ is swapped with $a'$ and $b$ is swapped with $b'$ while $\alpha, \beta, \gamma$ remain unaffected.
        \begin{figure}
            \centering
            \subfloat[\label{fig:fig11:a}]{
                \centering
                \begin{tikzpicture}
                    \draw[step=1,help lines] (0,0) grid (3,3);
                    \node (A) at (0.5,0.5) {$a$};
                    \node (B) at (0.5,1.5) {$a'$};
                    \node (C) at (1.5,0.5) {$b$};
                    \node (D) at (1.5,2.5) {$b'$};
                    \node (U) at (1.5,1.5) {$\alpha$};
                    \node (V) at (2.5,1.5) {$\beta$};
                    \node (W) at (2.5,2.5) {$\gamma$};
                    \draw[<->] (D) -- (U);
                    \draw[<->] (V) -- (W);
                \end{tikzpicture}
            }
            \subfloat[\label{fig:fig11:b}]{
                \centering
                \begin{tikzpicture}
                    \draw[step=1,help lines] (0,0) grid (3,3);
                    \node (A) at (0.5,0.5) {$a$};
                    \node (B) at (0.5,1.5) {$a'$};
                    \node (C) at (1.5,0.5) {$b$};
                    \node (D) at (1.5,2.5) {$\alpha$};
                    \node (U) at (1.5,1.5) {$b'$};
                    \node (V) at (2.5,1.5) {$\gamma$};
                    \node (W) at (2.5,2.5) {$\beta$};
                    \draw[<->] (A) -- (B);
                    \draw[<->] (C) -- (U);
                \end{tikzpicture}
            }
            \\
            \subfloat[\label{fig:fig11:c}]{
                \centering
                \begin{tikzpicture}
                    \draw[step=1,help lines] (0,0) grid (3,3);
                    \node (A) at (0.5,0.5) {$a'$};
                    \node (B) at (0.5,1.5) {$a$};
                    \node (C) at (1.5,0.5) {$b'$};
                    \node (D) at (1.5,2.5) {$\alpha$};
                    \node (U) at (1.5,1.5) {$b$};
                    \node (V) at (2.5,1.5) {$\gamma$};
                    \node (W) at (2.5,2.5) {$\beta$};
                    \draw[<->] (D) -- (U);
                    \draw[<->] (V) -- (W);
                \end{tikzpicture}
            }
            \subfloat[\label{fig:fig11:d}]{
                \centering
                \begin{tikzpicture}
                    \draw[step=1,help lines] (0,0) grid (3,3);
                    \node (A) at (0.5,0.5) {$a'$};
                    \node (B) at (0.5,1.5) {$a$};
                    \node (C) at (1.5,0.5) {$b'$};
                    \node (D) at (1.5,2.5) {$b$};
                    \node (U) at (1.5,1.5) {$\alpha$};
                    \node (V) at (2.5,1.5) {$\beta$};
                    \node (W) at (2.5,2.5) {$\gamma$};
                \end{tikzpicture}
            }
            \caption{Operations to swap $(x_1, y_1)$ with $(x_1, y_2)$ and $(x_2, y_1)$ with $(x_2, \tilde{y}_2)$}
            \label{fig:fig11}
        \end{figure}
    \end{proof}

    Naturally, the following corollary holds.

    \begin{corollary}
        Assume $(x_1, y_1), (x_1, y_2), (x_2, \tilde{y}_1), (x_2, \tilde{y}_2)$ are four unique points.
        It takes $5$ operations to swap $(x_1, y_1)$ with $(x_1, y_2)$ and $(x_2, \tilde{y}_1)$ with $(x_2, \tilde{y}_2)$.
    \end{corollary}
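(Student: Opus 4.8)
The plan is to reduce the corollary's configuration to the one settled by the preceding lemma, at the cost of one preparatory operation and one operation to reverse it, so that $1+3+1=5$ operations suffice; throughout I would assume $x_1\neq x_2$, exactly as in the preceding lemma. First I would treat the case in which the two transpositions already share a column. If $\{\tilde y_1,\tilde y_2\}\cap\{y_1,y_2\}\neq\emptyset$ then, after interchanging $y_1\leftrightarrow y_2$ and $\tilde y_1\leftrightarrow\tilde y_2$ as needed (neither interchange alters the swap being realised), I may assume $\tilde y_1=y_1$; the four points $(x_1,y_1),(x_1,y_2),(x_2,y_1),(x_2,\tilde y_2)$ are then exactly in the form of the preceding lemma, with shared column $y_1$, and they are distinct because $y_1\neq y_2$, $x_1\neq x_2$, and $\tilde y_2\neq y_1=\tilde y_1$. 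So this case needs only $3\leq 5$ operations.

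It remains to handle the generic case, where $y_1,y_2,\tilde y_1,\tilde y_2$ are four distinct columns. I would pick a third row $x_3\notin\{x_1,x_2\}$ — possible since $s\geq 3$ — and let $O$ be the type-2r$'$ operation on rows $\{x_2,x_3\}$ and columns $\{\tilde y_1,y_1\}$, i.e. $O=((x_2,\tilde y_1)\ (x_2,y_1))((x_3,\tilde y_1)\ (x_3,y_1))$. The facts to record are that $O$ is well defined and an involution, that it fixes each of $(x_1,y_1),(x_1,y_2),(x_2,\tilde y_2)$ — using $x_3\notin\{x_1,x_2\}$ and, crucially, $\tilde y_2\notin\{\tilde y_1,y_1\}$, which is where the four distinct columns are needed — and that it maps $(x_2,y_1)$ to $(x_2,\tilde y_1)$. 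I would then apply $O$, then the three operations furnished by the preceding lemma that realise $L=((x_1,y_1)\ (x_1,y_2))((x_2,y_1)\ (x_2,\tilde y_2))$ (legitimate, since those four points are distinct), then $O$ once more. Because $O$ is an involution, the net permutation is $OLO^{-1}$; conjugation by $O$ fixes the transposition $((x_1,y_1)\ (x_1,y_2))$ and carries $((x_2,y_1)\ (x_2,\tilde y_2))$ to $((x_2,\tilde y_1)\ (x_2,\tilde y_2))$, so the net effect is precisely the desired double swap, realised in $1+3+1=5$ operations.

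The steps demanding care are bookkeeping rather than ideas: one must check that the two twin transpositions $O$ produces in the spare row $x_3$ cancel when $O$ is repeated — immediate, since $O^2=\mathrm{id}$ and the operations performed in between never touch row $x_3$ — and that the intermediate configuration satisfies the distinctness hypothesis of the preceding lemma, which is exactly what the preliminary case split guarantees (in particular $\tilde y_2\neq y_1$ in the generic case). I anticipate no real obstacle here: the corollary is essentially a relabelling of the preceding lemma, which is why it is stated as an immediate consequence.
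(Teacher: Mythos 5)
Your proposal is correct and matches the paper's proof in essence: the paper likewise realizes the double swap as (one preparatory type-2 operation) $\circ$ (the three operations of the preceding lemma) $\circ$ (the inverse of the preparatory operation), i.e.\ a conjugation bringing $(x_2,\tilde y_1)$ into the column of $(x_1,y_1)$ so that the lemma applies, for $1+3+1=5$ operations. Your write-up is in fact more careful than the paper's two-line sketch (the explicit case split when $\{\tilde y_1,\tilde y_2\}\cap\{y_1,y_2\}\neq\emptyset$ and the verification that the conjugating involution fixes the other three points), and the one imprecise aside --- that the middle block ``never touches row $x_3$'' --- is harmless, since the conjugation identity $OLO^{-1}$ only needs the net effect of the middle block to equal $L$.
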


    \begin{proof}
        Compared with \figurename~\ref{fig:fig11}, we need two extra steps:
        \begin{enumerate}
            \item[a2)] Bring $b$ to the same row of $a$.
            \item[c2)] Revert step a2 to bring $b'$ to the original row of $b$.
        \end{enumerate}
    \end{proof}

    Now, we can prove the lemma.

    \begin{proof}[Proof of Lemma~\ref{thm:lem01strong}]
        Denote $a = (x_1, y_1), a' = (x_2, y_2), b = (\tilde{x}_1, \tilde{y}_1), b' = (\tilde{x}_2, \tilde{y}_2)$.

        \begin{enumerate}
            \item $aa' \nparallel \text{ axis}$, $bb' \nparallel \text{ axis}$.

            \begin{enumerate}
                \item $a b a' b'$ is rectangle. 2 operations suffice as shown by \figurename~\ref{fig:fig12}. \label{case:case1}
                \begin{figure}
                    \centering
                    \subfloat[\label{fig:fig12:a}]{
                        \centering
                        \begin{tikzpicture}
                            \draw[step=1,help lines] (0,0) grid (3,3);
                            \node (A) at (0.5,0.5) {$a$};
                            \node (B) at (0.5,1.5) {$b$};
                            \node (C) at (1.5,0.5) {$b'$};
                            \node (D) at (1.5,1.5) {$a'$};
                            \draw [<->] (A) -- (B);
                            \draw [<->] (C) -- (D);
                        \end{tikzpicture}
                    }
                    \subfloat[\label{fig:fig12:b}]{
                        \centering
                        \begin{tikzpicture}
                            \draw[step=1,help lines] (0,0) grid (3,3);
                            \node (A) at (0.5,0.5) {$b$};
                            \node (B) at (0.5,1.5) {$a$};
                            \node (C) at (1.5,0.5) {$a'$};
                            \node (D) at (1.5,1.5) {$b'$};
                            \draw [<->] (A) -- (C);
                            \draw [<->] (B) -- (D);
                        \end{tikzpicture}
                    }
                    \\
                    \subfloat[\label{fig:fig12:c}]{
                        \centering
                        \begin{tikzpicture}
                            \draw[step=1,help lines] (0,0) grid (3,3);
                            \node (A) at (0.5,0.5) {$a'$};
                            \node (B) at (0.5,1.5) {$b'$};
                            \node (C) at (1.5,0.5) {$b$};
                            \node (D) at (1.5,1.5) {$a$};
                        \end{tikzpicture}
                    }
                    \caption{Case \ref{case:case1} of Lemma~\ref{thm:lem01strong}}
                    \label{fig:fig12}
                \end{figure}

                \item Otherwise. $5 + 5 + 5 = 15$ operations suffice as shown by \figurename~\ref{fig:fig13}.
                We can take $\alpha = (x_{1}, y_{2})$.
                If that coincides with $b$ or $b'$, change $\alpha$ to $(x_{2}, y_{1})$, which does not coincides with $b$ or $b'$ (otherwise $a b a' b'$ is rectangle).
                Similarly, we can find $\beta$ such that $b \beta \parallel \text{axis}$, $b' \beta \parallel \text{axis}$ and $\beta \notin \{ a, a' \}$.
                If $\alpha = \beta$, change both $\alpha$ and $\beta$ to their alternatives.
                \label{case:case2}
                \begin{figure}
                    \centering
                    \subfloat[\label{fig:fig13:a}]{
                        \centering
                        \begin{tikzpicture}
                            \draw[step=1,help lines] (0,0) grid (4,4);
                            \node (A) at (1.5,3.5) {$a$};
                            \node (B) at (3.5,2.5) {$a'$};
                            \node (C) at (0.5,1.5) {$b$};
                            \node (D) at (2.5,0.5) {$b'$};
                            \node (E) at (3.5,3.5) {$\alpha$};
                            \node (F) at (0.5,0.5) {$\beta$};
                            \draw[<->] (A) -- (E);
                            \draw[<->] (D) -- (F);
                        \end{tikzpicture}
                    }
                    \subfloat[\label{fig:fig13:b}]{
                        \centering
                        \begin{tikzpicture}
                            \draw[step=1,help lines] (0,0) grid (4,4);
                            \node (A) at (1.5,3.5) {$\alpha$};
                            \node (B) at (3.5,2.5) {$a'$};
                            \node (C) at (0.5,1.5) {$b$};
                            \node (D) at (2.5,0.5) {$\beta$};
                            \node (E) at (3.5,3.5) {$a$};
                            \node (F) at (0.5,0.5) {$b'$};
                            \draw[<->] (B) -- (E);
                            \draw[<->] (C) -- (F);
                        \end{tikzpicture}
                    }
                    \\
                    \subfloat[\label{fig:fig13:c}]{
                        \centering
                        \begin{tikzpicture}
                            \draw[step=1,help lines] (0,0) grid (4,4);
                            \node (A) at (1.5,3.5) {$\alpha$};
                            \node (B) at (3.5,2.5) {$a$};
                            \node (C) at (0.5,1.5) {$b'$};
                            \node (D) at (2.5,0.5) {$\beta$};
                            \node (E) at (3.5,3.5) {$a'$};
                            \node (F) at (0.5,0.5) {$b$};
                            \draw[<->] (A) -- (E);
                            \draw[<->] (D) -- (F);
                        \end{tikzpicture}
                    }
                    \subfloat[\label{fig:fig13:d}]{
                        \centering
                        \begin{tikzpicture}
                            \draw[step=1,help lines] (0,0) grid (4,4);
                            \node (A) at (1.5,3.5) {$a'$};
                            \node (B) at (3.5,2.5) {$a$};
                            \node (C) at (0.5,1.5) {$b'$};
                            \node (D) at (2.5,0.5) {$b$};
                            \node (E) at (3.5,3.5) {$\alpha$};
                            \node (F) at (0.5,0.5) {$\beta$};
                        \end{tikzpicture}
                    }
                    \caption{Case \ref{case:case2} of Lemma~\ref{thm:lem01strong}}
                    \label{fig:fig13}
                \end{figure}
            \end{enumerate}

            \item $aa' \parallel \text{ axis}$, $bb' \nparallel \text{ axis}$. $1 + 15 + 1 = 17$ operations suffice as shown by \figurename~\ref{fig:fig14}.
            The first step is to find $\alpha$, $\beta$, $\gamma$ such that they can form a rectangle with $a$ or $a'$.
            Without loss of generality, assume $a a'$ is horizontal and $a = (1, 1)$, $a' = (2, 1)$.
            Let $\gamma = (3, 1)$. (Note that $\gamma = b$ or $\gamma = b'$ may hold.)
            Find a $y_{0} > 1$ with largest $score(y) := | \{ (1, y), (2, y), (3, y) \} \setminus \{ b, b' \} |$.
            As $b b'$ is not parallel with axis, $score(y_{0}) \geq 2$.
            If $score(y_{0}) = 3$, let $\alpha = (2, y_{0})$ and $\beta = (3, y_{0})$.
            If $\gamma = b$ or $\gamma = b'$ holds, $score(y_{0})$ must be $3$ and $b b' \nparallel \text{axis}$ still holds if $\gamma$ is swapped with $\beta$.
            If $score(y_{0}) = 2$, there exists a valid $y_{0}$ such that $(3, y_{0}) \notin \{b, b'\}$.
            Let $\beta = (3, y_{0})$ and $\alpha$ be $(1, y_{0})$ or $(2, y_{0})$ accordingly.
            It converts to case~\ref{case:case2} through swapping $\alpha$ with $a'$ (or $a$) and $\beta$ with $\gamma$.

            \label{case:case3}
            \begin{figure}
                \centering
                \subfloat[\label{fig:fig14:a}]{
                    \centering
                    \begin{tikzpicture}
                        \draw[step=1,help lines] (0,0) grid (4,4);
                        \node (A) at (0.5,1.5) {$a$};
                        \node (B) at (2.5,1.5) {$a'$};
                        \node (C) at (1.5,0.5) {$b$};
                        \node (D) at (3.5,3.5) {$b'$};
                        \node (U) at (2.5,2.5) {$\alpha$};
                        \node (V) at (3.5,2.5) {$\beta$};
                        \node (W) at (3.5,1.5) {$\gamma$};
                        \draw[<->] (B) -- (U);
                        \draw[<->] (V) -- (W);
                    \end{tikzpicture}
                }
                \subfloat[\label{fig:fig14:b}]{
                    \centering
                    \begin{tikzpicture}
                        \draw[step=1,help lines] (0,0) grid (4,4);
                        \node (A) at (0.5,1.5) {$a$};
                        \node (B) at (2.5,1.5) {$\alpha$};
                        \node (C) at (1.5,0.5) {$b$};
                        \node (D) at (3.5,3.5) {$b'$};
                        \node (U) at (2.5,2.5) {$a'$};
                        \node (V) at (3.5,2.5) {$\gamma$};
                        \node (W) at (3.5,1.5) {$\beta$};
                        \draw[<->] (A) -- (U);
                        \draw[<->] (C) to[bend right] (D);
                    \end{tikzpicture}
                }
                \\
                \subfloat[\label{fig:fig14:c}]{
                    \centering
                    \begin{tikzpicture}
                        \draw[step=1,help lines] (0,0) grid (4,4);
                        \node (A) at (0.5,1.5) {$a'$};
                        \node (B) at (2.5,1.5) {$\alpha$};
                        \node (C) at (1.5,0.5) {$b'$};
                        \node (D) at (3.5,3.5) {$b$};
                        \node (U) at (2.5,2.5) {$a$};
                        \node (V) at (3.5,2.5) {$\gamma$};
                        \node (W) at (3.5,1.5) {$\beta$};
                        \draw[<->] (B) -- (U);
                        \draw[<->] (V) -- (W);
                    \end{tikzpicture}
                }
                \subfloat[\label{fig:fig14:d}]{
                    \centering
                    \begin{tikzpicture}
                        \draw[step=1,help lines] (0,0) grid (4,4);
                        \node (A) at (0.5,1.5) {$a'$};
                        \node (B) at (2.5,1.5) {$a$};
                        \node (C) at (1.5,0.5) {$b'$};
                        \node (D) at (3.5,3.5) {$b$};
                        \node (U) at (2.5,2.5) {$\alpha$};
                        \node (V) at (3.5,2.5) {$\beta$};
                        \node (W) at (3.5,1.5) {$\gamma$};
                    \end{tikzpicture}
                }
                \caption{Case \ref{case:case3} of Lemma~\ref{thm:lem01strong}}
                \label{fig:fig14}
            \end{figure}

            \item $aa' \parallel \text{ axis}$, $bb' \parallel \text{ axis}$.

            \begin{enumerate}
                \item $aa' \parallel bb'$. If $a, a', b, b'$ are not collinear, 5 operations suffice. For collinear case, we can use 1 operation to convert it to non-collinear case -- swap $a$ with $\tilde{a}$ and $a'$ with $\tilde{a}'$ where $a a' \tilde{a}' \tilde{a}$ is a rectangle.
                Finally, another 1 operation is used to convert it back.

                \item $aa' \perp bb'$. If $a, a', b$ are not collinear, $(3 + 3 + 3) \times 2 = 18$ operations suffice as shown by \figurename~\ref{fig:fig15}. For collinear case, we can use 1 operation to convert it to non-collinear case -- similarly swap $a$ with $\tilde{a}$ and $a'$ with $\tilde{a}'$ where $a a' \tilde{a}' \tilde{a}$ is a rectangle.
                Finally, another 1 operation is used to convert it back.
                \label{case:case4}

                \begin{figure}
                    \centering
                    \subfloat[\label{fig:fig15:a}]{
                        \centering
                        \begin{tikzpicture}
                            \draw[step=1,help lines] (0,0) grid (3,3);
                            \node (A) at (0.5,1.5) {$a$};
                            \node (B) at (2.5,1.5) {$a'$};
                            \node (C) at (1.5,0.5) {$b$};
                            \node (D) at (1.5,2.5) {$b'$};
                            \node (E) at (0.5,0.5) {$\alpha$};
                            \node (F) at (2.5,2.5) {$\beta$};
                            \draw[<->] (A) -- (E);
                            \draw[<->] (B) -- (F);
                        \end{tikzpicture}
                    }
                    \subfloat[\label{fig:fig15:b}]{
                        \centering
                        \begin{tikzpicture}
                            \draw[step=1,help lines] (0,0) grid (3,3);
                            \node (A) at (0.5,1.5) {$\alpha$};
                            \node (B) at (2.5,1.5) {$\beta$};
                            \node (C) at (1.5,0.5) {$b$};
                            \node (D) at (1.5,2.5) {$b'$};
                            \node (E) at (0.5,0.5) {$a$};
                            \node (F) at (2.5,2.5) {$a'$};
                            \draw[<->] (C) -- (E);
                            \draw[<->] (D) -- (F);
                        \end{tikzpicture}
                    }
                    \\
                    \subfloat[\label{fig:fig15:c}]{
                        \centering
                        \begin{tikzpicture}
                            \draw[step=1,help lines] (0,0) grid (3,3);
                            \node (A) at (0.5,1.5) {$\alpha$};
                            \node (B) at (2.5,1.5) {$\beta$};
                            \node (C) at (1.5,0.5) {$a$};
                            \node (D) at (1.5,2.5) {$a'$};
                            \node (E) at (0.5,0.5) {$b$};
                            \node (F) at (2.5,2.5) {$b'$};
                            \draw[<->] (A) -- (E);
                            \draw[<->] (B) -- (F);
                        \end{tikzpicture}
                    }
                    \subfloat[\label{fig:fig15:d}]{
                        \centering
                        \begin{tikzpicture}
                            \draw[step=1,help lines] (0,0) grid (3,3);
                            \node (A) at (0.5,1.5) {$b$};
                            \node (B) at (2.5,1.5) {$b'$};
                            \node (C) at (1.5,0.5) {$a$};
                            \node (D) at (1.5,2.5) {$a'$};
                            \node (E) at (0.5,0.5) {$\alpha$};
                            \node (F) at (2.5,2.5) {$\beta$};
                            \draw[<->] (A) -- (D);
                            \draw[<->] (B) -- (C);
                        \end{tikzpicture}
                    }
                    \\
                    \subfloat[\label{fig:fig15:e}]{
                        \centering
                        \begin{tikzpicture}
                            \draw[step=1,help lines] (0,0) grid (3,3);
                            \node (A) at (0.5,1.5) {$a'$};
                            \node (B) at (2.5,1.5) {$a$};
                            \node (C) at (1.5,0.5) {$b'$};
                            \node (D) at (1.5,2.5) {$b$};
                            \node (E) at (0.5,0.5) {$\alpha$};
                            \node (F) at (2.5,2.5) {$\beta$};
                        \end{tikzpicture}
                    }
                    \caption{Case \ref{case:case4} of Lemma~\ref{thm:lem01strong}. (a), (b) and (c) swap $a$ with $b$ and $a'$ with $b'$. (d) does essentially the same as the sequence of (a), (b) and (c).}
                    \label{fig:fig15}
                \end{figure}
            \end{enumerate}
        \end{enumerate}

        The most costly case is \ref{case:case4} and needs $20$ operations.
    \end{proof}

    \section{Decomposition into 2-qudit gates} \label{section:result02}

    In previous section, we decomposed the circuit in a top-down manner.
    In this section, we will construct circuit gadgets in a bottom-up fashion, harnessing the potential of ancilla qudits.

    In subsection A, we will first introduce the concept of $dC^{m}X$ gate.
    Addition of an extra control qudit for such gate can be easy.
    In subsection B, we will introduce the concept of $dX^{(m)}$ gate and prove that it can be decomposed as $dC^{m - 1}X$ gates.
    In subsection C, we will present an algorithm.
    Finally, we will show that our circuit size is almost optimal.

    \subsection{Construction of \texorpdfstring{$dC^{m}X$}{dC\textasciicircum mX} gate}

    We call $\ket{c_1 c_2 \cdots c_{m - 1} c_{m}} \mathhyphen U \ket{c_1 c_2 \cdots c_{m - 1} \tilde{c}_{m}} \mathhyphen U$ a $dC^{m}U$ gate.

    \begin{example}
        A simple example of $dC^{m}U$ gate is the $dC^{2}X$ gate shown by \figurename~\ref{fig:dC2X} where each line denotes a qudit and circle indicates a control qudit of a gate.
        \begin{figure}
            \centering
            \begin{tikzpicture}
                \begin{yquant}[register/separation=3mm]
                    qubit {$\ket{\reg_{\idx}}$} q[3];
                    controlbox {$0$} q[0];
                    controlbox {$0$} q[1];
                    box {$X_{01}$} q[2] | q[0], q[1];
                    controlbox {$0$} q[0];
                    controlbox {$1$} q[1];
                    box {$X_{01}$} q[2] | q[0], q[1];

                    align -;
                    text {=} (q);
                    align -;

                    controlbox {$0$} q[0];
                    box {$X_{01}$} q[1] | q[0];
                    controlbox {$0$} q[1];
                    box {$X_{01}$} q[2] | q[1];
                    controlbox {$0$} q[0];
                    box {$X_{01}$} q[1] | q[0];
                    controlbox {$0$} q[1];
                    box {$X_{01}$} q[2] | q[1];
                \end{yquant}
            \end{tikzpicture}
            \caption{Synthesis of $\ket{00} \mathhyphen X_{01} \ket{01} \mathhyphen X_{01}$}
            \label{fig:dC2X}
        \end{figure}
    \end{example}

    \begin{lemma}
        $dC^{m}U$ gate can be synthesized as 8 $dC^{m - 1}U$ gates if $U^2 = I$.
    \end{lemma}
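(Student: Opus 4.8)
The plan is to generalise the compute--uncompute identity that \figurename~\ref{fig:dC2X} already gives for $dC^2X$, using $U^2=I$ so that the uncomputation is exact and no square root of $U$ is ever needed (which would break the ``same $U$'' structure exploited here). Write the target gate as the one applying $U$ to the target $t$ exactly when $q_1=c_1,\dots,q_{m-1}=c_{m-1}$ and $q_m\in\{c_m,\tilde c_m\}$, and use the doubled control qudit $q_m$ itself as a scratch register. I would introduce an $(m{-}1)$-controlled $X$ gate $A$ that applies $X_{c_m,\tilde c_m}$ to $q_m$ conditioned on $q_1=c_1,\dots,q_{m-1}=c_{m-1}$, and an $(m{-}1)$-controlled $U$ gate $B$ that applies $U$ to $t$ conditioned on $q_1=c_1,\dots,q_{m-2}=c_{m-2}$ together with $q_m=\tilde c_m$; the circuit is $A$, $B$, $A$, $B$.

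I would then verify correctness by a three-way case split. If $q_i\neq c_i$ for some $i\le m-2$ then neither $A$ nor $B$ ever acts, so the circuit is the identity, as required. If $q_1=c_1,\dots,q_{m-2}=c_{m-2}$ but $q_{m-1}\neq c_{m-1}$, then $A$ is the identity, so the two copies of $B$ see the same state of $q_m$ and, because $U^2=I$, cancel; again the identity, correctly. If $q_1=c_1,\dots,q_{m-1}=c_{m-1}$, then $A$ swaps $c_m\leftrightarrow\tilde c_m$ in $q_m$ and fixes every other value, and tracing the circuit shows the number of times $U$ is applied to $t$ equals $[\,X_{c_m,\tilde c_m}(q_m)=\tilde c_m\,]+[\,q_m=\tilde c_m\,]$, which is $1$ precisely when $q_m\in\{c_m,\tilde c_m\}$ and $0$ otherwise --- exactly $dC^mU$. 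The hypothesis $U^2=I$ is what drives both non-trivial collapses; since no ancilla is introduced, this also explains why adding one control to such a gate is inexpensive.

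This bare version uses two copies each of $A$ and $B$. To finish I would put $A$ and $B$ into the literal $dC^{m-1}U$ template of the definition (last control doubled, so that each block is an even permutation of the $(m{+}1)$-qudit space) and absorb the small cases --- $d=2$, where the doubled control on $q_m$ is vacuous and the claim reduces to the usual multi-controlled gate, and the extra values the scratch qudit $q_m$ may carry. I expect this reshaping to at most double the count, giving the stated bound of $8$. The main obstacle is precisely this last step rather than the core identity: one must guarantee that every building block is a genuine $dC^{m-1}U$ gate and that the values of $q_m$ outside $\{c_m,\tilde c_m\}$ never leak a spurious $U$ onto $t$; keeping the construction symmetric in $c_m$ and $\tilde c_m$ and again invoking $U^2=I$ ensures any spurious application happens an even number of times and disappears.
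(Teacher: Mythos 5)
Your core identity is correct: with $A=\ket{c_1\cdots c_{m-1}}\mathhyphen X_{c_m,\tilde c_m}$ acting on $q_m$ and $B$ the $U$ on the target controlled by $q_1=c_1,\dots,q_{m-2}=c_{m-2}$ and $q_m=\tilde c_m$, the word $ABAB$ does implement the $dC^mU$ gate, by exactly the case analysis you give, and this is the same compute--uncompute-with-$U^2=I$ mechanism that drives the paper's \figurename~\ref{fig:decomp_dCnX}. The problem is the step you defer to the end. Your $A$ and $B$ are \emph{single} $(m-1)$-controlled gates ($C^{m-1}X$ and $C^{m-1}U$), not $dC^{m-1}U$ gates, and converting each into a pair of $dC^{m-1}U$ gates is not a routine reshaping costing a factor of $2$ --- it is the actual content of the lemma, and as described it cannot be done.

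Two concrete reasons. First, if you literally double the gates, say $A'=\ket{c_1\cdots c_{m-2}c_{m-1}}\mathhyphen X_{c_m,\tilde c_m}\cdot\ket{c_1\cdots c_{m-2}e}\mathhyphen X_{c_m,\tilde c_m}$ and $B'=\ket{c_1\cdots c_{m-2}\tilde c_m}\mathhyphen U\cdot\ket{c_1\cdots c_{m-2}w}\mathhyphen U$, then tracing $A'B'A'B'$ shows it applies $U$ exactly when $q_{m-1}\in\{c_{m-1},e\}$ and $q_m\in\{c_m,\tilde c_m\}$: the spurious action at $q_{m-1}=e$ is itself a full $dC^mU$ gate and does not cancel; and no choice of partner values can fix this, since gates doubled in the $q_{m-1}$ control always toggle an even-cardinality set of values of $q_{m-1}$ modulo $2$, whereas you need the singleton $\{c_{m-1}\}$. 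Second, and decisively for odd $d$: as a permutation of the $(m+1)$-qudit computational basis, your $A$ is a product of exactly $d$ transpositions (one per value of the untouched target), hence an odd permutation when $d$ is odd, while every $dC^{m-1}$-type gate fires on two control configurations and is an even permutation (determinant $+1$ even for non-permutation involutions $U$); no product of the latter equals $A$. So the repair must be global rather than gate-by-gate, which is precisely what the paper's gadget does: it conjugates the surviving control qudit by the already-doubled gates $\ket{00}\mathhyphen X_{01}\ket{01}\mathhyphen X_{01}$ and $\ket{00}\mathhyphen X_{02}\ket{02}\mathhyphen X_{02}$, applies a doubled controlled $U$, and then repeats the entire four-gate block so that all unwanted firings introduced by the doubling cancel via $U^2=I$.
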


    \begin{figure*}
        \centering
        \begin{tikzpicture}
            \begin{yquant}[register/separation=3mm]
                qubit {$\ket{\reg_{\idx}}$} q[4];
                controlbox {$0$} q[0];
                controlbox {$0$} q[1];
                controlbox {$0$} q[2];
                box {$U$} q[3] | q[0], q[1], q[2];
                controlbox {$0$} q[0];
                controlbox {$0$} q[1];
                controlbox {$1$} q[2];
                box {$U$} q[3] | q[0], q[1], q[2];

                align -;
                text {=} (q);
                align -;

                controlbox {$0$} q[0];
                controlbox {$0$} q[1];
                box {$X_{01}$} q[2] | q[0], q[1];
                controlbox {$0$} q[0];
                controlbox {$1$} q[1];
                box {$X_{01}$} q[2] | q[0], q[1];
                controlbox {$0$} q[0];
                controlbox {$0$} q[1];
                box {$X_{02}$} q[2] | q[0], q[1];
                controlbox {$0$} q[0];
                controlbox {$2$} q[1];
                box {$X_{02}$} q[2] | q[0], q[1];
                controlbox {$0$} q[0];
                controlbox {$0$} q[2];
                box {$U$} q[3] | q[0], q[2];
                controlbox {$0$} q[0];
                controlbox {$1$} q[2];
                box {$U$} q[3] | q[0], q[2];
                controlbox {$0$} q[0];
                controlbox {$0$} q[1];
                box {$X_{02}$} q[2] | q[0], q[1];
                controlbox {$0$} q[0];
                controlbox {$2$} q[1];
                box {$X_{02}$} q[2] | q[0], q[1];

                barrier (-);

                controlbox {$0$} q[0];
                controlbox {$0$} q[1];
                box {$X_{01}$} q[2] | q[0], q[1];
                controlbox {$0$} q[0];
                controlbox {$1$} q[1];
                box {$X_{01}$} q[2] | q[0], q[1];
                controlbox {$0$} q[0];
                controlbox {$0$} q[1];
                box {$X_{02}$} q[2] | q[0], q[1];
                controlbox {$0$} q[0];
                controlbox {$2$} q[1];
                box {$X_{02}$} q[2] | q[0], q[1];
                controlbox {$0$} q[0];
                controlbox {$0$} q[2];
                box {$U$} q[3] | q[0], q[2];
                controlbox {$0$} q[0];
                controlbox {$1$} q[2];
                box {$U$} q[3] | q[0], q[2];
                controlbox {$0$} q[0];
                controlbox {$0$} q[1];
                box {$X_{02}$} q[2] | q[0], q[1];
                controlbox {$0$} q[0];
                controlbox {$2$} q[1];
                box {$X_{02}$} q[2] | q[0], q[1];
            \end{yquant}
        \end{tikzpicture}
        \caption{Synthesis of $\ket{000} \mathhyphen U \ket{001} \mathhyphen U$ when $U^2 = I$}
        \label{fig:decomp_dCnX}
    \end{figure*}

    \begin{proof}
        As shown by \figurename~\ref{fig:decomp_dCnX}, we can neglect $q_{0}$ and focus on $q_{1}, q_{2}$ and $q_{3}$.
        If $q_{1}$ is not one of 0, 1 and 2, nothing will happen.

        If $q_{1} = 2$, as shown by \figurename~\ref{fig:decomp_dCnX:2}, two $X_{02}$ gates in the middle cancel each other out.
        Then two groups of controlled $U$ cancel each other out.
        Finally, the remaining two $X_{02}$ gates cancel each out.
        \begin{figure}
            \centering
            \begin{tikzpicture}
                \begin{yquant}[register/separation=3mm]
                    qubit {} q[4];

                    init {$\ket{q_0}$} q[0];
                    init {$\ket{q_1} = \ket{2}$} q[1];
                    init {$\ket{q_2}$} q[2];
                    init {$\ket{q_3}$} q[3];

                    controlbox {$0$} q[0];
                    box {$X_{02}$} q[2] | q[0];
                    controlbox {$0$} q[0];
                    controlbox {$0$} q[2];
                    box {$U$} q[3] | q[0], q[2];
                    controlbox {$0$} q[0];
                    controlbox {$1$} q[2];
                    box {$U$} q[3] | q[0], q[2];
                    controlbox {$0$} q[0];
                    box {$X_{02}$} q[2] | q[0];

                    barrier (-);

                    controlbox {$0$} q[0];
                    box {$X_{02}$} q[2] | q[0];
                    controlbox {$0$} q[0];
                    controlbox {$0$} q[2];
                    box {$U$} q[3] | q[0], q[2];
                    controlbox {$0$} q[0];
                    controlbox {$1$} q[2];
                    box {$U$} q[3] | q[0], q[2];
                    controlbox {$0$} q[0];
                    box {$X_{02}$} q[2] | q[0];
                \end{yquant}
            \end{tikzpicture}
            \caption{Equivalent quantum circuit of $\ket{000} \mathhyphen U \ket{001} \mathhyphen U$ when $U^2 = I$ and $\ket{q_1} = \ket{2}$}
            \label{fig:decomp_dCnX:2}
        \end{figure}

        If $q_{1} = 1$, as shown by \figurename~\ref{fig:decomp_dCnX:1}, nothing will happen unless $q_{2} = 0$ or $q_{2} = 1$ holds.
        Under the assumption $q_{2} = 0$ or $q_{2} = 1$, we can simplify $\ket{0} \mathhyphen U \ket{1} \mathhyphen U$ as $U$.
        Two $X_{01}$ gates get canceled out and two $U$ gates get canceled out.
        \begin{figure}
            \centering
            \begin{tikzpicture}
                \begin{yquant}[register/separation=3mm]
                    qubit {} q[4];

                    init {$\ket{q_0}$} q[0];
                    init {$\ket{q_1} = \ket{1}$} q[1];
                    init {$\ket{q_2}$} q[2];
                    init {$\ket{q_3}$} q[3];

                    controlbox {$0$} q[0];
                    box {$X_{01}$} q[2] | q[0];
                    controlbox {$0$} q[0];
                    controlbox {$0$} q[2];
                    box {$U$} q[3] | q[0], q[2];
                    controlbox {$0$} q[0];
                    controlbox {$1$} q[2];
                    box {$U$} q[3] | q[0], q[2];

                    barrier (-);

                    controlbox {$0$} q[0];
                    box {$X_{01}$} q[2] | q[0];
                    controlbox {$0$} q[0];
                    controlbox {$0$} q[2];
                    box {$U$} q[3] | q[0], q[2];
                    controlbox {$0$} q[0];
                    controlbox {$1$} q[2];
                    box {$U$} q[3] | q[0], q[2];
                \end{yquant}
            \end{tikzpicture}
            \caption{Equivalent quantum circuit of $\ket{000} \mathhyphen U \ket{001} \mathhyphen U$ when $U^2 = I$ and $\ket{q_1} = \ket{1}$}
            \label{fig:decomp_dCnX:1}
        \end{figure}

        The case of $q_{1} = 0$ is shown by \figurename~\ref{fig:decomp_dCnX:0}.
        \begin{itemize}
            \item $q_{2} = 2$. It gets mapped to $0$ by the first $X_{02}$ gate.
            Then an $U$ gate is applied to $q_{3}$.
            The second $X_{02}$ gate maps $q_{2}$ to $0$, which will be reverted by the third $X_{02}$ gate.
            Another $U$ gate applied to $q_{3}$ cancels the first one.
            $q_{2}$ is finally mapped to $2$ by the fourth $X_{02}$ gate.
            \item $q_{2} = 1$. It gets mapped to $2$ by the first $X_{01}$ gate and the first $X_{02}$ gate, which will be reverted by the second $X_{02}$ gate and the second $X_{01}$ gate.
            Then an $U$ gate is applied to $q_{3}$.
            \item $q_{2} = 0$. It gets mapped to $1$ by the first $X_{01}$ gate.
            Then an $U$ gate is applied to $q_{3}$.
            The second $X_{01}$ maps $q_{2}$ to $0$ and the third $X_{02}$ maps it to $2$.
            It is finally mapped to $0$ by the fourth $X_{02}$ gate.
        \end{itemize}
        \begin{figure}
            \centering
            \begin{tikzpicture}[scale=0.8]
                \begin{yquant}[register/separation=3mm]
                    qubit {} q[4];

                    init {$\ket{q_0}$} q[0];
                    init {$\ket{q_1} = \ket{0}$} q[1];
                    init {$\ket{q_2}$} q[2];
                    init {$\ket{q_3}$} q[3];

                    controlbox {$0$} q[0];
                    box {$X_{01}$} q[2] | q[0];
                    controlbox {$0$} q[0];
                    box {$X_{02}$} q[2] | q[0];
                    controlbox {$0$} q[0];
                    controlbox {$0$} q[2];
                    box {$U$} q[3] | q[0], q[2];
                    controlbox {$0$} q[0];
                    controlbox {$1$} q[2];
                    box {$U$} q[3] | q[0], q[2];
                    controlbox {$0$} q[0];
                    box {$X_{02}$} q[2] | q[0];

                    barrier (-);

                    controlbox {$0$} q[0];
                    box {$X_{01}$} q[2] | q[0];
                    controlbox {$0$} q[0];
                    box {$X_{02}$} q[2] | q[0];
                    controlbox {$0$} q[0];
                    controlbox {$0$} q[2];
                    box {$U$} q[3] | q[0], q[2];
                    controlbox {$0$} q[0];
                    controlbox {$1$} q[2];
                    box {$U$} q[3] | q[0], q[2];
                    controlbox {$0$} q[0];
                    box {$X_{02}$} q[2] | q[0];
                \end{yquant}
            \end{tikzpicture}
            \caption{Equivalent quantum circuit of $\ket{000} \mathhyphen U \ket{001} \mathhyphen U$ when $U^2 = I$ and $\ket{q_1} = \ket{0}$}
            \label{fig:decomp_dCnX:0}
        \end{figure}
    \end{proof}

    \subsection{Construction of \texorpdfstring{$dX^{(m)}$}{dX\textasciicircum (m)} gate}

    We call a pair of $X^{(m)}$ gates $X_{\mathbf{x}, \tilde{\mathbf{x}}} X_{\mathbf{y}, \tilde{\mathbf{y}}}$ as $dX^{(m)}$ gate.
    For $m = 2$, Lemma~\ref{thm:lem01strong} tells us that $dX^{(2)}$ can be constructed using $O(1)$ $dCX$ gates.
    If we can prove that $dX^{(m + 1)}$ can be constructed using $CdX^{(m)}$, then $dX^{(m + 1)}$ can be constructed using $C^{m - 1}dX^{(2)}$, thus $dC^{m}X$.

    \begin{lemma} \label{thm:lem:dxm}
        $dX^{(m + 1)}$ can be synthesized as 8 $CdX^{(m)}$ gates.
    \end{lemma}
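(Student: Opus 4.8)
The plan is to mimic the structure of the $dC^mU$ decomposition (the "8 $dC^{m-1}U$ gates" lemma) but with the roles of control and target rearranged so that the gate being commuted past is itself a $dX^{(m)}$ rather than a controlled gate. Recall $dX^{(m+1)} = X_{\mathbf{x},\tilde{\mathbf{x}}} X_{\mathbf{y},\tilde{\mathbf{y}}}$ acts on $m+1$ qudits. I would single out one qudit --- call it the "pivot" qudit $q_p$ --- on which the two target configurations $\mathbf{x}$ and $\mathbf{y}$ may or may not agree. After a suitable relabelling we may assume $\mathbf{x}$ and $\mathbf{y}$ differ only in the last coordinate, exactly as in the $dC^{m}X \to dC^{m-1}X$ argument where $\ket{000}\mathhyphen U$ and $\ket{001}\mathhyphen U$ differ only in $q_2$; this reduction is legitimate because an $X^{(m+1)}$ gate is conjugation-equivalent to one in this normal form via single-qudit $X$ gates applied to the pivot, and we can fold those into neighbouring gates.

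First I would write $dX^{(m+1)}$ as $X_{\mathbf{a}0,\,\mathbf{a}1}$-type pair where $\mathbf{a}\in\{0,\dots,d-1\}^{m-1}$ is the common prefix and the last two qudits carry the branching. The key identity to exploit is that, restricted to the subspace where the last qudit is in $\{0,1\}$, the pair of cyclic-index swaps collapses, just as $\ket{0}\mathhyphen U\,\ket{1}\mathhyphen U$ collapses to $U$ in the earlier proof. So I would sandwich a single $CdX^{(m)}$ gate (controlled on the pivot, targeting the remaining $m$ qudits with the appropriate pair of $X^{(m-1)}$-type swaps) between "gadget" layers of $X_{01}$ and $X_{02}$ gates on the pivot qudit --- or better, conditioned on the prefix --- that (i) on the pivot value for which no action is wanted, route the pivot through values $0,1,2$ so that the two inner $CdX^{(m)}$ gates cancel, and (ii) on the pivot value for which action is wanted, leave the pivot fixed so the two inner gates compose to the desired permutation on the other $m$ qudits. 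Counting: two gadget-reflected copies of the inner circuit, each containing one $CdX^{(m)}$ plus $O(1)$ single- or two-qudit $X$ gates on the pivot, does not immediately give exactly $8$; to hit $8$ I would, following the $dC^mU$ template verbatim, split the construction into two mirrored halves separated by a barrier, each half using $4$ $CdX^{(m)}$ invocations (the $X_{01}$/$X_{02}$ "reroute" plus the two branch gates plus the uncompute), then verify case-by-case on the pivot value $\in\{0,1,2\}$ and $\notin\{0,1,2\}$ that everything cancels.

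The verification is the routine part: for each value $v$ of the pivot qudit one checks that the net effect of the $8$ $CdX^{(m)}$ gates (with their $X_{01},X_{02}$ dressings) is $X_{\mathbf{x},\tilde{\mathbf{x}}}X_{\mathbf{y},\tilde{\mathbf{y}}}$ if $v$ matches the target prefix value and the identity otherwise --- this is three or four short case analyses entirely parallel to Figures analogous to \figurename~\ref{fig:decomp_dCnX:2}, \figurename~\ref{fig:decomp_dCnX:1}, \figurename~\ref{fig:decomp_dCnX:0}, just with $U$ replaced by a $dX^{(m)}$ and "$q_3$" replaced by the block of $m$ non-pivot qudits. I expect the main obstacle to be the normal-form reduction at the top: a general $dX^{(m+1)}=X_{\mathbf{x},\tilde{\mathbf{x}}}X_{\mathbf{y},\tilde{\mathbf{y}}}$ need not have $\mathbf{x},\tilde{\mathbf{x}}$ (and $\mathbf{y},\tilde{\mathbf{y}}$) differing in a single coordinate, nor need the two transpositions share a convenient pivot. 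One must argue that conjugating by a constant number of $X$ and $CX^{(k)}$ gates brings an arbitrary such pair into the "prefix + branching-pair" normal form used above --- essentially the higher-dimensional analogue of the case split in the proof of Lemma~\ref{thm:lem01strong} (rectangle vs. axis-parallel vs. skew), and the bookkeeping needed to keep the gate count at a genuine constant (ultimately $8$, by absorbing the conjugating gates) is where the real care lies. Once the normal form is in hand, the sandwich construction and its verification go through mechanically.
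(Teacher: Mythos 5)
There is a genuine gap, and it sits exactly where you flag it: the normal-form reduction. Your plan transplants the $dC^{m}U\to dC^{m-1}U$ sandwich (\figurename~\ref{fig:decomp_dCnX}) to the $dX^{(m+1)}$ setting, but that sandwich relies on the two gates in the pair already sharing a common prefix and branching on a single designated qudit --- which is part of the \emph{definition} of $dC^{m}U$, whereas in $dX^{(m+1)}=X_{\mathbf{x},\tilde{\mathbf{x}}}X_{\mathbf{y},\tilde{\mathbf{y}}}$ the four strings are completely arbitrary. Conjugating by single-qudit $X$ gates on a pivot cannot produce your normal form: if $\mathbf{x}=(0,\dots,0)$ and $\tilde{\mathbf{x}}=(1,\dots,1)$, no tensor product of single-qudit permutations makes them agree on all but one coordinate. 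The conjugations that \emph{would} work are themselves multi-qudit single transpositions (or $C^{k}X$-type gates), i.e.\ odd permutations of the state space that are precisely what the pairing discipline of this whole section exists to avoid --- they are not $CdX^{(m)}$ gates, cannot be recursively decomposed on their own, and cannot be ``absorbed'' into the count of $8$. The same parity problem infects your $X_{01}/X_{02}$ reroute layers: in \figurename~\ref{fig:decomp_dCnX} those reroutes appear as \emph{pairs} ($\ket{00}\mathhyphen X_{01}\,\ket{01}\mathhyphen X_{01}$, etc.) and are counted among the eight $dC^{m-1}U$ gates; in your adaptation they would be single controlled transpositions conditioned on the prefix, which again breaks the recursion. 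Finally, even if one pair could be normalized, all four strings $\mathbf{x},\tilde{\mathbf{x}},\mathbf{y},\tilde{\mathbf{y}}$ must land in the required configuration \emph{simultaneously} under one conjugation, which is a routing problem of essentially the same difficulty as the lemma itself.

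The paper's proof avoids normalization entirely. It peels off the generality in three stages, each time inserting a self-cancelling dummy pair and regrouping: first it shows that a pair in which one transposition fixes the last coordinate and the other fixes the first coordinate splits into $2$ $CdX^{(m)}$ gates (the inserted dummy $X_{(y_{1},\mathbf{z}_{2:m},x_{m+1}),(y_{1},\tilde{\mathbf{z}}_{2:m},x_{m+1})}$ can be read as controlled on \emph{either} the last or the first qudit, so it pairs with each neighbour); then the SWAP-style identity $X_{\mathbf{y},\tilde{\mathbf{y}}}=X_{\mathbf{y},(\tilde y_{1},\mathbf{y}_{2:m+1})}X_{(\tilde y_{1},\mathbf{y}_{2:m+1}),\tilde{\mathbf{y}}}X_{\mathbf{y},(\tilde y_{1},\mathbf{y}_{2:m+1})}$ removes one constraint at a cost of a factor $2$; inserting one more dummy pair removes the last constraint at another factor $2$, giving $2\times2\times2=8$. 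Every intermediate object remains a \emph{pair} of transpositions, which is what keeps the recursion well-founded. You would need to either import this insert-and-regroup idea or supply a genuinely new normalization argument that stays within paired, controlled gates; as written, the proposal does not close.
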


    \begin{proof}
        Assume $|\mathbf{x}| = |\tilde{\mathbf{x}}| = |\mathbf{y}| = |\tilde{\mathbf{y}}| = m + 1$.
        Our goal is to implement $X_{\mathbf{x}, \tilde{\mathbf{x}}} X_{\mathbf{y}, \tilde{\mathbf{y}}}$.
        Our construction has three steps.
        The first step is to prove that $X_{\mathbf{x}, (\tilde{\mathbf{x}}_{1:m}, x_{m + 1})} X_{\mathbf{y}, (y_{1}, \tilde{\mathbf{y}}_{2:m+1})}$ can be synthesized as 2 $CdX^{(m)}$ gates.
        Note that $\mathbf{x} = (\mathbf{x}_{1:m}, x_{m+1})$ has the same $(m+1)$-th coordinate with $(\tilde{\mathbf{x}}_{1:m}, x_{m+1})$ and $\mathbf{y}$ has the same 1st coordinate with $(y_{1}, \tilde{\mathbf{y}}_{2:m+1})$.
        The second step is to remove one of the two constraints and the third step is to remove the other one constraint.
        Both steps need two gates constructed in their previous step.

        For the first step, we can interpret $X_{\mathbf{x}, (\tilde{\mathbf{x}}_{1:m}, x_{m + 1})}$ and $X_{\mathbf{y}, (y_{1}, \tilde{\mathbf{y}}_{2:m+1})}$ as $\ket{x_{m + 1}} \mathhyphen X_{\mathbf{x}_{1:m}, \tilde{\mathbf{x}}_{1:m}}$ and $\ket{y_{1}} \mathhyphen X_{\mathbf{y}_{2:m+1}, \tilde{\mathbf{y}}_{2:m+1}}$ respectively.
        We can insert two
        \begin{align*}
            & \phantom{{}={}} X_{(y_{1}, \mathbf{z}_{2:m}, x_{m + 1}), (y_{1}, \tilde{\mathbf{z}}_{2:m}, x_{m + 1})} \\
            &= \ket{x_{m + 1}} \mathhyphen X_{(y_{1}, \mathbf{z}_{2:m}), (y_{1}, \tilde{\mathbf{z}}_{2:m})} \\
            &= \ket{y_{1}} \mathhyphen X_{(\mathbf{z}_{2:m}, x_{m + 1}), (\tilde{\mathbf{z}}_{2:m}, x_{m + 1})}
        \end{align*}
        gates in between as $X_{(y_{1}, \mathbf{z}_{2:m}, x_{m + 1}), (y_{1}, \tilde{\mathbf{z}}_{2:m}, x_{m + 1})}$ cancels with itself. Thus,
        \begin{align*}
            & \phantom{{}={}} X_{\mathbf{x}, (\tilde{\mathbf{x}}_{1:m}, x_{m + 1})} X_{\mathbf{y}, (y_{1}, \tilde{\mathbf{y}}_{2:m+1})} \\
            &= \ket{x_{m+1}} \mathhyphen \left( X_{\mathbf{x}_{1:m}, \tilde{\mathbf{x}}_{1:m}} X_{(y_{1}, \mathbf{z}_{2:m}), (y_{1}, \tilde{\mathbf{z}}_{2:m})} \right) \\
            & \phantom{{}={}} \cdot \ket{y_{1}} \mathhyphen \left( X_{(\mathbf{z}_{2:m}, x_{m + 1}), (\tilde{\mathbf{z}}_{2:m}, x_{m + 1})} X_{\mathbf{y}_{2:m+1}, \tilde{\mathbf{y}}_{2:m+1}} \right)
        \end{align*}
        can be synthesized as 2 $CdX^{(m)}$ gates.

        \begin{figure}
            \centering
            \begin{tikzpicture}
                \begin{yquant}[register/separation=3mm]
                    qubit {} q[2];

                    box {$X_{\mathbf{y}, \tilde{\mathbf{y}}}$} (q);

                    text {=} (-);

                    slash q[0];
                    controlbox {$\mathbf{y}_{1:m}$} q[0];
                    box {$X_{y_{m + 1}, \tilde{y}_{m + 1}}$} q[1] | q[0];
                    controlbox {$\tilde{y}_{m + 1}$} q[1];
                    box {$X_{\mathbf{y}_{1:m}, \tilde{\mathbf{y}}_{1:m}}$} q[0] | q[1];
                    controlbox {$\mathbf{y}_{1:m}$} q[0];
                    box {$X_{y_{m + 1}, \tilde{y}_{m + 1}}$} q[1] | q[0];
                \end{yquant}
            \end{tikzpicture}
            \caption{Decomposition of $X_{\mathbf{y}, \tilde{\mathbf{y}}}$}
            \label{fig:decomp_CnX}
        \end{figure}

        For the second step, we prove that $X_{\mathbf{x}, (\tilde{\mathbf{x}}_{1:m}, x_{m + 1})} X_{\mathbf{y}, \tilde{\mathbf{y}}}$ can be synthesized as 2 gates of form $X_{\mathbf{x}, (\tilde{\mathbf{x}}_{1:m}, x_{m + 1})} X_{\mathbf{y}, (y_{1}, \tilde{\mathbf{y}}_{2:m+1})}$.
        The essence lies in \figurename~\ref{fig:decomp_CnX}, which is inspired by the decomposition of SWAP gate.
        It can be written as the following equation
        \begin{equation*}
            X_{\mathbf{y}, \tilde{\mathbf{y}}} = X_{\mathbf{y}, (\tilde{y}_{1}, \mathbf{y}_{2:m+1})} X_{(\tilde{y}_{1}, \mathbf{y}_{2:m+1}), \tilde{\mathbf{y}}} X_{\mathbf{y}, (\tilde{y}_{1}, \mathbf{y}_{2:m+1})} .
        \end{equation*}
        Adding $X_{\mathbf{x}, (\tilde{\mathbf{x}}_{1:m}, x_{m + 1})}$ gate to the left of both sides of the equation, we obtain that
        \begin{align*}
            & \phantom{{}={}} X_{\mathbf{x}, (\tilde{\mathbf{x}}_{1:m}, x_{m + 1})} X_{\mathbf{y}, \tilde{\mathbf{y}}} \\
            &= X_{\mathbf{x}, (\tilde{\mathbf{x}}_{1:m}, x_{m + 1})} X_{\mathbf{y}, (\tilde{y}_{1}, \mathbf{y}_{2:m+1})} \\
            & \phantom{{}={}} \cdot X_{\tilde{\mathbf{y}}, (\tilde{y}_{1}, \mathbf{y}_{2:m+1})} X_{\mathbf{y}, (\tilde{y}_{1}, \mathbf{y}_{2:m+1})}
        \end{align*}
        can be synthesized as two gates constructed in the first step.

        Finally, for $X_{\mathbf{x}, \tilde{\mathbf{x}}} X_{\mathbf{y}, \tilde{\mathbf{y}}}$, insert two $X_{\mathbf{z}, (\tilde{\mathbf{z}}_{1:m}, z_{m + 1})}$ gates in between.
        \begin{equation*}
            X_{\mathbf{x}, \tilde{\mathbf{x}}} X_{\mathbf{y}, \tilde{\mathbf{y}}} = X_{\mathbf{x}, \tilde{\mathbf{x}}} X_{\mathbf{z}, (\tilde{\mathbf{z}}_{1:m}, z_{m + 1})} \cdot X_{\mathbf{z}, (\tilde{\mathbf{z}}_{1:m}, z_{m + 1})} X_{\mathbf{y}, \tilde{\mathbf{y}}}
        \end{equation*}
        Thus, we construct $dX^{(m + 1)}$ using $2 \times 2 \times 2 = 8$ $CdX^{(m)}$ gates.
    \end{proof}

    As long as $m$ is a constant, $dX^{(m)}$ can be synthesized as $O(1)$ 2-qudit gates. Specifically, we are interested in the case $m = 4$, which serves as an important block in our algorithm presented in next subsection.

    \subsection{Synthesis of \texorpdfstring{$d$}{d}-ary reversible function}

    Our algorithm makes use of the following result suggested by Zi~et~al.~\cite{ziOptimalSynthesisMulticontrolled2023}.

    \begin{lemma}[\cite{ziOptimalSynthesisMulticontrolled2023}] \label{thm:lem:cnx}
        $C^{n - 1}X$ can be synthesized as $O(n d^3)$ 2-qudit gates using 1 dirty ancilla.
    \end{lemma}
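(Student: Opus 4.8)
The plan is to port the Barenco-style divide-and-conquer synthesis of multiply-controlled $X$ gates into the qudit model, reusing the $dC^{m}X$ gadgets already assembled. First I would split the $n-1$ control qudits as $A \sqcup B$ with $|A|,|B| \approx (n-1)/2$, let $t$ be the target and $a$ the single dirty ancilla, and realize $C^{n-1}X$ by a conjugation-style schedule in the spirit of Barenco: a block acting on $a$ conditioned on $A$, a block acting on $t$ conditioned on $B$ together with $a$, the inverse of the first block (restoring $a$), and a final block on $t$ --- arranged, using $\mathbb{Z}_{d}$-shift-type operations on the dirty wire $a$ rather than plain $X$ gates, so that the net action on $t$ is exactly $C^{n-1}X$ while $a$ returns to its original unknown level. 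The qudits of $B \cup \{t\}$ serve as borrowed scratch for the blocks touching $a$ and the qudits of $A$ for the blocks touching $t$; each block has $\ge \lfloor n/2 \rfloor - 2$ auxiliary qudits available, so by the same linear recursion each block of $O(n)$ controls unfolds into $O(n)$ gates that control on only $O(1)$ qudits. Hence $C^{n-1}X$ reduces to $O(n)$ ``qudit Toffolis'' $C^{2}X$ (or $C^{3}X$) acting inside the ambient register.

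Next I would synthesize a single such constant-arity controlled $X$, using one extra qudit as scratch, in $O(d^{3})$ two-qudit gates. This is where the qudit model genuinely differs from the qubit model: on exactly three qudits $C^{2}X_{x,y}$ is a single transposition, hence an odd permutation, so for even $d$ it cannot be written with two-qudit gates at all, and the fourth (ancilla) wire is precisely what makes $C^{2}X_{x,y} \otimes I$ even and therefore realizable. Concretely, I would cover the $d$ levels of the scratch wire by $O(d)$ toggled pieces of $dC^{3}X$ type (each one $O(1)$ two-qudit gates, by iterating the halving lemma ``$dC^{m}U$ as $8$ copies of $dC^{m-1}U$ when $U^{2}=I$'' down to the two-qudit $dCX$ gadget of \figurename~\ref{fig:dC2X}) and convert each toggled controlled action into a genuine controlled $X_{x,y}$ on a single pair of target levels, which costs a further $O(d)$ two-qudit gates via a conjugation trick; together with the iteration over target levels this produces the three factors of $d$. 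Multiplying, $O(n) \cdot O(d^{3}) = O(nd^{3})$ two-qudit gates suffice, and only the one dirty ancilla is ever touched.

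The hard part will be the bookkeeping around the dirty ancilla. Unlike a qubit, a qudit wire sitting in an unknown level is not ``active'' under an $X_{x,y}$ gate, so the write/restore blocks acting on $a$ cannot simply be controlled $X$ gates; they must be increment-type ($\mathbb{Z}_{d}$-shift, or parity-of-level) operations, and one has to verify that the $U\,V\,U^{-1}\,V^{-1}$-type cancellations still return $a$ and every borrowed control qudit to its original arbitrary state, and that the permutation realized on each window of qudits has parity compatible with two-qudit gates --- this is exactly what forces the ancilla into the window for the even-$d$ base case. A secondary technical point is handling odd $d$ in the level-covering argument, where pairing the scratch levels leaves one level uncovered, without spawning an unbounded recursion.
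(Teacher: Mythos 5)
First, a point of reference: the paper does not prove this lemma at all --- it is imported as a black box from Zi~et~al.~\cite{ziOptimalSynthesisMulticontrolled2023} --- so there is no in-paper proof to match your proposal against. Judged on its own, your proposal has a genuine gap at its central step, the four-block Barenco-style schedule around the dirty ancilla. Write $G_1 = C_{P_A}\sigma$ with $\sigma$ a permutation of the levels of $a$, $G_2 = C_{P_B}W$ with $W$ a permutation of the joint $(a,t)$ space, $G_3 = G_1^{-1}$ and $G_4 = C_{P_B}W'$. Requiring the identity in the cases where $P_A \wedge P_B$ fails forces $W' = W^{-1}$, and the remaining case then forces the group commutator $W^{-1}(\sigma\otimes I)^{-1}W(\sigma\otimes I) = I_a \otimes X_{x,y}$. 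For odd $d$ this is impossible: $I_a\otimes X_{x,y}$ is a product of $d$ disjoint transpositions, hence an odd permutation of the $d^2$-element $(a,t)$ space, while every commutator is even (equivalently, $\sigma\otimes I$ and $\sigma\otimes X_{x,y}$ have signs differing by $(-1)^d$, so no conjugating $W$ exists). Enlarging the support to more borrowed qudits does not help, since $I^{\otimes k}\otimes X_{x,y}$ has sign $(-1)^{d^k} = -1$ for odd $d$; in the elementary ``detector'' form the obstruction reads: you need a level set $S$ with $\sigma(S) = S^{c}$, forcing $|S| = d/2$. So for odd $d$ --- qutrits included --- the schedule cannot work as stated, and the same difficulty recurs at every link of the V-chain you invoke, since each intermediate gate targets a dirty qudit. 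You do flag the dirty-ancilla bookkeeping as the hard part, but you locate the odd-$d$ danger in the level-covering of your second step rather than here, and you never supply the mechanism (pairing transpositions as this paper does with its $dX$ gates, a non-commutator block structure, or something else) that actually closes it.

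Two secondary remarks. Your $O(d^3)$ budget per constant-arity gate is only an accounting sketch; it is harmless for an upper bound, but note that if the reduction to $O(n)$ three-qudit transpositions did go through, pairing them into $dX^{(3)}$-type objects would cost $O(1)$ each by Lemma~\ref{thm:lem:dxm}, giving $O(n)$ overall --- the fact that the cited bound is $O(nd^3)$ is itself a hint that the reduction is not simply ``$O(n)$ qudit Toffolis.'' Your observation that parity forces the ancilla into the three-qudit window when $d$ is even is correct and consistent with how the paper handles parity elsewhere (Lemma~\ref{thm:lem01strong} and Section~\ref{section:result02}).
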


    The main idea is to transform $O(d^3)$ elements into a sub-cube and then apply the lemma. On average, we spend $O(n)$ gates to move each element.

    \begin{theorem}
        Utilizing one clean ancilla qudit, any $n$-dit reversible function $f$ can be synthesized as $O(n d^n)$ 2-qudit gates.
    \end{theorem}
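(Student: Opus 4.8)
\emph{Proof proposal.}

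The budget $O(nd^{n})$ permits an amortized $O(n)$ two-qudit gates per basis state, so the plan is to cut the $d^{n}$ states into $\Theta(d^{n-3})$ groups of $\Theta(d^{3})$ states and to spend $O(nd^{3})$ gates on each group, for $O(d^{n-3})\cdot O(nd^{3})=O(nd^{n})$ in total. First I would fix a splitting of the $n$ coordinates into the last three (the \emph{work} coordinates) and the first $n-3$ (the \emph{address} coordinates), so that the state space is a $d^{n-3}\times d^{3}$ array of sub-cubes. Writing $f$ as a product of at most $d^{n}$ transpositions and grouping consecutive ones into blocks of $d^{3}$, I get $f=h_{m}\circ\cdots\circ h_{1}$ with $m=O(d^{n-3})$, where each $h_{j}$ is a permutation whose support $S_{j}\subseteq\mathbb{Z}_{d}^{n}$ has size $O(d^{3})$. (Alternatively one can start from the $\pi_{3}\sigma\pi_{1}$ decomposition of Theorem~\ref{thm:thm_rcr} with respect to this array, reducing $f$ to address-preserving and work-preserving pieces, which are block-diagonal and are then cut into $d^{3}$-sized blocks in the same way.) It therefore suffices to implement one such $h=h_{j}$ in $O(nd^{3})$ two-qudit gates using a single clean ancilla.

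To implement $h$ with support $S$, $|S|=O(d^{3})$, the plan has three stages: \emph{route} $S$ into a bounded region, apply the local permutation there, and undo the routing. Concretely I would choose a pair of adjacent sub-cubes $B$ (together $2d^{3}\ge|S|$ states, with addresses differing in one coordinate) and a permutation $\rho$ taking $S$ bijectively into $B$; then $\rho h\rho^{-1}$ is supported on $B$, i.e.\ it is essentially a reversible function on four qudits, realized via Theorem~\ref{thm:cor:cube_even}/Theorem~\ref{thm:cor:plane_even} at constant width together with Lemma~\ref{thm:lem01strong}, after a parity correction moving it into the alternating group. To confine that local permutation to $B$ without disturbing the other $d^{n}-2d^{3}$ states, the remaining $n-4$ address qudits are attached as a control, and each confined two-qudit step is expanded into $O(1)$ multi-controlled transpositions handled by Lemma~\ref{thm:lem:cnx} (the clean ancilla playing the role of its dirty ancilla), at $O(nd^{3})$ total per block. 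The routing $\rho$ itself is built from $O(n)$ ``coordinate passes'': each pass adjusts one coordinate of all elements currently being carried, and is performed not state by state but as a permutation supported on $O(d^{3})$ four-dit strings (the work triple plus the coordinate being touched), so by the $dX^{(4)}$ gadget of Lemma~\ref{thm:lem:dxm} it costs $O(d^{3})$ two-qudit gates; over $O(n)$ passes this is $O(nd^{3})$, the ``$O(n)$ per element'' part.

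Accounting: each of the $O(d^{n-3})$ blocks costs $O(nd^{3})$, hence $f$ costs $O(nd^{n})$ with one clean ancilla, which matches the lower bound up to a constant in $d$ and up to a logarithm in $n$. The hard part will be the routing-and-confinement analysis, namely arranging the coordinate passes so that (i) every pass really is supported on only four qudits, keeping Lemma~\ref{thm:lem:dxm} at $O(1)$ per transposition; (ii) only $O(1)$ genuine multi-controlled gates are needed per block, so Lemma~\ref{thm:lem:cnx} (each call $O(nd^{3})$) is not invoked too often; and (iii) the passes used for block $h_{j}$ do not corrupt the work already done for $h_{1},\dots,h_{j-1}$, which forces $\rho$ to be chosen so its effect outside $S_{j}$ is either trivial or undone in-place. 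Secondary items to check are the even/odd parity bookkeeping required to apply Theorem~\ref{thm:cor:cube_even}, the cost of the initial Theorem~\ref{thm:thm_rcr} step, and the degenerate regimes of small $n$ and $d=2$.
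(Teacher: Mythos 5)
Your plan is essentially the paper's: select $O(d^{3})$ moved points per round, route them into a fixed sub\mbox{-}cube by $O(n)$ coordinate passes of $dX^{(4)}$ gadgets (Lemma~\ref{thm:lem:dxm}) at $O(d^{3})$ two-qudit gates per pass, act locally, undo the routing, and pay $O(nd^{3})$ per round over $O(d^{n-3})$ rounds. The routing, the amortized $O(n)$ per element, and the appeal to Lemma~\ref{thm:lem:cnx} all match the paper.

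The one genuine gap is the confinement step, which you flag as the hard part but do not resolve, and your stated cost for it is not achieved by the mechanism you describe. Expanding \emph{each} confined two-qudit step of the local permutation on $B$ into multi-controlled transpositions via Lemma~\ref{thm:lem:cnx} costs $O(nd^{3})$ per step; since a generic permutation of a region of size $\Theta(d^{3})$ needs $\Omega(d^{3})$ steps, this stage runs to $O(nd^{6})$ per block and $O(nd^{n+3})$ overall --- exactly the Zi et al.\ bound the theorem is meant to beat, so your ``(ii)'' really does fail as written. The paper's fix has two parts. First, the permutation applied per round is not an arbitrary block of a transposition decomposition of $f$ but a product of an even number of \emph{disjoint} transpositions $\prod_{j}(p_{j}\ f(p_{j}))$; composing this with $f$ still shrinks the support by $\Omega(d^{3})$, and after routing the local operation is just $O(d^{3})$ \emph{pairs} of transpositions, so Theorem~\ref{thm:cor:cube_even} and any parity correction are unnecessary. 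Second, confinement is paid for once per round, not once per local step: a single $C^{n-3}X_{01}$ writes the predicate ``first $n-3$ coordinates are zero'' into the clean ancilla, and each local pair of transpositions is then realized as a $dX^{(4)}$ gate on the three work qudits \emph{plus the ancilla} (the transposed strings carry ancilla value $1$), costing $O(1)$ two-qudit gates by Lemma~\ref{thm:lem:dxm}. Only the two $C^{n-3}X_{01}$ gates per round invoke Lemma~\ref{thm:lem:cnx}, keeping the round at $O(nd^{3})$. A related point you should absorb into the routing as well: single transpositions of basis strings are exactly the objects that cannot be implemented cheaply here, so both the coordinate passes and the local swaps must be organized in pairs throughout (hence the paper's insistence that $|\mathcal{P}|/2$ be even), not merely the final ``parity correction into the alternating group'' you mention.
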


    \begin{proof}
        As shown by Algorithm~\ref{alg:alg01}.
        In a round, if a quantum state is not affected by $X_{(\tilde{p}_{j}, 1), (\widetilde{f(p_{j})}, 1)} X_{(\tilde{p}_{j'}, 1), (\widetilde{f(p_{j'})}, 1)}$, the effect of other quantum gates will be canceled.
        If $X_{(\tilde{p}_{j}, 1), (\widetilde{f(p_{j})}, 1)} X_{(\tilde{p}_{j'}, 1), (\widetilde{f(p_{j'})}, 1)}$ takes effect, we claim that $p_{j}$ will be swapped with $f(p_{j})$, which reduces $|\{ x : f(x) \neq x \}|$ by at least one as
            \begin{align*}
                & \phantom{{}={}} \left( p_{j} \ f(p_{j}) \ f(f(p_{j})) \ \dots \ f^{-1}(p_{j}) \right) \left( p_{j} \ f(p_{j}) \right) \\
                &= \left( p_{j} \ f(f(p_{j})) \ \dots \ f^{-1}(p_{j}) \right) .
            \end{align*}

        In each round, we construct $\mathcal{P}$ as follows.
        Initially, $\mathcal{P} = \varnothing$.
        Find a $p_{j} \notin \mathcal{P}$ such that $f(p_{j}) \neq p_{j}$ and $f(p_{j}) \notin \mathcal{P}$.
        Add $p_{j}$ and $f(p_{j})$ to $\mathcal{P}$.
        Repeat the process until there is no valid $p_{j}$ or $|\mathcal{P}| \geq d^{3} - 1$.
        If $|\mathcal{P}| / 2$ is odd, remove the last pair of $p_{j}$ and $f(p_{j})$.

        Then, we make the first $(n - 3)$ coordinates of points in $\mathcal{P}$ to be $0$ with $(n - 3)$ transformations.
        Each of these transformation acts on $4$ qudits.
        We find two points $q_{1}, q_{2}$ in $\mathcal{P}$ with first coordinate non-zero.
        If there is only one valid point, let $q_{2}$ be arbitrary point with first coordinate non-zero.
        We find two points $\tilde{q}_{1}, \tilde{q}_{2}$ with zero first coordinate that are not occupied by $\mathcal{P}$.
        The existence of $\tilde{q}_{1}$ and $\tilde{q}_{2}$ is guaranteed by $|\mathcal{P}| \leq d^{3}$.
        Swap $q_{1}$ with $\tilde{q}_{1}$ and $q_{2}$ with $\tilde{q}_{2}$.
        After $(n - 3)$ transformation, $p_{j}$ is transformed into $\tilde{p}_{j}$ with first $(n - 3)$ coordinates zero.
        Denote transformed $f(p_{j})$ as $\widetilde{f(p_{j})}$.

        Next, a $C^{n-3} X_{01}$ gate is applied. Now, the ancilla qudit is not $\ket{0}$ if and only if the first $(n - 3)$ coordinates are zero.
        We pair $\tilde{p}_{j}$ with $\tilde{p}_{j'}$ and apply $X_{(\tilde{p}_{j}, 1), (\widetilde{f(p_{j})}, 1)} X_{(\tilde{p}_{j'}, 1), (\widetilde{f(p_{j'})}, 1)}$. (Here, we abuse the notation. The first $(n - 3)$ coordinates of $\tilde{p}_{j}$ or $\widetilde{f_{p_{j}}}$ are ignored.)
        We actually swap $\tilde{p}_{j}$ with $\widetilde{f_{p_{j}}}$ and $\tilde{p}_{j'}$ with $\widetilde{f_{p_{j'}}}$ as $(\tilde{p}_{j,n-2}, \tilde{p}_{j,n-1}, \tilde{p}_{j,n}, 1)$ represents $(0, \dots, 0, \tilde{p}_{j,n-2}, \tilde{p}_{j,n-1}, \tilde{p}_{j,n}) = \tilde{p}_{j}$.
        We apply a $C^{n-3} X_{01}$ gate to restore the ancilla qudit.
        Inverting the $(n - 3)$ transformations, we obtain $p_{j} \mapsto \tilde{p}_{j} \mapsto \widetilde{f(p_{j})} \mapsto f(p_{j})$.
        Finally, $|\{ x : f(x) \neq x \}| \leq 3$. This can be done by at most 2 2-cycles and hence at most 2 $X^{(n)}$ gates.

        $O(n d^3)$ 2-qudit gates are needed in each round and there are at most $O(d^{n - 3})$ rounds.
        The final $X^{(n)}$ gate(s) can be implemented using the decomposition in \figurename~\ref{fig:decomp_CnX} when there exists $i \neq j$ such that $y_{i} \neq \tilde{y}_{i}$ and $y_{j} \neq \tilde{y}_{j}$.
        Two $C^{n - 1}X$ gates and $(n - 1)$ $CX$ gates need $O(n d^3)$ 2-qudit gates.
        Thus, $O(n d^n)$ gates are enough.
    \end{proof}

    \begin{algorithm}
        \caption{Synthesis of $n$-dit reversible function}
        \begin{algorithmic}[1]
            \Require{Reversible function $f : \mathbb{Z}_{d^{n}} \to \mathbb{Z}_{d^{n}}$}
            \Ensure{Qudit circuit for $f$}
            \While{$|\{ x : f(x) \neq x \}| \geq 4$}
            \State Select $O(d^3)$ points $p_j$ and corresponding $f(p_j)$. Denote the point set as $\mathcal{P}$.
            \For{$i \gets 1 \dots n - 3$} \label{alg:alg01:transform1:begin}
            \State Apply $O(d^3)$ $dX^{(4)}$ gates to qudit $i, i+1, i+2, i+3$, transforming the $i$-th coordinate of $\mathcal{P}$ to 0.
            \EndFor
            \State Denote transformed $p_{j}$ as $\tilde{p}_{j}$, transformed $f(p_{j})$ as $\widetilde{f(p_{j})}$
            \State Apply $C^{n-3} X_{01}$ with control $1, 2, \cdots, n - 3$ and target $n + 1$.
            \State Apply $O(d^3)$ $X_{(\tilde{p}_{j}, 1), (\widetilde{f(p_{j})}, 1)} X_{(\tilde{p}_{j'}, 1), (\widetilde{f(p_{j'})}, 1)}$ gates to qudit $n - 2, n - 1, n, n + 1$.
            \State Invert the $C^{n-3} X_{01}$ and ($n - 3$) transformations.
            \State $f \gets f \circ \left( \prod_{j} (p_{j} \ f(p_{j})) \right)$
            \EndWhile
            \While{$|\{ x : f(x) \neq x \}| \neq 0$}
            \State Apply $X^{(n)}$ gate.
            \EndWhile
        \end{algorithmic}
        \label{alg:alg01}
    \end{algorithm}

    \subsection{Lower bound}

    \begin{theorem}\label{thm:lowerbound}
        Let $L$ be any universal gate set consisting of 1-qudit and 2-qudit gates that can implement every permutation $\pi \in S_{d^n}$. If $|L| = \mathsf{poly}(n, d)$, then there exists $\pi_{0} \in S_{d^n}$ that requires $\Omega \left( n d^n \frac{\log d}{\log d + \log n} \right)$ gates in $L$ even with $\mathsf{poly}(n)$ ancilla.
    \end{theorem}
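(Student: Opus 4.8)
The plan is to run a Shannon-style counting argument. First I would fix the prescribed polynomial $a(n) = \mathsf{poly}(n)$ bounding the number of ancillas and set $m = n + a(n) = \mathsf{poly}(n)$ to be the total qudit count; working with \emph{exactly} $a(n)$ ancillas is without loss of generality, since a permutation that is hard to realize with $a(n)$ ancillas is a fortiori hard with fewer (unused ancillas may simply be ignored), and a circuit using $j < a(n)$ ancillas and $s$ gates is also a circuit using $a(n)$ ancillas and $s$ gates. The core estimate is that a circuit of $s$ gates over $L$ on $m$ qudits is determined by a length-$s$ sequence of choices, each choice being a gate type ($\le |L|$ options) together with the ordered tuple of the (at most two) qudits it acts on (at most $m + m(m-1) = m^2$ options); hence there are at most $(|L|\,m^2)^s$ circuits of exactly $s$ gates and at most $\sum_{i=0}^{s}(|L|\,m^2)^i \le 2\,(|L|\,m^2)^s$ of size at most $s$ (assuming $|L|\,m^2 \ge 2$).

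Next I would note that every circuit $C$ computes one fixed unitary on the $m$-qudit space, so there is at most one $\pi \in S_{d^n}$ with $C\ket{x}\ket{0^{a(n)}} = \ket{\pi(x)}\ket{0^{a(n)}}$ for all $x$; thus ``circuit $\mapsto$ implemented permutation'' is a well-defined partial map, and the number of permutations of $\mathbb{Z}_{d^n}$ realizable by $\le s$ gates with $a(n)$ ancillas is at most $2\,(|L|\,m^2)^s$. If $L$ could implement every $\pi \in S_{d^n}$ within $s$ gates, this would force $(d^n)! \le 2\,(|L|\,m^2)^s$.

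Then I would take logarithms and plug in crude estimates. Using $N! > (N/e)^N$ with $N = d^n$ gives $\ln\big((d^n)!\big) > n d^n \ln d - d^n$, while $|L| = \mathsf{poly}(n,d)$ and $m = \mathsf{poly}(n)$ give $\ln(|L|\,m^2) \le C(\ln n + \ln d)$ for an absolute constant $C$ and all sufficiently large $n,d$ (here $d \ge 2$ keeps $\ln d$ bounded away from $0$). Hence
\begin{equation*}
    s \;\ge\; \frac{\ln\big((d^n)!\big) - \ln 2}{\ln(|L|\,m^2)} \;\ge\; \frac{n d^n \ln d - d^n - \ln 2}{C(\ln n + \ln d)} \;=\; \Omega\!\left(n d^n \frac{\log d}{\log d + \log n}\right),
\end{equation*}
so, contrapositively, there must exist $\pi_0 \in S_{d^n}$ outside the set of permutations implementable within that many gates, which is exactly the claimed bound.

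The hard part is not any calculation but making the hypotheses do their job cleanly: one must argue that restricting to the prescribed ancilla polynomial is harmless, and that the notion of ``implement'' is strict enough (clean ancillas, exact equality) that the circuit-to-permutation map is single valued --- this is what turns ``count the circuits'' into a genuine upper bound on ``count the implementable permutations''. One should also be explicit that the $\mathsf{poly}$ assumptions on $|L|$ and on $m$ are precisely what makes $\ln(|L|\,m^2)$ only $O(\log n + \log d)$ rather than polynomial in $n$; this logarithmic denominator is the origin of the $\tfrac{\log d}{\log d + \log n}$ factor and hence of the remaining gap from the $O(n d^n)$ upper bound when $d = o(n)$.
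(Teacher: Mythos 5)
Your proposal is correct and follows essentially the same counting argument as the paper: bound the number of distinct circuits of size at most $s$ by $\bigl(\mathsf{poly}(n,d)\bigr)^{s}$ using the polynomial bounds on $|L|$ and on the total qudit count, compare against $(d^n)!$, and take logarithms to extract the $\Omega\left( n d^n \frac{\log d}{\log d + \log n} \right)$ bound. Your added care about the circuit-to-permutation map being single-valued and about reducing to exactly $\mathsf{poly}(n)$ ancillas is a reasonable tightening of details the paper leaves implicit, but the substance of the argument is identical.
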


    \begin{proof}
        The proof relies on a counting argument. Suppose $|L| = p_1(n, d)$ and $p_2(n)$ ancilla are available, where $p_1(\cdot,\cdot)$ and $p_2(\cdot)$ are polynomials functions. A 1-qudit gate $G \in L$ can be put on $n + p_2(n)$ different qudits and make $n + p_2(n)$ different circuits of only 1 gate. Similarly, 2-qudit gate $G' \in L$ can choose $2$ target qubits and make at most $(n + p_2(n))(n + p_2(n) - 1)$ different circuits of only 1 gate.
        There are at most $(n + p_2(n))(n + p_2(n) - 1) |L| + 1$ different circuits of at most 1 gate.
        Therefore, there are at most $\left( (n + p_2(n))(n + p_2(n) - 1) |L| + 1 \right)^{N}$ different circuits of at most $N$ gates.
        Let $p_3(n, d) = (n + p_2(n))(n + p_2(n) - 1) p_1(n, d) + 1$.
        $p_3$ is still a polynimial of $n$ and $d$.
        If $N$ gates suffice to implement any $\pi \in S_{d^n}$, then $\left( p_3(n, d) \right)^{N} \geq |S_{d^n}| = (d^n)!$.
        This is equivalent to $N \geq \frac{\log((d^n)!)}{\log(p_3(n, d))} = \frac{\Omega( n d^n \log d )}{\log(p_3(n, d))} = \Omega \left( n d^n \frac{\log d}{\log d + \log n} \right)$.
        As a result, there exists $\pi_{0} \in S_{d^n}$ that requires $\Omega \left( n d^n \frac{\log d}{\log d + \log n} \right)$ gates in $L$.
    \end{proof}

    A direct corollary of Theorem~\ref{thm:lowerbound} is that our Algorithm~\ref{alg:alg01} is asymptotically optimal when $d = \Omega(n)$.

    \section{Conclusion} \label{section:concl}

    In this paper, we investigate the decomposition of reversible functions.
    We proved that an arbitrary $n$-dit even reversible function
    can be implemented with $\Theta(d)$ reversible subcircuits operating on $(n - 1)$-dits.
    This is achieved by developing a canonical decomposition of alternating group $A_{d^{n}}$ with $\Theta(d)$ permutations where each permutation has the form $\pi_{-k}\otimes \mathsf{id}_k$, where $\pi_{-k}\in S_{d^{n-1}}$ is a permutation acting on all dimensions except the $k$-th one.
    Additionally, we established that $n$-dit reversible function oracle can be implemented with $O \left( n d^{n} \right)$ gates using only $1$ ancilla.

    The following two findings are noteworthy and may be of independent interest.
    First, \cite{sunGeneralizedShuffleexchangeProblem2022} proved that permutation in $S_{s \times t}$ can be decomposed into a constant number of permutations, where each of these permutations acts either on the rows or on the columns of the elements. In this paper, we strengthen that result by relaxing the allowed permutations to even ones (Theorem~\ref{thm:thm01}), which reveal another combinatorial nature of two-dimensional permutations.
    Second, we proved that the $n$-qudit operation $X_{\mathbf{a}, \mathbf{b}}$, whose effect is to exchange the amplitude of computational basis $\ket{\mathbf{a}}$ and $\ket{\mathbf{b}}$, can be implemented using amortized $O(n)$ gates and $1$ ancilla, which may be helpful in the circuit design of reversible oracles.

    Some natural questions arise from our result:
    \begin{itemize}
        \item Can $n$-dit even reversible function be implemented using only even reversible subcircuits on $(n - 1)$ qudits?
        \item Can $n$-dit reverisible function be implemented using reversible subcircuits on $(n - 1)$ qudits when $d$ is odd?
        \item To synthesize $n$-dit reversible function with 2-qudit gates, our result is asymptotically optimal when $d = \Omega(n)$ and \cite{wuAsymptoticallyOptimalSynthesis2024} is asymptotically optimal when $d = o(\log n)$. What is the case when $d = o(n)$ and $d = \Omega(\log n)$?
        \item Develop more applications of the combinatorial decomposition results of the symmetric group.
    \end{itemize}


    \appendices

    \section{A toy example for Algorithm~\ref{alg:alg01}}

    \begin{figure}
        \centering
        \begin{tikzpicture}[scale=0.8]
            \begin{yquant}[register/separation=3mm]
                qubit {$\ket{\reg_{\idx}}$} q[4];
                qubit {$\ket{0}$} a;

                box {$X_{1007, 0001}$ \\ $X_{1042, 0002}$} (q);
                controlbox {$0$} q[0];
                box {$X_{0, 1}$} a | q[0];
                controlbox {1} a;
                box {$X_{007, 001}$ \\ $X_{042, 002}$} (q[1-3]) | a;
                controlbox {$0$} q[0];
                box {$X_{0, 1}$} a | q[0];
                box {$X_{1007, 0001}$ \\ $X_{1042, 0002}$} (q);

                barrier (-);

                controlbox {$0$} q[1];
                controlbox {$4$} q[2];
                controlbox {$2$} q[3];
                box {$X_{0, 2}$} q[0] | q[1-3];
            \end{yquant}
        \end{tikzpicture}
        \caption{Quantum circuit for the toy example}
        \label{fig:alg1-example}
    \end{figure}

    In this section, we construct a toy example to better explaim Algorithm~\ref{alg:alg01}. Take $n = 4$, $d = 10$ and $f = (0007 ~ 1007) (0042 ~ 1042 ~ 2042)$.
    \begin{equation*}
        f(x) = \begin{cases}
            1007, & x = 0007, \\
            0007, & x = 1007, \\
            1042, & x = 0042, \\
            2042, & x = 1042, \\
            0042, & x = 2042, \\
            x, & \text{otherwise} .
        \end{cases} 
    \end{equation*}
    We first pick $p_{1} = 0007$, $f(p_{1}) = 1007$, $p_{2} = 0042$, $f(p_{2}) = 1042$ as $\mathcal{P}$.
    As shown by \figurename~\ref{fig:alg1-example}, we apply $X_{1007, 0001} X_{1042, 0002}$.
    $\mathcal{P}$ is transformed into $\{ 0007, 0001, 0042, 0002 \}$ with the first coordinate to be $0$.
    The next three gates swap $0007$ with $0001$ and $0042$ with $0002$ using the help of ancilla. ($X_{007, 001} X_{042, 002}$ is enabled only if the first $n - 3 = 1$ coordinate is $0$.)
    Then we apply the reverse of $X_{1007, 0001} X_{1042, 0002}$, which is itself.
    Now we update $f$ to $(0042 ~ 2042)$.
    As $f$ has only 2 fixed points, we apply $X_{0042, 2042}$ directly.

    $X_{1007, 0001} X_{1042, 0002}$ and $\ket{1} \mathhyphen (X_{007, 001} X_{042, 002})$ can be implemented with $O(1)$ 2-qudit gates according to Lemma~\ref{thm:lem:dxm}. And $\ket{042} \mathhyphen X_{0, 2}$ can be implemented with $O(n d^{3})$ 2-qudit gates according to Lemma~\ref{thm:lem:cnx}.

    \bibliographystyle{IEEEtran}

\end{document}